\newtheorem{observation}{\textbf{Observation}}[section]
\title{Fully dynamic maximal matching without $3$ length augmenting paths in O$(\sqrt{n})$ update time \footnote{The report is available at https://arxiv.org/pdf/1810.01073.pdf}}
\titlerunning{3/2 approximate MCM in O$(\sqrt{n})$ update time} 
\author[1]{Manas Jyoti Kashyop}
\author[1]{N.S. Narayanaswamy}
\affil{Department of Computer Science \& Engineering, \\ Indian Institute of Technology Madras,  
  Chennai, India \\
  \texttt{\{manasjk,swamy\}@cse.iitm.ac.in}}
\authorrunning{M. J. Kashyop and N. S. Narayanaswamy} 
\keywords{Dynamic graph algorithms,matching,maximal matching,augmenting path }
\begin{document}

\maketitle
\begin{abstract}
We present a randomized algorithm to maintain a maximal matching without 3 length augmenting paths in the fully dynamic setting.  Consequently, we maintain a $3/2$ approximate maximum cardinality matching. Our algorithm takes expected amortized $O(\sqrt{n})$ time where $n$ is the number of vertices in the graph when the update sequence is generated by an oblivious adversary. Over any sequence of $t$ edge insertions and deletions presented by an oblivious adversary, the total update time of our algorithm is $O(t\sqrt{n})$ in expectation and $O(t\sqrt{n} + n \log n)$ with high probability. To the best of our knowledge, our algorithm is the first one to maintain an approximate matching in which all augmenting paths are of length at least $5$ in $o(\sqrt{m})$ update time.
\end{abstract}

\section{Introduction}
Dynamic graph algorithms is a vibrant area of research.  An \textit{update} operation on the graph is an insertion or a deletion of an edge or a vertex. The goal of a dynamic graph algorithm is to efficiently modify the solution after an update operation. 
A dynamic graph algorithm is said to be \textit{fully dynamic} if the update operation includes both insertion or deletion of edges (or vertices).
Dynamic algorithms for maintaining a maximal matching in a graph $G$ are well-studied and still present many research challenges. 
In particular, the approximate maximum cardinality matching (approximate MCM) problem is a very interesting question in the dynamic graph model.  It is very well known from \cite{DBLP:journals/siamcomp/HopcroftK73}	that for each $k \geq 1$, a maximal matching that does not have augmenting paths of length upto $2k-1$ is of size at least $\frac{k}{k+1}$ fraction of the maximum cardinality matching.
This theorem provides a natural way of addressing this problem in the dynamic setting.  However,  there are two conditional lower bounds on the complexity of maintaining a maximal matching by eliminating augmenting paths of length upto 5.  Assuming 3-sum hardness, Kopelowitz et al.\cite{DBLP:journals/corr/KopelowitzPP14} show that any algorithm that maintains a matching in which all the augmenting paths of length at most $5$ are removed requires an update time of $\Omega(m^{1/3} -\zeta)$ for any fixed $\zeta > 0$. Secondly, assuming the Online Matrix Vector Multiplication conjecture, Henzinger et al. \cite{DBLP:conf/stoc/HenzingerKNS15} show that any algorithm which maintains a maximal matching in which all the augmenting paths of length at most $5$ are removed  
requires an update time of $\Omega(m^{1/2} - \zeta)$ for $\zeta > 0$.   Consequently,  our aim in this paper is to understand the complexity of maintaining a maximal matching after eliminating augmenting paths of length 3.    In Table~[\ref{tab:approximate-matching}] we tabulate current results and our result.  The results in fourth row due to Neiman and Solomon \cite{DBLP:NeimanS13} output a maximal matching without 3 length augmenting paths in $O(\sqrt{m})$ time.    Bernstein et al. \cite{DBLP:BernsteinS} gave the first algorithm which achieve better than $2$ approximation in $o(
\sqrt{m})$ update time. They maintain a $3/2+\epsilon$ approximate matching in amortized $O(m^{1/4}/{\epsilon}^{2.5})$ update time. However their work does not provide the guarantee to maintain a maximal matching $M$ without any 3 length augmenting paths. 
We ask whether it is possible to remove all the augmenting paths of length up to $3$ in $o(m^{1/2})$ update time? We answer this question affirmatively. To the best of our knowledge, our algorithm is the first one to maintain an approximate matching in which all augmenting paths are of length at least $5$ in $o(\sqrt{m})$ amortized update time.  Our two main results are regarding the expected total update time, which is in Theorem \ref{thm : expected-total-update-time}, and regarding the worst case total update time with high probability which is  Theorem \ref{thm : worstcase-total-update-time}.

\noindent
\begin{table}[htb]
	\centering
	\caption{A comparison of some of the previous results and our result for the dynamic approximate MCM problem}		
	\label{tab:approximate-matching}
	\begin{tabular}{|p{4cm}|c|c|c| } \hline
		Reference & Approximation & Update time & Bound \\ \hline \hline
		Baswana et al. \cite{DBLP:BGS} & 2 & $O(\log n)$ and $O(\log n + (n \log^2 n)/t)$ & expected and w.h.p
		\\ \hline
		Solomon \cite{DBLP:conf/focs/Solomon16} & 2 & $O(1)$ & expected and w.h.p
		\\ \hline
		Peleg and Solomon (uniformly sparse graph) \cite{DBLP:conf/soda/PelegSolomon16} & $3/2 + \epsilon$ & $O(\alpha / \epsilon)$ & deterministic
		\\ \hline
		Gupta and Peng \cite{DBLP:conf/focs/GuptaPeng} & $1+\epsilon$ & $O(\sqrt{m}/ \epsilon^2)$ & deterministic
		\\ \hline
		Neiman and Solomon \cite{DBLP:NeimanS13} & $3/2$ & $O(\sqrt{m})$ & deterministic
		\\ \hline
		Bernstein and Stein \cite{DBLP:BernsteinS} & $3/2 + \epsilon$ & $O(m^{1/4} / \epsilon^{2.5})$ & deterministic
		\\ \hline 
		\textbf{This Paper}[Theorem~\ref{thm : expected-total-update-time}] & \textbf{3/2} & \textbf{O($\sqrt{n}$)} & \textbf{expected}
		\\ \hline
		\textbf{This Paper}[Theorem~\ref{thm : worstcase-total-update-time}] & \textbf{3/2} & \textbf{O($\sqrt{n} + (n \log n)/t$)} & \textbf{w.h.p}
		\\ \hline
	\end{tabular}
\end{table}
{\bf Paper Outline:}
In Section \ref{subsec:overview_baswana_algo} we present an overview of how our algorithm is different from and how it extends the algorithm of Baswana et. al \cite{DBLP:BGS}.
In Section \ref{algorithm} we describe the invariants maintained by our algorithms and the procedures in our algorithm. In Section~\ref{sec: Algorithm_correctness}, we present the correctness of our algorithm.  In Section \ref{epochanalysis}, we present the analysis of the expected total update time and the worst case total update time. 
\section{Outline of our Algorithm}
\label{subsec:overview_baswana_algo}
To achieve our main result we extend the algorithm to maintain a maximal matching in the first part of the work by Baswana et al. \cite{DBLP:BGS} (hereafter referred to as BGS).  We start by a brief and quick presentation of the ideas in BGS.  BGS introduced the notion of edge ownership by a vertex.  The number of edges owned by a vertex is at most its degree, and a vertex always searches for a mate among the edges that it owns.  For every vertex $u$ this information is maintained as a ownership list $O_u$.  For an edge $(u,v)$, if the edge is not owned by $u$ then it is owned by $v$. Further, they  maintain the invariant that if an edge $(u,v)$ is not owned by $u$, then it means that $v$ is matched. Therefore, for $u$ to look for a mate, it is sufficient for it to search for a mate among the edges it owns. 
BGS has two very important operations done by a vertex : scan its ownership list, and transfer an edge in its ownership list to the other vertex in the edge. The transfer of ownership is designed to ensure that if a vertex has to remain unmatched due to an update, then its ownership list is small (this does help a fast search for a mate subsequently).   Clearly, these operations become expensive when the ownership list becomes \textit{large}. BGS addressed this in the presence of an oblivious adversary, that is  an adversary who does not generate the dynamic update requests based on the run-time behaviour of the algorithm.  They addressed this by selecting a random mate from the ownership list whenever the ownership list of a matched vertex becomes \textit{large}. Specifically,  they maintain a maximal matching in which a vertex whose ownership list is large is matched.  The algorithm is implemented by organizing the vertices into two levels, namely level 0 and level 1.  A vertex in level 0 has a small ownership list, and a vertex with a large ownership list is in level 1 and is matched.  To analyze the expected amortized cost of maintaining a maximal matching, BGS first observes that the only updates which incur a cost greater than a constant are those that  delete an edge in the matching or those that insert an edge between unmatched vertices.  Further they classify the cost into two parts: that incurred by updates which involve vertices of low ownership (at level 0), and the cost incurred at vertices (at level 1) of high ownership.   The cost incurred at a vertex in level 0 is small, since the ownership list is small.  To analyze the expected amortized cost at level 1 vertices, BGS observe that the probability that an update by an oblivious adversary presents a randomly selected edge by the algorithm is at most the reciprocal of the size of the ownership list (which is large at level 1 vertices).   This ensures that the expected amortized cost of maintaining a maximal matching over $t$ updates is $O(t \cdot \text{ small } + n \cdot \frac{t}{\text{large}})$.  By considering a ownership of size $\sqrt{n}$ to be large, it follows that BGS maintain a maximal matching with expected amortized cost $O(\sqrt{n})$.
\subsubsection*{Removing 3 length augmenting paths-Our contribution}
Conceptually, we extend BGS by deterministically checking at the end of each update whether a vertex affected by the update is part of 3 length augmenting path.  For this our algorithm has to scan the whole neighbourhood of a matched vertex for an unmatched neighbour.  Pessimistically, it seems that an oblivious adversary could force any algorithm to spend a high cost eliminating 3 length augmenting paths after finding mate with lower expected amortized cost.   We observe by a careful analysis of our algorithm that this is not the case for eliminating 3 length augmenting paths.  Central to our approach is the concept of an epoch in BGS.  An epoch is a maximal runtime interval during which an edge is in the matching.  As in BGS we associate with the beginning and ending of an epoch, the running time of the procedures that insert an edge into the matching and remove an edge from the matching.  Apart from the procedures in BGS, we have the additional procedure to remove 3 length augmenting paths.  Therefore, a single edge update by the oblivious adversary can trigger a sequence of changes to the matching, and apriori it is not  clear how to analyze the length of this sequence.  

Our first contribution is that we ensure that this sequence is of constant length.  In other words, any update only triggers a constant number of procedure calls, some to ensure that the matching is maximal, some to ensure that 3 length augmenting paths are eliminated, and some to ensure that unmatched vertices have a small ownership list and a small neighborhood list.  To ensure that each update triggers only a constant number of procedure calls, we observe that having each edge owned exactly by one end point is a very useful property.  This is one feature which to us very interesting and startkly different from BGS where an edge at level 0 is owned by both its vertices. Intuitively, what we gain here is that when a vertex comes back down to level 0, it does not have to enter the ownership list of its neighbours, whose ownerships list may then become too big for them to be in level 0 thus triggering a chain of data structure modifications that we do not know how to bound.

The second crucial work that we have done is to set up the framework to prove that our updates terminate and correctly maintain all the invariants.   This proof is based on the design of the procedures called in each update. As per the design a procedure {\em immediately} repairs an invariant when it is violated before continuing to the other statements in the procedure.  The repair is done either by executing necessary statements (at most two assignment statements) immediately after the violation or by making an appropriate procedure call such that on return the invariant is satisfied.  Another property of the design is that when a 3 length augmenting path is eliminated it does not create another 3 length augmenting path.  The whole analysis is intricate and we hope to find a much cleaner argument.  

The third crucial idea is that  if an oblivious adversary has to force a high cost during the search of 3 length augmenting path involving one matching edge then the adversary would succeed at this task with probability which is the reciprocal of the large degree or should expect to  make many low cost updates prior to the high cost update.  We ensure this by matching a vertex of high degree to a randomly chosen neighbour in its ownership list.  This is achieved by having a randomized courterpart for each deterministic method that changes the matching.  This idea is already present in BGS where Random-Settle is the randomized counterpart of Naive-Settle. We extend this by having a procedure to remove 3 length augmenting paths, a procedure to raise the level of a vertex, and both have their randomized counterparts.  

To complete the analysis, we crucially observe that the expensive procedure calls made during runtime can be classified into one of the following two types:
\begin{itemize}
 \item Those expensive procedure calls that are associated with many low cost udpates.  
 \item The remaining expensive calls can be grouped into sets of constant size such that each set has a procedure call that matches a vertex to a randomly chosen neighbour in its ownership list.
\end{itemize}
Finally, like BGS we classify vertices into level 0 and level 1, but with rules that include the vertex degree also, apart from the size of the ownership list. Unlike BGS who use a Free array to keep track of whether a vertex is matched, we maintain a free neighbour list for each vertex. At first sight this operation is indeed an expensive operation, that an unmatched vertex must be maintained in the free neighbour list of all its neighbours.  However, the expected amortized cost is still controlled to be $O(\sqrt{n})$ by our algorithm.  
\section{Fully Dynamic Algorithm} 
\label{algorithm}
Let $G(V,E)$ be an undirected graph. Vertex set $V$ does not change throughout the algorithm and  $|V|=n$. The edge set $E$ changes during the course of  the algorithm due to the insert and delete operations. We start with an empty graph. At every update step we allow an insertion or the deletion of exactly one edge. Given a matching $\mathcal{M}$, an edge $(u,v)$ is said to be matched if $(u,v) \in \mathcal{M}$, otherwise edge $(u,v)$ is said to be unmatched. A vertex $u$ is said to be matched if there exist an edge $(u,v)$ such that $(u,v) \in \mathcal{M}$, otherwise $u$ is called free. If $(u,v) \in \mathcal{M}$ then $v$ is called \textit{mate} of $u$ and $u$ is called \textit{mate} of $v$.  For a vertex $u$ we write $mate(u)$ to denote the vertex to which $u$ is matched, and if $u$ is not matched we consider $mate(u)$ to be NULL.  Therefore, $u$ is free is $mate(u)$ is NULL.  Let $u-v-x-y$ be an augmenting path of length 3 where $u$ and $y$ are free vertices and the edge $(v,x)$ is matched.   
We call \textit{$u-v-x-y$ is a 3 length augmenting path involving $(v,x)$ as the matched edge}. Over a sequence of updates, we maintain a maximal matching $M$ of $G$ and ensure that the graph does not have an augmenting paths of length 3 with respect to $M$. Therefore, our matching $M$ is a $3/2$ approximate MCM.  
Similar to \cite{DBLP:BGS}, in our algorithm we partition the vertex set into two levels 0 and 1. $Level(v)$ denotes the level of a vertex $v$. For an edge $(u,v)$, $Level(u,v)$ = max$\{Level(u),Level(v)\}$. We also define \textit{ownership} for the edges with a slight difference from \cite{DBLP:BGS}. At every level, exactly one endpoint will own that edge. If both the endpoints of an edge have level 0 then the endpoint owning higher number of edges will own that edge. If level of one endpoint is $1$ and the other endpoint is $0$, then the endpoint with level $1$ owns that edge. If both the endpoints have level 1 then ownership is assigned to anyone of the endpoints arbitrarily. For every vertex we maintain two sets : $O_u$  and $N(u)$. $O_u$  denotes the set of edges that $u$ owns. $N(u)$ denotes the neighbours of $u$. 
\subsection{Invariants}
\label{subsec : Invariatnt}
Before an update step and after completion of an update step,
our algorithm maintains the following invariants :
\begin{enumerate}
	\item 
	\begin{enumerate}
	\label{invariant_1a}
	\item For each vertex $u$, if $u$ is a level 1 vertex, then $u$ is matched.
	\item
	\label{invariant_1b} For each vertex $u$, if $u$ is a free vertex, then $u$ is a level 0 vertex and all its neighbours are matched.
	\end{enumerate}
	\item 
	\label{invariant 2}
	For each vertex $u$, if $u$ is a level 0 vertex, then $|O_u| < \sqrt{n}$.
	\item 
	\label{invariant 3}
	For each vertex $u$, if $u$ is a level 0 matched vertex, then $deg(u) < \sqrt{n}$. Here $deg(u)$ is defined as $deg(u) = |N(u)|$.  Note that $deg(u)$ is the degree of $u$.  
	\item 
	\label{invariant 4}
	For each vertex $u$, if $u$ is matched vertex, then $u$ and $mate(u)$ are at the same level.
	\item 
	\label{invariant 5}
	For each vertex $u$, $u$ is not a matched vertex or a free vertex in a 3 length augmenting path with respect to $M$.  In other words, the graph does not have a 3 length augmenting path with respect to $M$. 
\end{enumerate} 
A vertex that violates any one of the invariants is said to be in a {\em dirty} state. A vertex that is not dirty is said to be in a {\em clean} state.    
So invariant 1 and 5 together implies that $M$ is a $3/2$ approximate MCM.
\begin{observation}
	\label{observation : aug-path}
	From Invariant~\ref{invariant_1b}, if $u$ is a free vertex with level 0 then all neighbours of $u$ are matched. Suppose at some update step, $u$ is matched to
	$u^{\prime}$ and prior to this update step $u$ was free and all its neighbours were matched. Then this update step will not result in a 3 length augmenting path which involves $(u,u^{\prime})$ as the matched edge.
\end{observation}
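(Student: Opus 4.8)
The plan is a short proof by contradiction that uses only the definition of a $3$-length augmenting path together with the hypothesis that every neighbour of $u$ is matched just before $u$ is matched to $u'$. Suppose that after $u$ is matched to $u'$ the graph contains a $3$-length augmenting path involving $(u,u')$ as the matched edge. By the definition above, this path has the form $x-u-u'-y$, where $x$ and $y$ are free vertices, the edges $(x,u)$ and $(u',y)$ are unmatched, $x\neq u'$, and $y\neq u$. In particular, $u$ then has a neighbour, namely $x$, that is distinct from $u'$ and is free once the step completes.

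The crux is to show that $u$ can have no such free neighbour. By hypothesis (equivalently, by Invariant~\ref{invariant_1b}, as $u$ is a free level-$0$ vertex before the step) every neighbour of $u$ is matched before the step, and $N(u)$ can only grow during the step if the update inserts the edge $(u,u')$, in which case the new neighbour $u'$ is matched. Hence the only way $u$ could end up with a free neighbour is for a previously matched neighbour of $u$ to become free during the step, and the only vertex whose match is destroyed by the act of matching $u$ to $u'$ is the previous mate of $u'$ (if one exists). So it suffices to observe that this previous mate is not left unmatched when the operation completes: the procedures that match a free (hence level-$0$) vertex either match it to a currently free neighbour, displacing no one, or rematch along an augmenting path and immediately rematch the displaced vertex, in line with the invariant-repair discipline of Section~\ref{algorithm}. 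In either case every neighbour of $u$ is matched once the step completes, contradicting the existence of $x$; hence no $3$-length augmenting path involves $(u,u')$.

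I expect the only nontrivial point to be this last step: checking, uniformly over all the deterministic and randomized procedures that can match a previously free vertex, that such an operation leaves no neighbour of $u$ unmatched on completion. This is the sole place the argument uses the concrete design of the procedures rather than the definitions and invariants, and it should follow from the property, stated in Section~\ref{algorithm}, that each procedure immediately repairs any invariant it violates before continuing.
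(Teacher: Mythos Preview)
The paper gives no proof; the statement is recorded as an observation and treated as immediate. The intended one-line justification is simply: a 3-length augmenting path with $(u,u')$ as the matched edge has the form $x-u-u'-y$ with $x$ free and $x\neq u'$, but every neighbour of $u$ is matched, so no such $x$ exists. Your first paragraph already captures exactly this.

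Where you diverge from the paper is in your second and third paragraphs. You worry that the act of matching $u$ to $u'$ might free a previously matched neighbour of $u$ (the former mate of $u'$), and you propose to rule this out by surveying every procedure and arguing that any displaced vertex is immediately rematched. That is more than the observation is meant to carry. In the paper the observation is a purely local, essentially definitional remark; the burden of checking that its hypothesis ``all neighbours of $u$ are matched'' actually holds at the moment it is invoked is placed on the call sites, not on the observation. For instance, inside \textit{Fix-3-Aug-Path}$(u,v,y,z)$ and \textit{Fix-3-Aug-Path-D}$(u,v,y,z)$ the edges $(v,y)$, $(u,v)$, $(y,z)$ are swapped atomically, so no vertex is left free and the observation applies directly to $u$ and to $z$; Lemma~\ref{lem:safez} then separately establishes that $z$ also has no free neighbour at that moment. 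By folding that call-site verification into the observation itself you convert a triviality into a claim that depends on the full procedural design, which is both unnecessary here and, as your final paragraph concedes, not something you actually carry out.
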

\subsection{Data Structures}
\label{sec:ds}
In this section we  describe the data structures used by our algorithm. Our data structures are similar to the data structures  used in \cite{DBLP:BGS} and \cite{DBLP:NeimanS13}.
\begin{itemize}
	\item Matching $M$ is maintained in a matrix of size $n \times n$. If edge $(u,v)$ is in the matching then entries $[u,v]$ and $[v,u]$ in the matrix are set to 1. Therefore, insertion into the matching and deletion from the matching takes $O(1)$ time.  Further, we maintain an array called $mate$ indexed by vertices where for each $u$, $mate(u)$ is either the vertex that $u$ is matched to, or if $u$ is unmatched it is NULL.  Therefore, whenever the matching is updated, $mate$ is also updated in constant time. Further, whether $u$ is free is checked in constant time by a query to $mate(u)$.  Therefore, in our procedure descriptions we check if $u$ is free without explicitly probing $mate(u)$ and describe the inserts and deletes into a matching by using set notation.
	\item For every vertex $v$, the set of neighbours of $v$, $N(v)$ is maintained as a dynamic hash table. Hence search, delete and insert can be performed in $O(1)$ time. Also the count $deg(v)$ = $|N(v)|$, which is the degree of $v$, is maintained for every vertex $v$. 
	\item For every vertex $v$ the set $O_v$ is stored as a dynamic hash table. Hence search, delete and insert can be performed in $O(1)$ time. 
	\item For every vertex $v$ a data structure $F(v)$ is maintained that stores the free neighbours of $v$. This data structure supports insert and delete in $O(1)$ time. In addition it supports the method \textbf{has-free}$(v)$ which returns TRUE if $v$ has a free neighbour in $O(1)$ time, and the method \textbf{get-free}$(v)$ which returns an arbitrary free neighbour of $v$ in $O(\sqrt{n})$ time. The data structure $F(v)$ is implemented as follows : for every vertex $v \in V$, a boolean array of size $n$ is maintained which indicates the free neighbours of $v$,  a counter array of size $\sqrt{n}$ in which the $j$-th element stores the number of free neighbours in the range $[\sqrt{n}.j+1, \sqrt{n}(j+1)]$, and a variable for total number of free neighbours. So insert,delete and \textbf{has-free}$(v)$ are clearly $O(1)$ operations. \textbf{get-free}$(v)$ requires a scan in the counter array to find a nonzero entry and then a scan in the boolean array in the appropriate range to find the free neighbour resulting in a $O(\sqrt{n})$ operation.   
\end{itemize}   
The following inline macros play a crucial role in a succinct presentation of the procedures.   Note that the inline macros are to be substituted by the associated code and should not be treated as function calls at run time.  \\
\textbf{Check-3-Aug-Path}$(u,v)$ : Here $u$ is a at level $0$ and $(u,v)$ is an edge where $v$ is a matched vertex. Let $y$ be the mate of $v$. If $y$ has a free neighbour $z$ ($z \neq u$) then $u-v-y-z$ is a $3$ length augmenting path, and if $y$ does not have a free neighbour then $z$ is considered to be NULL. The time required for insertion and deletion in $F(y)$ is $O(1)$ and \textit{get-free}$(y)$ takes $O(\sqrt{n})$ time. Therefore, the worst case time taken by this macro is $O(\sqrt{n})$.  Whenever this macro is executed, if a non NULL $z$ is returned, it will also hold that $z$ does not have a free neighbour (proved in Lemma \ref{lem:safez}).  Consequently, from Observation \ref{observation : aug-path}, it follows that after modifying the matching to remove the 3 length augmenting path, $u$ and $z$ will not be part of another 3 length augmenting path.  \\
\textbf{Transfer-Ownership-From}$(u)$ : This macro is executed when $u$ changes its level from $1$ to $0$. For every edge $(u,w) \in O_u$, if  the level of $w$ is $1$, then $(u,w)$ is deleted from $O_u$ and inserted to $O_w$. This maintains the invariant that an edge is always owned by the endpoint at higher level.  Each deletion from $O_u$ and insertion into $O_w$ takes $O(1)$ time. Therefore, the total time taken by this macro is $O(deg(u))$.\\
\textbf{Transfer-Ownership-To}$(u)$ : This macro is executed  when vertex $u$ changes its level from $0$ to $1$. For every edge $(u,w) \in O_w$, if the level of $w$ is $0$, then $(u,w)$ is deleted from $O_w$ and inserted to $O_u$. This maintains the invariant that an edge is always owned by the endpoint at higher level. Each deletion from $O_w$ and insertion  into $O_u$ takes $O(1)$ time. Therefore, the total time taken by this macro is $O(deg(u))$.\\
\textbf{Take-Ownership}$(u)$ : This macro is executed for a matched vertex $u$ at level $0$ whenever $deg(u) \geq \sqrt{n}$. For every edge $(u,w) \in N(u)$, if $(u,w) \in O_w$, then $(u,w)$ is deleted from $O_w$ and inserted to $O_u$. After this transfer, $|O_u| \geq \sqrt{n}$. As the name suggests, $u$ has taken ownership of all the edges incident on it.  Each deletion  from $O_w$ and insertion into $O_u$ takes $O(1)$ time. Therefore, the total time taken by this macro is $O(deg(u))$.\\
\textbf{Insert-To-F-List}$(u)$ : This macro inserts $u$ into the free neighbour list of all its neighbours. For every $(u,w) \in N(u)$, $u$ is inserted in $F(w)$ in $O(1)$ time. Therefore, the total time taken by this macro is $O(deg(u))$.\\
\textbf{Delete-From-F-List}$(u)$ :  This macro deletes $u$ from the free neighbour list of all its neighbours. For every $(u,w) \in N(u)$, $u$ is deleted from $F(w)$ in $O(1)$ time. Therefore, the total time taken by this macro is $O(deg(u))$.
\subsection{Description of the Procedures}
\label{subsec : Description_Procedures}
The detailed description of the procedures along with the proofs of correctness and run-time analysis is also presented in this section.  We give a summarized description of these procedures in Table~\ref{table : Procedure_Description}.  Since these procedures call each other in  Table~\ref{table : Procedure_Calls} we summarize the  calling procedures and the conditions at the time of call.  In this table we also point out the invariants that are satisfied at the end of each procedure call.  The pseudocode for each of the procedures is presented in Section~\ref{sec : Appendix}.

The main update functions are insert and delete which are in Section \ref{subsec : insertion} and Section \ref{subsec : deletion}.   The update functions call different procedures to ensure that all the invariants are satisfied at the end of the update.  These procedures, {\em first} on entry into the body of the procedure, manipulate the data structures described in Section \ref{sec:ds} to address a violated invariant, and on each manipulation check for a violated invariant. if a violated invariant is found, an appropriate procedure to fix it is called.  The manipulation of the data structures and the check for violated invariants are done by statements in the procedure or the inline macros described in Section \ref{sec:ds}.  

The names of the procedures called by the updates are also chosen in a very suggestive way, and also as an extension of the names in \cite{DBLP:BGS}: A {\em settle} function finds the mate of a vertex, a {\em Raise} procedures changes the level of a vertex from level 0 to level 1, a {\em Fix} procedures removes 3 length augmenting paths, and the {\em Handle} procedure deals with the delete of an edge in the matching at level 1.  The {\em Settle}, {\em Raise}, and {\em Fix} procedures have a randomized version and a deterministic version, and the appropriate version is called during an update based on the value of the the boolean variable {\em flag}.  At the beginning of an update {\em flag} is initialized to 0.  It is set to 1 once a level 1 vertex is assigned a random mate during the update. Once {\em flag} is 1, the deterministic variants of {\em Raise}, {\em Fix}, and {\em Settle} are called.\\
{\bf Generic Outline of the Procedures:}  
Each procedure described below is a specification of a set of actions to be taken for corresponding a set of cases.  Each procedure ensures that the {\em violated} invariant is {\em satisfied} on entry into the procedure by suitably modifying a combination of the following 3 data items associated with a vertex: its mate, its ownership, its level number.  Then based on whether a new invariant is violated, it executes at most one case and associated sub-cases. Finally,  where appropriate newly created 3 length augmenting paths are fixed before returning to the calling procedure.  As will be clear, each case is handled by a sequence of procedure calls.   Finally, each procedure executes necessary macros to update the ownership of a vertex, to identify a 3 length augmenting path, if any, and to update free neighbour lists.  

A crucial point is that the procedures are designed in such a way that no procedures calls are made in iterations that involve the whole neighbourhood of a vertex.  Any iterative work is performed inside an appropriate macro.  The reason is that during our analysis we associate we each procedure call a modification to the matching, and associate the total time of the procedure execution to the modification that happens to the matching.  
\begin{table}[htbp!]
	\centering
	\begin{tabular}{|m{3cm}|m{9cm}|m{2.5cm}|}
		\hline
		\textbf{Procedure} & \centering \textbf{Description} & \textbf{Computation time} \\
		\hline
		\multirow{4}{3cm}{Naive-Settle-Augmented(u,flag)}
		& Case $1 :$ If $w$ is in $F(u)$ and $deg(u) \geq \sqrt{n}$. If $flag$ is $1$ then Deterministic-Raise-Level-To-1$(x)$ is called. Else Randomised-Raise-Level-To-1$(x)$ is called. & $O(n)$\\
		\cline{2-3}
		& Case $2 :$ If $w$ is in $F(u)$ , $deg(u) < \sqrt{n}$ and $deg(w) \geq \sqrt{n}$. If $flag$ is $1$ then Deterministic-Raise-Level-To-1$(w)$ is called. Else Randomised-Raise-Level-To-1$(w)$ is called. & $O(n)$\\
		\cline{2-3}
		& Case $3 :$ If $w$ is in $F(u)$, $deg(u) < \sqrt{n}$ and $deg(w) < \sqrt{n}$. Edge $(u,w)$ is included in matching & $O(\sqrt{n})$\\
		\cline{2-3}
		& Case $4 :$ If the list $F(u)$ is empty, then to check for 3 length augmenting paths, procedure Fix-3-Aug-Path is called if $flag$ is $0$ and procedure Fix-3-Aug-Path-D is called otherwise. & $O(n)$\\
		\hline
		\multirow{2}{3cm}{Random-Settle-Augmented(u)}
		& Case $1 :$ Selects an edge, say $(u,y)$, uniformly at random from $O_u$. If $y$ was matched then it returns the previous mate of $y$. 
		If $u$ has no free neighbour then procedure stops. & $O(deg(u) + deg(y))$\\
		\cline{2-3}
		& Case $2 :$ Selects an edge, say $(u,y)$, uniformly at random from $O_u$. If $y$ was matched then it returns the previous mate of $y$.  If $w$ is in $F(u)$ then procedure Fix-3-Aug-Path-D$(w,u)$ is called.& $O(n)$	\\	
		\hline
		Deterministic-Raise-Level-To-1$(u)$
		& 
		Vertex $u$ is matched, $deg(u) \geq \sqrt{n}$ and level of $u$ is $0$. Let $v$ is mate of $u$. Procedures Take-Ownership$(u)$ and Transfer-Ownership-To$(v)$ are called. Levels of $u$ and $v$ are changed to $1$.& $O(deg(u) + deg(v))$\\
		\hline
		Randomised-Raise-Level-To-1$(u)$ 
		& Vertex $u$ is matched, $deg(u) \geq \sqrt{n}$ and level of $u$ is $0$. Let $v$ is mate of $u$. Edge $(u,v)$ is removed from the matching. Procedure Take-Ownership$(u)$ is called.  Procedure Random-Settle-Augmented$(u)$ is called. If it returns a vertex $x$ and level of $x$ is $1$ then Handle-Delete-Level1(x,1) is called. Otherwise Naive-Settle-Augmented(x,1) is called. If $v$ is free then Naive-Settle-Augmented(v,1) is called. & $O(n)$\\
		\hline
		Fix-3-Aug-Path-D$(u,v,y,z)$ & $u$ is free, level of $u$ is $0$ and $v$ is matched. Let $y$ be the mate of $v$. If $z$ ($z \neq u$) is in $F(y)$, then edge $(v,y)$ is removed from matching and edges $(u,v)$ and $(y,z)$ are included in the matching. Levels of $u$, $z$ ($v$,$y$ if required) are changed to $1$ &  $O(deg(u)+deg(z))$ \\
		\hline
		\multirow{5}{3cm}{{Fix-3-Aug-Path$(u,v,y,z)$:} \emph{Initial Condition:}
		 $u$ is free, level of $u$ is $0$, $v$ is matched and level of $v$ is $1$ or $0$. Let $y$ be the mate of $v$ and $z$ ($z \neq u$) is in $F(y)$.
		}
		& \emph{Common step :} $(v,y)$ is removed from matching and $(u,v)$ and $(y,z)$ are included in the matching. & \\
		\cline{2-3}
		& Case $1: $  If $deg(u) \geq \sqrt{n}$ and level of $v$ is $1$, then procedure Random-Raise-Level-To-1$(u)$ is called. level of $z$ is changed to $1$& $O(n)$ \\
		\cline{2-3}
		& Case $2: $ If $deg(u) < \sqrt{n}$, $deg(z) \geq \sqrt{n}$ and level of $v$ is $1$ then procedure Random-Raise-Level-To-1$(z)$ is called. level of $u$ is changed to $1$ & $O(n)$ \\
		\cline{2-3}
		& Case $3: $ If $deg(u) < \sqrt{n}$, $deg(z) < \sqrt{n}$ and level of $v$ is $1$ then, levels of $u$ and $z$ are changed to $1$ & $O(\sqrt{n})$ \\
		\cline{2-3}
		& Case $4:$ Level of $v$ is $0$. If $deg(u) \geq \sqrt{n}$ then procedure Random-Raise-Level-To-1$(u)$ is called. If $deg(z) \geq \sqrt{n}$ then procedure Random-Raise-Level-To-1$(z)$ is called. & $O(n)$ \\
		\cline{2-3}
		& Case $5: $ Level of $v$ is $0$. If $deg(u) < \sqrt{n}$, $deg(z) < \sqrt{n}$ then, no further processing required. & $O(\sqrt{n})$ \\
		\hline
		\multirow{3}{3cm}{Handle-Delete-Level1$(u,flag)$} & \emph{Common step :} $u$ is free, level of $u$ is $1$. Procedure Transfer-Ownership-From$(u)$ is called. Then :
		& \\
		\cline{2-3}
		& Case $1: $ If $|O_u| \geq \sqrt{n}$, then procedure Random-Settle-Augmented$(u)$ is called. & $O(n)$ \\
		\cline{2-3}
		& Case $2: $ If $|O_u| < \sqrt{n}$, then Level of vertex $u$ is changed to $0$. Procedure Naive-Settle-Augmented$(u,flag)$ is called. & $O(n)$ \\
		\hline	
	\end{tabular}
	\caption{Procedure Description table. Note that the value of $flag$ distinguishes between a call to a randomised method ($flag = 0$) and the corresponding deterministic method ($flag = 1$)}
	\label{table : Procedure_Description}
\end{table}
\begin{table}[htbp!]
	\centering
	\begin{tabular}{|m{3cm}|m{4cm}|m{4cm}|m{2.5cm}|}
		\hline
		\textbf{Called Procedure} & \textbf{Calling Procedure} & \centering \textbf{Conditions at the time of call} & \textbf{Invariant satisfied by Called Procedure}\\
		\hline
		\multirow{3}{3cm}{Naive-Settle-Augmented(x,flag)}
		& Randomised-Raise-Level-To-1$(u)$: $u$ is randomly matched to $u^{\prime}$ and $x$ = $mate(u^{\prime})$ & x is free , Level of x is $0$ and flag is $1$ & satisfies Invariant~\ref{invariant_1b} for $x$.\\
		\cline{2-4}
		& Randomised-Raise-Level-To-1$(y)$: $y$ is randomly matched to $y^{\prime}$ and $x$ is previous mate of $y$. & x is free , Level of x is $0$ and flag is $1$ & satisfies Invariant~\ref{invariant_1b} for $x$.\\
		\cline{2-4}
		& Handle-Delete-Level1$(x,flag)$ & $x$ is free, Level of $x$ is $0$ and flag is $0$ or $1$ & satisfies Invariant~\ref{invariant_1b} for $x$.\\
		\cline{2-4}
		& Handle-Delete-Level1$(u,flag)$ : $u$ is randomly matched to $u^{\prime}$ and $x$ = $mate(u^{\prime})$ & $x$ is free, Level of $x$ is $0$ and flag is $1$  & satisfies Invariant~\ref{invariant_1b} for $x$.\\
		\hline
		\multirow{3}{3cm}{Random-Settle-Augmented(x)}
		& Handle-Insert-Level0$(x,v)$ & $x$ is free, Level of $x$ is 0 and $|O_x| = \sqrt{n}$ & satisfies Invariant~\ref{invariant 2} for $x$.
		\\
		\cline{2-4}
		& Handle-Delete-Level1$(x,flag)$ : $flag$ is $0$ or $1$ & $x$ is free, Level of $x$ is $0$ and $|O_x| \geq \sqrt{n}$ & satisfies Invariant~\ref{invariant 2} for $x$.
		\\
		\cline{2-4}
		& Randomised-Raise-Level-To-1$(x)$ & $x$ is free, Level of $x$ is 0 and $|O_x| \geq \sqrt{n}$ & satisfies Invariant~\ref{invariant 2} for $x$.
		\\
		\hline
		Deterministic-Raise-Level-To-1(x)
		& Naive-Settle-Augmented$(x,flag=1)$ & $x$ is matched, Level of $x$ is 0 and $deg(x) \geq \sqrt{n}$ & satisfies Invariant~\ref{invariant 3} for $x$.
		\\
		\hline
		\multirow{2}{3cm}{Randomised-Raise-Level-To-1(x)}
		& Naive-Settle-Augmented$(x,flag=0)$ & $x$ is matched, Level of $x$ is 0 and $deg(x) \geq \sqrt{n}$ & satisfies Invariant~\ref{invariant 3} for $x$.
		\\
		\cline{2-4}
		& Fix-3-Aug-Path$(x,v,v^{\prime},z)$ : $v^{\prime}$ = $mate(v)$ and $z \in F(v^{\prime})$ & $x$ is matched, Level of $x$ is 0 and $deg(x) \geq \sqrt{n}$ & satisfies Invariant~\ref{invariant 3} for $x$.
		\\
		\hline
		\multirow{2}{3cm}{Fix-3-Aug-Path-D$(x,y,v,z)$} & Random-Settle-Augmented$(y)$ : $x$ is in $F(y)$ & $x$ is free, Level of $x$ is 0, $y$ is matched, $v$ is mate of $y$, Level of $y$ is 1 and $z \in F(v)$ & satisfies Invariant~\ref{invariant 5} for $x$.
		\\
		\cline{2-4}
		& Naive-Settle-Augmented$(x,1)$ : $y \in N(x)$ , $v$ = $mate(y)$ and $z$ is in $F(v)$, $flag$ is $1$ & $x$ is free, Level of $x$ is 0, $y$ is matched & satisfies Invariant~\ref{invariant 5} for $x$.
		\\
		\hline
		\multirow{3}{3cm}{Fix-3-Aug-Path(x,y,v,z)}
		& Naive-Settle-Augmented$(x,0)$: $y \in N(x)$ , $v$ = $mate(y)$ and $z$ is in $F(v)$, $flag$ is $0$ & $x$ is free, Level of $x$ is 0  and $x$ does not have a free neighbour & satisfies Invariant~\ref{invariant 5} for $x$.
		\\
		\cline{2-4}
		& Handle-Insert-Level0$(x,y)$ & $x$ is free, Level of $x$ is 0 and $y$ is matched & satisfies Invariant~\ref{invariant 5} for $x$.
		\\
		\hline
		Handle-Delete-Level1(x,flag)
		& Randomised-Raise-Level-To-1$(u)$ : $u$ is randomly matched to $u^{\prime}$ and $x$=$mate(u^{\prime})$ & $x$ is free, Level of $x$ is 1 and flag is $1$ & satisfies Invariant~\ref{invariant_1a}a for $x$.
		\\
		\hline
	\end{tabular}
	\caption{Procedure call table.  For every deterministic procedure call we have a corresponding randomised counterpart.  Columns 2 and 3 describe the algorithm state before the procedure call. Column 4 lists the invariant {\em satisfied} by the called procedure prior to making any other function call. Note that the value of $flag$ distinguishes between a call to a randomised method ($flag = 0$) and the corresponding deterministic method ($flag = 1$)}
	\label{table : Procedure_Calls}
\end{table}
\subsubsection{Naive-Settle-Augmented$(u,flag)$} 
\label{subsubsec : Naive-Settle-Augmented}
This procedure receives a free vertex $u$ with level 0 and a $flag$ as input. 
As described in Algorithm~\ref{alg:Naive-Settle-Augmented},
the procedure \textit{Naive-Settle-Augmented}$(u,flag)$ works as follows :\\
It checks for a free neighbour of $u$ by executing the macro \textit{Has-free}$(u)$. The following two cases and the sub-cases describes the remaining processing.
\begin{enumerate}
	\item  Vertex $u$ has a free neighbour and let $w$ be the free neighbour returned by \textit{get-free}$(u)$. Therefore, $u$ and $w$ violates Invariant~\ref{invariant_1b}. Edge $(u,w)$ is included in matching $M$, thus Invariant~\ref{invariant_1b} is satisfied for both $u$ and $w$. 
	\begin{enumerate}
		\item If $deg(u) \geq \sqrt{n}$ then vertex $u$ violates Invariant~\ref{invariant 3}. If $flag$ is 1, then procedure \textit{Deterministic-Raise-Level-To-1}$(u)$ is called. Otherwise, \textit{Randomised-Raise-Level-To-1}$(u)$ is called. Procedure \textit{Deterministic-Raise-Level-To-1}$(u)$[see Section~\ref{subsubsec : Deterministic-Raise-Level-To-1}] and \textit{Randomised-Raise-Level-To-1}$(u)$ [see Section~\ref{subsubsection : Randomised-Raise-Level-To-1}] makes $u$ matched and changes its level to $1$ and hence satisfies Invariant~\ref{invariant 3} for vertex $u$. At the time of call to this procedure  $w$ was a free vertex.
		Therefore, if $u$ gets matched to a vertex other than $w$ after call to procedure \textit{Randomised-Raise-Level-To-1}$(u)$ then $w$ will remain a free vertex as it was during entry into this function.
		\item If $deg(u) < \sqrt{n}$ and $deg(w) \geq \sqrt{n}$ then vertex $w$ violates Invariant~\ref{invariant 3}. 
		If $flag$ is 1 then procedure \textit{Deterministic-Raise-Level-To-1}$(w)$ is called. Otherwise \textit{Randomised-Raise-Level-To-1}$(w)$ is called. As explained in the previous case, whichever of these two calls are made, it satisfies Invariant~\ref{invariant 3} for $w$. At end of call to procedure \textit{Randomised-Raise-Level-To-1}$(w)$, if $u$ remains free, then $u$ may violate Invariant~\ref{invariant_1b} or Invariant~\ref{invariant 5} and hence procedure \textit{Naive-Settle-Augmented}$(u,1)$ is called.
		\item If $deg(u) < \sqrt{n}$ and $deg(w) < \sqrt{n}$ then no further processing is required. Further, $u$ and $w$ are now matched vertices in level 0. 
	\end{enumerate}
	Let $w$ be the \textit{mate} of $u$. Vertices $u$ and $w$ are removed from free neighbour lists of all their neighbours. The control returns to the calling procedure as there are no new violated invariants.
	\item Vertex $u$ does not have a free neighbour. If $u$ is part of a 3 length augmenting path then $u$ violates Invariant~\ref{invariant 5}. To check this, the macro \textit{Check-3-Aug-Path}$(u,x)$ is executed for each $x \in N(u)$ till it identifies an $x$ for which $z$ is not NULL.  If such a $z$ is found then,
	\begin{enumerate}
		\item If $flag$ is $0$, then
		procedure \textit{Fix-3-Aug-Path}$(u,x,mate(x),z)$ is called. Procedure \textit{Fix-3-Aug-Path}$(u,x,mate(x),z)$ [see Section~\ref{subsubsec : Fix-3-Aug-Path}] satisfies Invariant~\ref{invariant 5} for vertices $(u,x,mate(x),z)$, due to Observation \ref{observation : aug-path}. 
		\item Else If $flag$ is $1$, then procedure \textit{Fix-3-Aug-Path-D}$(u,x,mate(x),z)$ is called. Procedure \textit{Fix-3-Aug-Path-D}$(u,x,mate(x),z)$ [see Section~\ref{susubsec : Fix-3-Aug-Path-D}] satisfies Invariant~\ref{invariant 5} for vertices $(u,x,mate(x),z)$, due to Observation \ref{observation : aug-path}.
	\end{enumerate} 
	If $z$ is NULL for each $x \in N(u)$, then it means that $u$ is not part of a $3$ length augmenting path, and then $u$ is added to the free neighbour list of each of its neighbours. The control returns to the calling procedure as there are no new violated invariants.
\end{enumerate}
\subsubsection{Random-Settle-Augmented$(u)$}
\label{subsubsec : Random-Settle-Augmented}
This procedure is invoked at a free vertex $u$ at level $0$ and $|O_u| \geq \sqrt{n}$. Therefore, $u$ violates Invariant~\ref{invariant 2}.
As described in Algorithm~\ref{alg:Random-Settle-Augmented}, the procedure works as follows : The procedure selects an edge uniformly at random from $O_u$. Let $(u,y)$ be the randomly selected edge. The macro \textit{Transfer-Ownership-To}$(y)$ is executed to ensure that $y$ owns all those edges incident on it whose other end is a level 0 vertex.  
If $y$ was matched, let $x$ = $mate(y)$, otherwise $x$ takes the value NULL. 
After this, the procedure changes the level of $u$ and level of $y$ to 1 and the edge $(u,y)$ is included in the matching. Therefore, Invariant~\ref{invariant 2} is satisfied for vertex $u$. Vertex $u$ and $y$ is removed from the free neighbour list of their neighbours. If $u$ has a free neighbour $w$ then there may be a $3$ length augmenting path involving matched edge $(u,y)$ which violates Invariant~\ref{invariant 5}. Therefore, 
the procedure also checks for any new 3 length augmenting path involving the matched edge $(u,y)$ by using the macro \textit{Check-3-Aug-Path}$(w,u)$. If \textit{Check-3-Aug-Path}$(w,u)$ identifies a $z$ which is not NULL then procedure \textit{Fix-3-Aug-Path-D}$(w,u,y,z)$ is called and  Invariant~\ref{invariant 5} [see Section~\ref{susubsec : Fix-3-Aug-Path-D}] is satisfied by the vertices $(w,u,y,z)$.  Finally, the procedure returns $x$ to the calling procedure.  Also, at the point of return to the calling procedure there are no new invariants violated.
\subsubsection{Deterministic-Raise-Level-To-1$(u)$} 
\label{subsubsec : Deterministic-Raise-Level-To-1}
This procedure receives vertex $u$ as input which is a matched vertex at level 0 and $deg(u) \geq \sqrt{n}$. Therefore, $u$ violates Invariant~\ref{invariant 3}.  
As described in Algorithm~\ref{alg:Deterministic-Raise-Level-To-1}, the procedure works as follows :
Macros \textit{Take-Ownership}$(u)$ and \textit{Transfer-Ownership-To}$(v)$ are executed. The levels of $u$ and $v$ are changed to 1,  thus ensuring that Invariant~\ref{invariant 3} is satisfied for vertex $u$. The control then returns to the calling procedure as there are no new invariants violated by this procedure.
\subsubsection{Randomised-Raise-Level-To-1$(u)$} 
\label{subsubsection : Randomised-Raise-Level-To-1}
This procedure receives a vertex $u$ as input which is a matched vertex at level 0 and $deg(u) \geq \sqrt{n}$. Therefore, $u$ violates Invariant~\ref{invariant 3}. 
As described in Algorithm~\ref{alg:Randomised-Raise-Level-To-1}, procedure \textit{Randomised-Raise-Level-To-1}$(u)$ works as follows :\\ 
Let $v$ be the mate of $u$. The edge $(u,v)$ is removed from the matching. Consequently, Invariant~\ref{invariant 3} is now satisfied for vertex $u$.  \textit{Take-Ownership}$(u)$ is then executed, after which $|O_u| > \sqrt{n}$ and $u$ is free. Now $u$ violates Invariant~\ref{invariant 2} and to repair this,
procedure \textit{Random-Settle-Augmented}$(u)$ is called. This call ensures that 
Invariant~\ref{invariant 2} is satisfied by $u$ [see Section~\ref{subsubsec : Random-Settle-Augmented}]. Further, let $x$ denote the return value. 
if $x$ is not NULL, then 
following two cases describe the remaining processing:
\begin{enumerate}
	\item Level of $x$ is 1. Therefore, $x$ violates Invariant~\ref{invariant_1a}a. Procedure \textit{Handle-Delete-Level1}$(x,1)$ is called. \textit{Handle-Delete-Level1}$(x,1)$ [see Section~\ref{subsubsec : Handle-Delete-Level1}] ensures that $x$ satisfies Invariant~\ref{invariant_1a}a.
	\item Level of $x$ is 0. Therefore, $x$ may violate Invariant~\ref{invariant_1b}. Procedure \textit{Naive-Settle-Augmented}$(x,1)$ is called. \textit{Naive-Settle-Augmented}$(x,1)$ [see Section~\ref{subsubsec : Naive-Settle-Augmented}] ensures that vertex $x$ satisfies Invariant~\ref{invariant_1b}. 
\end{enumerate}
Finally, if $v$ remains free then $v$ may violate Invariant~\ref{invariant_1b}. Procedure \textit{Naive-Settle-Augmented}$(v,1)$ [see Section~\ref{subsubsec : Naive-Settle-Augmented}] is called to ensure that $v$ satisfies Invariant~\ref{invariant_1b}.  The control then returns to the calling procedure as no further invariants are violated due to $u$ selecting a random mate.
\subsubsection{Fix-3-Aug-Path-D$(u,v,y,z)$}
\label{susubsec : Fix-3-Aug-Path-D}
This procedure receives a free vertex $u$ at level $0$, a matched vertex $v \in N(u)$, $y$ = $mate(v)$ and $z \in F(y)$ ($z \neq u$) as input. Further, $u$ and $z$ do not have any free neighbours.  Therefore, there is a 3 length augmenting path $u-v-y-z$ and Invariant~\ref{invariant 5} is violated. As described in Algorithm~\ref{alg:Fix-3-Aug-Path-D}, the procedure works as follows :\\
\textit{Transfer-Ownership-To}$(u)$ and \textit{Transfer-Ownership-To}$(z)$ are executed. Vertices $u$ and $z$ are removed from the free neighbour lists of all their neighbours.  
If level of $v$ is $0$, then \textit{Transfer-Ownership-To}$(v)$ and \textit{Transfer-Ownership-To}$(y)$ are executed, and the levels of $v$ and $y$ are both changed to $1$. Then, the edge $(v,y)$ is removed from the matching and the edges $(u,v)$ and $(y,z)$ are included in the matching. From Observation~\ref{observation : aug-path}, edges $(u,v)$ and $(y,z)$ will not result in a new 3 length augmenting path. Therefore, Invariant~\ref{invariant 5} is satisfied by $u,v,y,z$. Now $u$,$v$,$y$ and $z$ violates Invariant~\ref{invariant 4}. This is addressed by changing the Levels of $u$ and $z$ to 1. Therefore, Invariant~\ref{invariant 4} is satisfied for $u$,$v$,$y$ and $z$. The control then returns to the calling procedure.
\subsubsection{Fix-3-Aug-Path$(u,v,y,z)$} 
\label{subsubsec : Fix-3-Aug-Path}
This procedure receives a free vertex $u$ at level $0$, a matched vertex $v \in N(u)$, $y$ = $mate(v)$ and $z \in F(y)$ ($z \neq u$) as input. Further, $u$ and $z$ do not have a free neighbour.  Therefore, there is a 3 length augmenting path $u-v-y-z$ and Invariant~\ref{invariant 5} is violated.
As described in Algorithm~\ref{alg:Fix-3-Aug-Path}, procedure \textit{Fix-3-Aug-Path}$(u,v)$ proceeds based on the following cases: 
\begin{enumerate} 
	\item Level of vertex $v$ is 1.\\
	The edge $(v,y)$ is removed from the matching, and the edges $(u,v)$ and $(y,z)$ are included in the matching. From Observation~\ref{observation : aug-path}, edges $(u,v)$ and $(y,z)$ will not result in a new 3 length augmenting path. This ensures that Invariant~\ref{invariant 5} is satisfied by $u,v,y,z$. However, $u$,$v$,$y$ and $z$ violates Invariant~\ref{invariant 4}, and this is addressed by considering the following sub-cases:  
	\begin{enumerate}
		\item $deg(u) \geq \sqrt{n}$. Therefore, vertex $u$ also violates Invariant~\ref{invariant 3}. Procedure \textit{Randomised-Raise-Level-To-1$(u)$} is called to address this violation, and on return from the call $u$ satisfies Invariant~\ref{invariant 3} [see Section~\ref{subsubsection : Randomised-Raise-Level-To-1}]. 
		If edge $(u,v)$ is not in the matching on return from the call to procedure \textit{Randomised-Raise-Level-To-1$(u)$}, then $v$ is a free vertex at level $1$. Therefore, $v$ violates Invariant~\ref{invariant_1a}a. Hence, \textit{Handle-Delete-Level1}$(v,1)$ is called and on return $v$ satisfies Invariant~\ref{invariant_1a}a [see Section~\ref{subsubsec : Handle-Delete-Level1}] .
		Following this, \textit{Transfer-Ownership-To}$(z)$ and \textit{Delete-From-F-List}$(z)$ are executed, and then the level of $z$ is changed to $1$. Therefore, $y$ and $z$ satisfy Invariant~\ref{invariant 4}. The control returns to the calling procedure, as no further invariants are violated by this procedure.  
		
		\item $deg(u) < \sqrt{n}$ and $deg(z) \geq \sqrt{n}$. Like $u$ in the previous sub-case, vertex $z$ violates Invariant~\ref{invariant 3}. Procedure \textit{Randomised-Raise-Level-To-1$(z)$} is called to address this violation, and on return from the call $z$ satisfies Invariant~\ref{invariant 3} [see Section~\ref{subsubsection : Randomised-Raise-Level-To-1}].  
		If edge $(y,z)$ is not in the matching on return from the call to \textit{Randomised-Raise-Level-To-1$(z)$}, then $y$ is a free vertex $y$. Therefore, $y$ violates Invariant~\ref{invariant_1a}a. Procedure Handle-Delete-Level1$(y,1)$ is called and on return $y$ satisfies Invariant~\ref{invariant_1a}a [see Section~\ref{subsubsec : Handle-Delete-Level1}].  Then, \textit{Transfer-Ownership-To}$(u)$ and \textit{Delete-From-F-List}$(u)$ are executed, and then Level of $u$ is changed to $1$. This ensures that $u$ and $v$, satisfy Invariant~\ref{invariant 4}.  The control returns to the calling procedure, as no further invariants are violated by this procedure.
		\item $deg(u) < \sqrt{n}$ and $deg(z) < \sqrt{n}$.  \textit{Transfer-Ownership-To}$(u)$ and \textit{Delete-From-F-List}$(u)$ are executed, and this is followed by the execution of
		 \textit{Transfer-Ownership-To}$(z)$ and \textit{Delete-From-F-List}$(z)$. The level of $u$ and $z$ are changed to 1, and this ensure that $u$, $v$, $y$, $z$, satisfy Invariant~\ref{invariant 4}.   The control returns to the calling procedure, as no further invariants are violated by this procedure.  		
	\end{enumerate} 
	\item Level of vertex $v$ is 0.\\
	The edge $(v,y)$ is removed from the matching, and the edges $(u,v)$ and $(y,z)$ are included in the matching. From Observation~\ref{observation : aug-path}, edges $(u,v)$ and $(y,z)$ will not result in a new 3 length augmenting path. This ensures that Invariant~\ref{invariant 5} is satisfied. The following processing is done by considering each of the sub-cases:
	\begin{enumerate}
		\item $deg(u) \geq \sqrt{n}$. In this case $u$ violates Invariant~\ref{invariant 3}. Procedure \textit{Randomised-Raise-Level-To-1}$(u)$ is called and on return $u$ satisfies Invariant~\ref{invariant 3} [see Section~\ref{subsubsection : Randomised-Raise-Level-To-1}]. During the call to \textit{Randomised-Raise-Level-To-1}$(u)$, if $u$ does not get matched to $v$ then $v$ may violate Invariant~\ref{invariant_1b}. Procedure \textit{Naive-Settle-Augmented}$(v,1)$ is called and on return $v$ satisfies Invariant~\ref{invariant_1b} [see Section~\ref{subsubsec : Naive-Settle-Augmented}].
		\item  $deg(z) \geq \sqrt{n}$. In this $z$ violates Invariant~\ref{invariant 3}. Procedure \textit{Randomised-Raise-Level-To-1}$(z)$ is called and on return $z$ satisfies Invariant~\ref{invariant 3} [see Section~\ref{subsubsection : Randomised-Raise-Level-To-1}]. During the call to \textit{Randomised-Raise-Level-To-1}$(z)$, if $z$ does not get matched to $y$ then $y$ may violate Invariant~\ref{invariant_1b}. Procedure \textit{Naive-Settle-Augmented}$(y,1)$ is called and on return $y$ satisfies Invariant~\ref{invariant_1b} [see Section~\ref{subsubsec : Naive-Settle-Augmented}].
		
		\item $deg(u) < \sqrt{n}$. \textit{Delete-From-F-List}$(u)$ is executed to remove $u$ from the free neighbour list of each of its neighbours.
		\item $deg(z) < \sqrt{n}$.  \textit{Delete-From-F-List}$(z)$ is executed to remove $z$ from the free neighbour list of each of its neighbours.
	\end{enumerate}
	The control returns to the calling procedure as no new invariants are violated by this procedure.
\end{enumerate}
\subsubsection{Handle-Delete-Level1$(u,flag)$}
\label{subsubsec : Handle-Delete-Level1}
This procedure receives a free vertex $u$ at level $1$ and a $flag$ as input. Therefore, $u$ violates Invariant~\ref{invariant_1a}a.  
As described in Algorithm~\ref{alg:Handle-Delete-Level1}, procedure \textit{Handle-Delete-Level1}$(u,flag)$ works as follows :\\
\textit{Transfer-Ownership-From}$(u)$ is executed, after which the Level of $u$ is changed to $0$. This ensure that $u$ satisfies Invariant~\ref{invariant_1a}a.
Following two cases describes the remaining processing.
\begin{enumerate}
	\item $|O_u| \geq \sqrt{n}$. In this case $u$ violates Invariant~\ref{invariant 2}. To address this Procedure \textit{Random-Settle-Augmented}$(u)$ is called and on return $u$ satisfies Invariant~\ref{invariant 2} [see Section~\ref{subsubsec : Random-Settle-Augmented}].  Let $x$ be the value returned by \textit{Random-Settle-Augmented}$(u)$. If $x$ is NULL then control returns to the calling procedure as no new invariants are violated by this procedure.  On the other hand, if $x$ is not NULL, then the vertex $x$ may violate Invariant~\ref{invariant_1b}. To address this procedure \textit{Naive-Settle-Augmented}$(x,1)$ is called and on return $x$ satisfies Invariant~\ref{invariant_1b} [see Section~\ref{subsubsec : Naive-Settle-Augmented}]. The control returns to the calling procedure as there are no new violated invariants.
	\item $|O_u| < \sqrt{n}$.  In this case, vertex $u$ may violate Invariant~\ref{invariant_1b}. Therefore, if \textit{flag} is 1 then \textit{Naive-Settle-Augmented}$(u,1)$ is called and if \textit{flag} is 0 then \textit{Naive-Settle-Augmented}$(u,0)$. On return from the call to Procedure \textit{Naive-Settle-Augmented$(u,flag)$} $u$ satisfies Invariant~\ref{invariant_1b} [see Section~\ref{subsubsec : Naive-Settle-Augmented}].  The control returns to the calling procedure as there are no new violated invariants.  
\end{enumerate}
\subsection{Insertion}
\label{subsec : insertion}
We describe procedure calls made during the insertion of an edge $(u,v)$.  
Assuming that all the invariants are maintained before the insertion, we show that our algorithm maintains all the invariants after the insertion.   To do this, we identify the precondition satisfied by the vertices before each of the procedure calls.  
Following four cases exhaustively describes the different insertion cases, and during any insertion exactly one of these cases is executed.\\ 
\textbf{Insert(u,v):}
\begin{enumerate}
	\item
	\label{insertion : case 1} 
	Level of $u$ is 1 and level of $v$ is 1. Edge $(u,v)$ is included in $O_u$ or $O_v$ based on whichever list is larger, and is included in $N(u)$ and  $N(v)$. $u$ and $v$ are clean and no further processing is required, and the update is terminated.
	\item
	\label{insertion : case 2}
	Level of $u$ is $1$ and level of $v$ is $0$. The edge $(u,v)$ is included in $O_u$, $N(u)$ and $N(v)$. Following two sub-cases describes the subsequent steps:
	\begin{enumerate}
		\item
		\label{insertion : case 2a} 
		Vertex $v$ is free.  \textit{Check-3-Aug-Path}$(v,u)$ is executed and it returns a value $z$. If $z$ is not NULL then Invariant~\ref{invariant 5} is violated as there is a 3 length augmenting path starting at $v$. To address this procedure 
		\textit{Fix-3-Aug-Path}$(v,u,mate(u),z)$ is called and on return  Invariant~\ref{invariant 5} [see Section~\ref{subsubsec : Fix-3-Aug-Path}] is satisfied by vertices $(v,u,mate(u),z)$.  
		\item 
		\label{insertion : case 2b}
		Vertex $v$ is matched.  In this sub-case we consider the following sub-cases:
		\begin{enumerate}
			\item $deg(v) = \sqrt{n}$. Then, since $u$ is at level 1, it is clean and $v$ violates Invariant~\ref{invariant 3}.  To address this, Procedure \textit{Randomised-Raise-Level-To-1}$(v)$ is called and on return $v$ satisfies Invariant~\ref{invariant 3} [see Section~\ref{subsubsection : Randomised-Raise-Level-To-1}].
			\item  $deg(v) < \sqrt{n}$. Then, $u$ and $v$ are clean and no further processing is required.
		\end{enumerate}  
		The update is terminated.
	\end{enumerate}
	
	\item Level of $u$ is $0$ and Level of $v$ is $1$. This case is symmetric to case~\ref{insertion : case 2}.
	\item 
	\label{insertion : case 4}
	Level of $u$ is $0$ and level of $v$ is $0$.
	In this case procedure \textit{Handle-Insertion-Level0}$(u,v)$ is called.\\
	As described in Algorithm~\ref{alg:Insertion-Level0}, procedure \textit{Handle-Insertion-Level0}$(u,v)$ works as follows :\\	
	The edge $(u,v)$ is added to $O_u$ if $|O_u | \geq |O_v|$. Otherwise, it is added to $O_v$.
	If both $u$ and $v$ are free vertices then $u$ and $v$ violate Invariant~\ref{invariant_1b}. 
	Then, edge $(u,v)$ is added to the matching and thus $u$ and $v$ satisfy Invariant~\ref{invariant_1b}.  Without loss of generality, let us assume that $|O_u| \geq |O_v|$. The following cases describe the remaining steps:
	\begin{enumerate}
		\item $|O_u| = \sqrt{n}$. In this case $u$ violates Invariant~\ref{invariant 2}. To address this procedure \textit{Random-Settle-Augmented}$(u)$ is called and on return $u$ satisfies Invariant~\ref{invariant 2} [see Section~\ref{subsubsec : Random-Settle-Augmented}]. Let $x$ be the value returned by \textit{Random-Settle-Augmented}$(u)$. If $x$ is not NULL then $x$ may violate Invariant~\ref{invariant_1b} and this is addressed by calling Procedure \textit{Naive-Settle-Augmented}$(x,1)$  and on return $x$ satisfies Invariant~\ref{invariant_1b} [see Section~\ref{subsubsec : Naive-Settle-Augmented}]. Now we deal with the $v$, the other end point of the inserted edge $(u,v)$ in the following two sub-cases. 
	\begin{enumerate}
	\item  Suppose $v$ was matched to $u$ at the beginning, and after the call to \textit{Random-Settle-Augmented}$(u)$, $v$ is free. Then, $v$ may violate Invariant~\ref{invariant_1b}. This is addressed by a call to procedure  \textit{Naive-Settle-Augmented}$(v,1)$ and on return $v$ satisfies Invariant~\ref{invariant_1b} [see Section~\ref{subsubsec : Naive-Settle-Augmented}]. 
	\item $v$ was not matched to $u$ at the beginning, but at this point $v$ is matched, $deg(v) \geq \sqrt{n}$ and level of $v$ is $0$. Then $v$ violates Invariant~\ref{invariant 3}. This is addressed by a call to procedure \textit{Deterministic-Raise-Level-To-1}$(v)$  and on return $v$ satisfies Invariant~\ref{invariant 3} [see Section~\ref{subsubsec : Deterministic-Raise-Level-To-1}].
	\end{enumerate}	
	\item $|O_u| < \sqrt{n}$. We deal with this case by executing exactly one of the following two sub-cases:
		\begin{enumerate}
			\item vertex $v$ is matched.  The steps in this case are chosen based on exactly one of the following three sub-cases:
			\begin{enumerate}
				\item $deg(v) \geq \sqrt{n}$. Then, $v$ violates Invariant~\ref{invariant 3}.  This is addressed by a call to procedure \textit{Random-Raise-Level-To-1}$(v)$  and on return $v$ satisfies Invariant~\ref{invariant 3} [Section~\ref{subsubsection : Randomised-Raise-Level-To-1}]. If $u$ is matched, $deg(u) \geq \sqrt{n}$ and level of $u$ is $0$ then, $u$ violates Invariant~\ref{invariant 3}. To address this procedure \textit{Deterministic-Raise-Level-To-1}$(u)$ is called and on return $u$ satisfies Invariant~\ref{invariant 3} [see Section~\ref{subsubsec : Deterministic-Raise-Level-To-1}].
				\item $deg(v) < \sqrt{n}$ and $u$ is free.
				\textit{Check-3-Aug-Path}$(u,v)$ is executed and let $z$ be the value computed by it. If $z$ is not NULL then Invariant~\ref{invariant 5} is violated, and 
				procedure \textit{Fix-3-Aug-Path}$(u,v,mate(v),z)$ is called. On return from the call Invariant~\ref{invariant 5} is  satisfied  [see Section~\ref{subsubsec : Fix-3-Aug-Path}] by vertices $(u,v,mate(v),z)$.
				\item $deg(v) < \sqrt{n}$, $u$ is matched and $deg(u) \geq \sqrt{n}$. Then, $u$ violates Invariant~\ref{invariant 3}. This is addressed by a call to procedure \textit{Random-Raise-Level-To-1}$(u)$ and on return $u$ satisfies Invariant~\ref{invariant 3} [see Section~\ref{subsubsection : Randomised-Raise-Level-To-1}]. 
			\end{enumerate}
			\item vertex $v$ is free. This case is handled by considering exactly one of the following two sub-cases:
			\begin{enumerate}
				\item $u$ is matched and $deg(u) \geq \sqrt{n}$. $u$ violates Invariant~\ref{invariant 3}. This is addressed by a call to procedure \textit{Random-Raise-Level-To-1}$(u)$ and on return $u$ satisfies Invariant~\ref{invariant 3} [see Section~\ref{subsubsection : Randomised-Raise-Level-To-1}].
				\item $u$ is matched, $deg(u) < \sqrt{n}$ and $v$ is free. 
				\textit{Check-3-Aug-Path}$(v,u)$ is executed and let $z$ be the value computed by it.  If $z$ is not NULL then Invariant~\ref{invariant 5} is violated, and procedure \textit{Fix-3-Aug-Path}$(v,u,mate(u),z)$ is called. On return from the call Invariant~\ref{invariant 5} is satisfied [see Section~\ref{subsubsec : Fix-3-Aug-Path}] by vertices $(v,u,mate(u),z)$. 
			\end{enumerate}
		\end{enumerate}   
	\end{enumerate}
	Return to calling procedure as no new invariants are violated and the update is terminated.
	
\end{enumerate}
\subsection{Deletion}
\label{subsec : deletion}
We describe the procedure calls made during the deletion of an edge $(u,v)$.   
Assuming that all the invariants are maintained before the deletion, our aim is to ensure that our algorithm maintains all the invariants after the deletion.  Deletion executes exactly one of the following two cases :\\
\textbf{Delete(u,v):}
\begin{enumerate}
	\item $(u,v)$ is an unmatched edge. This case is simple and does not violate any invariant. Our algorithm removes $(u,v)$ from $O_u$ or $O_v$ and from $N(u)$ and $N(v)$. This takes $O(1)$ time. This deletion does not change the matching and the update is terminated. The processing time is $O(1)$. 
	\item $(u,v)$ is a matched edge. One of the following two sub-cases based on the level of $(u,v)$ is executed.
	\begin{enumerate}
		\item 
		\label{deletion : case 2a}
		$(u,v)$ is a level 0 matched edge. Therefore, $u$ and $v$ may violate Invariant~\ref{invariant_1b}. To address this procedures \textit{Naive-Settle-Augmented}$(u,0)$ and \textit{Naive-Settle-Augmented}$(v,0)$ are called. On return from the calls both $u$ and $v$ satisfy Invariant~\ref{invariant_1b} [see Section~\ref{subsubsec : Naive-Settle-Augmented}].   This terminates the update.
		
		\item 
		\label{case : deletion 2b}
		$(u,v)$ is a level 1 matched edge. Deletion of $(u,v)$ creates two free vertices $u$ and $v$ at level 1. Therefore, $u$ and $v$ violate Invariant~\ref{invariant_1a}a. For vertex $u$, procedure \textit{Handle-Delete-Level1}$(u,0)$ is called. Similarly for $v$, procedure \textit{Handle-Delete-Level1}$(v,0)$ is called. 
		After the two calls, \textit{Handle-Delete-Level1}$(u,0)$ and \textit{Handle-Delete-Level1}$(v,0)$, both $u$ and $v$ satisfy Invariant~\ref{invariant_1a}a [see Section~\ref{subsubsec : Handle-Delete-Level1}].  This terminates the update.
	\end{enumerate}
\end{enumerate}
Before presenting the correctness, we make the following observation :
\begin{observation}
\label{obs : flag-change-0-to-1}	
The value of flag changes from $0$ to $1$ only after a call to procedure  \textit{Random-Settle-Augmented} within the same update.
\end{observation}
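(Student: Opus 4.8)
The plan is to prove the observation by a direct inspection of the finite collection of procedures that can execute during a single update, tracking every write to the variable \textit{flag}. First I would note that \textit{flag} is local to one invocation of \textbf{Insert} or \textbf{Delete}: it is initialized to $0$ at the start of the update, and the only routines that run during that update are \textit{Naive-Settle-Augmented}, \textit{Random-Settle-Augmented}, \textit{Deterministic-Raise-Level-To-1}, \textit{Randomised-Raise-Level-To-1}, \textit{Fix-3-Aug-Path}, \textit{Fix-3-Aug-Path-D}, \textit{Handle-Delete-Level1}, \textit{Handle-Insertion-Level0}, together with the inline macros of Section~\ref{sec:ds}. The macros of Section~\ref{sec:ds} manipulate only ownership lists, free-neighbour lists and the matching, so they never touch \textit{flag}; hence it suffices to examine the eight procedures.

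Next I would go through the pseudocode (Algorithms~\ref{alg:Naive-Settle-Augmented}, \ref{alg:Random-Settle-Augmented}, \ref{alg:Deterministic-Raise-Level-To-1}, \ref{alg:Randomised-Raise-Level-To-1}, \ref{alg:Fix-3-Aug-Path-D}, \ref{alg:Fix-3-Aug-Path}, \ref{alg:Handle-Delete-Level1} and \ref{alg:Insertion-Level0}) and check, case by case, where the assignment $\textit{flag} \leftarrow 1$ appears. The claim is that it appears in exactly one place: inside \textit{Random-Settle-Augmented}, at the point where the randomly chosen edge $(u,y)$ is added to the matching and the levels of $u$ and $y$ are raised to $1$ — that is, precisely the moment a level $1$ vertex is assigned a random mate. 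Every other procedure either leaves \textit{flag} untouched or merely \emph{reads} it, to choose between a randomized variant (taken when $\textit{flag}=0$) and the corresponding deterministic variant (taken when $\textit{flag}=1$), as recorded in Table~\ref{table : Procedure_Calls}. In particular \textit{Randomised-Raise-Level-To-1} and the randomized branches of \textit{Fix-3-Aug-Path} contain no write to \textit{flag} of their own; they can cause it to become $1$ only indirectly, via their call to \textit{Random-Settle-Augmented}.

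To finish, I would observe that \textit{flag} only ever holds the values $0$ and $1$, and is never reset to $0$ within an update; consequently any change of its value from $0$ to $1$ must be the single statement $\textit{flag} \leftarrow 1$ identified above. A short induction on the call tree of the update then shows that control reaches that statement only from within an invocation of \textit{Random-Settle-Augmented}, possibly nested inside one of the procedures that call it (\textit{Handle-Insertion-Level0}, \textit{Handle-Delete-Level1}, \textit{Randomised-Raise-Level-To-1}), which is exactly the statement of the observation.

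The main obstacle is not conceptual but bookkeeping: one must confirm, branch by branch, that none of the other seven procedures — several with four or five cases and further nested sub-cases — hides an assignment to \textit{flag}. Care is needed especially for \textit{Fix-3-Aug-Path} and \textit{Handle-Insertion-Level0}, whose case analyses are the longest; once that exhaustive check is carried out, nothing further is required.
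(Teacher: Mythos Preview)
The paper does not supply a proof of this observation; it is stated as immediate from the procedure descriptions. Your plan—an exhaustive inspection of the procedures—is the natural way to justify it and is in the right spirit.

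However, your central factual claim is wrong: there is no statement $\textit{flag}\leftarrow 1$ inside \textit{Random-Settle-Augmented}, nor anywhere else in the pseudocode. If you look at Algorithm~\ref{alg:Random-Settle-Augmented}, the procedure neither receives nor mentions \textit{flag}. In this paper \textit{flag} is not a mutable global variable that is assigned; it is a parameter threaded through the call chain, and the value $1$ appears only as a hard-coded argument at specific call sites—for instance \textit{Naive-Settle-Augmented}$(x,1)$ and \textit{Handle-Delete-Level1}$(x,1)$ inside \textit{Randomised-Raise-Level-To-1} (Algorithm~\ref{alg:Randomised-Raise-Level-To-1}), \textit{Naive-Settle-Augmented}$(x,1)$ inside \textit{Handle-Delete-Level1} (Algorithm~\ref{alg:Handle-Delete-Level1}), the call \textit{Naive-Settle-Augmented}$(u,1)$ inside \textit{Naive-Settle-Augmented} itself (Algorithm~\ref{alg:Naive-Settle-Augmented}), and several calls in \textit{Handle-Insertion-Level0} (Algorithm~\ref{alg:Insertion-Level0}). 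The correct bookkeeping is therefore not to locate a single assignment but to enumerate every call site where the literal $1$ is passed as the \textit{flag} argument and verify that each such site is reachable only after control has already passed through a call to \textit{Random-Settle-Augmented} (either directly, or via \textit{Randomised-Raise-Level-To-1}, which always invokes it). Once you reformulate the check this way, your exhaustive case analysis goes through and the observation follows; but as written, the mechanism you describe does not exist in the code.
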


\section{Correctness-Termination of Updates and Maintenance of Invariants}
\label{sec: Algorithm_correctness}
To prove that the procedures described in the previous section are correct, we prove that each update terminates and at the end of each update all the vertices are clean.  We prove a stronger statement that each update terminates by making at most a constant number of procedure calls.  
\begin{lemma}
\label{lem: ownership_edge}	
After an update each edge is owned by exactly one of its two vertices. Further, if the vertices of an edge are at different levels, then the edge is owned by the vertex at level 1.
\end{lemma}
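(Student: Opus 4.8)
The plan is to establish the two claims—(i) each edge has exactly one owner and (ii) an edge whose endpoints straddle the two levels is owned by its level-1 endpoint—as a joint invariant maintained by induction on the sequence of updates, together with an auxiliary induction on the procedure calls made inside a single update. The base case is the empty graph, where the claim holds vacuously. For the inductive step I would assume both properties hold before an update and track every place where either the edge set, the owner of an edge, or the level of a vertex changes during that update.

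The first observation to record is that a new edge $(u,v)$ enters $E$ only through \textbf{Insert}, and in every branch of \textbf{Insert} the code explicitly assigns $(u,v)$ to exactly one of $O_u,O_v$ (to the higher-level endpoint when the levels differ, to the endpoint with the larger ownership list when both are at level $0$, and arbitrarily when both are at level $1$), so property (i) holds for the newly inserted edge and property (ii) is respected by this assignment. Symmetrically, an edge leaves $E$ only through \textbf{Delete}, which removes it from whichever of $O_u,O_v$ contained it (by the inductive hypothesis, exactly one), so no dangling or doubly-owned edge is created. Thus it remains to argue that ownership is repaired whenever a vertex changes level.

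A vertex changes level only in \textit{Deterministic-Raise-Level-To-1}, \textit{Randomised-Raise-Level-To-1}, \textit{Fix-3-Aug-Path}, \textit{Fix-3-Aug-Path-D}, or \textit{Handle-Delete-Level1}. Each of these either immediately precedes the level change to $1$ with the macros \textit{Take-Ownership} and/or \textit{Transfer-Ownership-To} on the relevant vertices, or, when dropping a vertex to level $0$ in \textit{Handle-Delete-Level1}, precedes it with \textit{Transfer-Ownership-From}. The key is to verify, macro by macro, that these macros exactly re-establish property (ii) for all edges incident to the affected vertices while leaving property (i) intact: \textit{Transfer-Ownership-From}$(u)$ hands every edge $(u,w)$ with $Level(w)=1$ from $O_u$ to $O_w$ (and leaves the rest, whose other endpoint is at level $0$, owned by $u$ as required once $u$ drops to level $0$); \textit{Transfer-Ownership-To}$(u)$ pulls every edge $(u,w)$ with $Level(w)=0$ into $O_u$ just as $u$ rises to level $1$; and \textit{Take-Ownership}$(u)$ gathers all edges incident to $u$. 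Since each macro moves an edge from one owner's list to the other's (never duplicating, never dropping), property (i) is preserved throughout, and by the time the level assignment statement executes, property (ii) holds for every incident edge. Edges not incident to any vertex that changed level are untouched and retain both properties by the inductive hypothesis.

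The main obstacle I anticipate is the bookkeeping of the \emph{order} in which level changes and ownership macros occur within a single update: some procedures change the level of one endpoint of a matching edge before the other (e.g. the sub-cases of \textit{Fix-3-Aug-Path} where only $u$ or only $z$ is raised), so there are transient moments during the update when the two endpoints of a matching edge sit at different levels. I would handle this by proving the invariant only for the state \emph{between} procedure calls is too weak; instead I would state and prove a per-statement bookkeeping lemma—essentially that after each ownership macro completes, every edge incident to its argument is owned by its higher-level endpoint—and then compose these to conclude that at the point control returns to the top-level update (when, by the correctness of the procedures established elsewhere, all of Invariant~\ref{invariant 4} holds and no vertex straddles a level with its mate), both properties of the present lemma hold. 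The rest is a finite case check over the branches enumerated in Table~\ref{table : Procedure_Description}.
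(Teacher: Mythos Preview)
Your proposal is correct and follows essentially the same approach as the paper: induction on updates from the empty graph, verifying that \textbf{Insert}/\textbf{Delete} assign and remove ownership uniquely, and then a case analysis showing that every level change (0$\to$1 in \textit{Deterministic-Raise-Level-To-1}, \textit{Random-Settle-Augmented}, \textit{Fix-3-Aug-Path}, \textit{Fix-3-Aug-Path-D}; 1$\to$0 only in \textit{Handle-Delete-Level1}) is immediately preceded by the matching ownership macro. The paper's argument is actually briefer than yours---it does not introduce a per-statement bookkeeping lemma for transient mid-update states, since the claim concerns only the post-update state and each macro is invoked directly before the level assignment it serves; your extra caution is sound but not needed.
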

\begin{proof}
Before the first update, the graph is empty.  Therefore, this claim is true at the before of the first update.  Consequently, we assume that before an update the claim is true. We prove that after the update, the claim is true.   From the description of the insert procedure, it is clear that when an edge is inserted, it is always added into the ownership list of exactly one of the two vertices it is incident on.  Similarly, edge deletion does not violate the property.  We now show based on the following two cases that after a vertex changes its levels, the condition in the lemma is respected by all the edges incident on it. 
\begin{enumerate}
\item A vertex $u$ changes it level from $0$ to $1$ :
 This happens inside  \textit{Deterministic-Raise-Level-To-1}, \textit{Random-Settle-Augmented}, \textit{Fix-3-Aug-Path} and \textit{Fix-3-Aug-Path-D}.  In these procedures, before changing the level of $u$ from $0$ to $1$, we execute \textit{Take-Ownership}$(u)$ or  \textit{Transfer-Ownership-To}$(u)$.  \textit{Take-Ownership}$(u)$ transfers the ownership of all the edges incident on $u$, but not owned by $u$, to $u$.  \textit{Transfer-Ownership-To}$(u)$ transfers the ownership of all the edges whose other endpoint is at level $0$, and not owned by $u$, to $u$. This ensures that all the edges incident on $u$ satisfies the statement of this lemma.
\item A vertex $u$ changes it level from $1$ to $0$ : 
This happens only within the procedure \textit{Handle-Delete-Level1}.  Before changing the level of $u$ from $1$ to $0$, we execute \textit{Transfer-Ownership-From}$(u)$.  \textit{Transfer-Ownership-From}$(u)$  considers each edge $(u,x)$ in $O_u$ such that $x$ is at level 1, and removes it from $O_u$ and adds it to $O_x$.   This ensures that all the edges incident on $u$ satisfies the statement of this lemma.
\end{enumerate}   
Therefore, at the end of an update, all the edges satisfy the statement of the lemma, thus the lemma is proved.
\end{proof} 
We next prove that when a 3 length augmenting path is removed by \textit{Fix-3-Aug-Path} or \textit{Fix-3-Aug-Path-D}, they do not create a new 3 length augmenting path with respect to the modified matching.  We prove this based on the following crucial property.
The lemma points out that our update procedures implement an {\em atomic operation} to match at least one of the two end points of an edge when they both become free during an update.  
\begin{lemma}
\label{lem:noadj}
Let us assume that all the vertices are clean at the beginning of an update.  For an edge $(u,p)$, if $u$ and $p$ become free after a statement in some procedure during the update, then at least one of $u$ and $p$ is   matched before within the next two statements in the procedure.
\end{lemma}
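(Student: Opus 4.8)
The plan is to prove the lemma by a local inspection of the procedure descriptions of Section~\ref{subsec : Description_Procedures}, confining the argument to a single statement $S$ and the two statements that follow it. First I would make explicit the reading of the hypothesis that matches the surrounding prose: ``$u$ and $p$ become free after $S$'' for the edge $(u,p)$ means that $(u,p)$ was in $M$ and $S$ is the statement that removed it, i.e.\ the matching edge of two vertices is deleted and both of its endpoints become free. The next observation is that each statement of a procedure is either an $O(1)$ operation, a macro invocation (none of the macros in Section~\ref{sec:ds} touches $M$), or a procedure call; hence $M$ changes only at the explicit ``include $(\cdot,\cdot)$ in $M$'' and ``remove $(\cdot,\cdot)$ from $M$'' lines or, recursively, inside a called procedure. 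It therefore suffices to enumerate every statement $S$, across all the mutually recursive procedures, that deletes a matched edge, and, for each, to exhibit a rematch of one of its two endpoints within the next two statements.

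Going through the procedures, the statements that delete a matched edge are: (i) ``the edge $(u,v)$ is removed from the matching'' in \textit{Randomised-Raise-Level-To-1}; (ii) ``$(v,y)$ is removed from the matching'' in the common step of \textit{Fix-3-Aug-Path} and of \textit{Fix-3-Aug-Path-D}; (iii) the re-mating of an already matched vertex inside \textit{Random-Settle-Augmented}, which deletes the edge from that vertex to its previous mate; and (iv) the call to \textit{Randomised-Raise-Level-To-1} made in Case~1 (sub-cases (a),(b)) of \textit{Naive-Settle-Augmented}, which deletes the edge $(u,w)$ inserted two lines above --- this is an instance of~(i). For (i): the next two statements are the macro \textit{Take-Ownership}$(u)$ and the call \textit{Random-Settle-Augmented}$(u)$; since $deg(u)\ge\sqrt n$, after \textit{Take-Ownership}$(u)$ we have $|O_u|\ge\sqrt n>0$, so \textit{Random-Settle-Augmented}$(u)$ selects a random edge of $O_u$ and inserts it into $M$, matching $u$ --- thus $u$ is matched within the next two statements, and this is exactly where the bound ``two'' rather than ``one'' is needed. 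For (ii): the statement immediately following the deletion inserts $(u,v)$ into $M$, so $v$ (and $u$) is matched within one statement. For (iii): the deletion of $(y,mate(y))$ and the insertion of $(u,y)$ occur at consecutive statements (or are a single replace), so $y$ is matched within one statement; the same applies to the deletion of $(u,mate(u))$ should \textit{Random-Settle-Augmented}$(u)$ ever be entered with $u$ matched. In every case at least one endpoint of the deleted edge is matched within the next two statements, which is the claim.

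Two remarks finish the argument. The assumption that every vertex is clean at the start of the update is what licenses the stated preconditions of the procedures (so, e.g., the mate $y$ of $v$ in \textit{Fix-3-Aug-Path} is well defined and matched, and the vertex $u$ in \textit{Randomised-Raise-Level-To-1} satisfies $deg(u)\ge\sqrt n$), and these preconditions are what make the rematches above go through. Also, the adversary's deletion of an edge is performed in the update function \textit{Delete}, not ``in some procedure'', so it is outside the scope of the lemma; this is necessary, since after the adversary deletes a matched level-$0$ edge neither of its endpoints need be rematched within two statements (or ever).

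The step I expect to be the main obstacle is proving that the enumeration in the second paragraph is exhaustive: because \textit{Fix-3-Aug-Path}, \textit{Randomised-Raise-Level-To-1}, \textit{Random-Settle-Augmented}, \textit{Naive-Settle-Augmented} and \textit{Handle-Delete-Level1} invoke one another through many sub-cases, one must be certain no matched-edge deletion has been overlooked, and one must track carefully --- in particular through Cases~1a, 1b and 2 of \textit{Fix-3-Aug-Path} --- which of the four vertices $u,v,y,z$ is matched after which step. One also has to commit to the counting convention that a procedure call (resp.\ a macro invocation) is a single statement, whose effects, including any rematch it performs, are charged to that statement.
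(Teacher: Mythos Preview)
Your proof handles only one of the two cases needed. You explicitly restrict the hypothesis to ``the matching edge $(u,p)$ is deleted at $S$, so both endpoints become free simultaneously'', but the lemma as stated (and as applied in Lemma~\ref{lem:safez}) must also cover the case where $u$ and $p$ are adjacent vertices that became free at \emph{different} statements during the same update --- say $u$ earlier, then $p$ --- with neither ever having been the other's mate. This case does arise: before the update all vertices are clean, so by Invariant~\ref{invariant_1b} no two adjacent vertices are both free; during the update, $u$ may lose its mate $u'\neq p$, fail to find a new mate, and be inserted into the $F$-lists of its neighbours, and only later does $p$ lose its mate $p'\neq u$. At the statement $S$ that frees $p$, both $u$ and $p$ are free and $(u,p)$ is an edge, yet your enumeration of matched-edge deletions says nothing about this situation, because the edge removed at $S$ is $(p,p')$, not $(u,p)$.

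The paper's own proof splits precisely on this dichotomy. The case you analyze --- a single statement frees both endpoints of an edge, which forces that edge to have been the deleted matching edge --- is its second case, and your per-procedure analysis there agrees with the paper's. The first case, which you omit, is argued via the free-neighbour lists: when $u$ became free earlier and failed to be matched, \textit{Insert-To-F-List}$(u)$ put $u$ into $F(w)$ for every neighbour $w$; in particular $u\in F(p)$ when $p$ subsequently becomes free, and the procedure that next attempts to settle $p$ finds a free neighbour and matches $p$. Without this half of the argument the lemma does not go through, and neither does its use in Lemma~\ref{lem:safez}, where the free vertex $z$ and its putative free neighbour need not have been each other's mates.
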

\begin{proof}
Let us assume that during an update two vertices $u$ and $p$ are free and $(u,p)$ is an edge. Let us consider the statement  just after which  both $u$ and $p$ are free. We now consider two cases:\\
Without loss of generality let us consider the case that $p$ became free after $u$.   Consequently, just before the statement in which $p$ becomes free the vertex $u$ is present in $F(p)$. The reason for this is as follows.  Consider the last statement after which $u$ became free before the current statement after which $p$ has become free.  We know from the design of the procedures that there would have been an attempt to find a mate for $u$ (using \textit{Naive-Settle-Augmented} or \textit{Random-Settle-Augmented}).  However, since $u$ failed to be matched it would have been inserted into the free neighbour list of all its neighbours.  Therefore, $u$ is in $F(p)$ just before the statement after which $p$ has become free.  Once $p$ has become free, from the description of the procedures in Section \ref{algorithm} we know that it is immediately addressed by the procedures. These procedures immediately try to find a mate for $p$ and hence they would have matched $p$ to some free neighbour, which definitely exists in this case, since $u$ is in $F(p)$.  Therefore, in this case the lemma is proved. \\
Secondly, let us consider the case when there is  a single statement after which $p$ and $u$ are free. In this case, it means that the edge $(u,p)$ was in the matching and it was removed from the matching.  Such a statement occurs in procedures \textit{Random-Raise-Level-To-1}, \textit{Random-Settle-Augmented}, \textit{Fix-3-Aug-Path} and \textit{Fix-3-Aug-Path-D}. In the cases of  \textit{Fix-3-Aug-Path} and \textit{Fix-3-Aug-Path-D} it is ensured that both the vertices are immediately matched after $(u,p)$ is removed from the matching.  In the case of \textit{Random-Settle-Augmented} such a removal of an edge form the matching is followed by matching one end point, say $u$,  to a different vertex, and the \textit{Fix-3-Aug-Path-D} called inside \textit{Random-Settle-Augmented} will ensure that $u$ continues to be matched.  Finally, in \textit{Random-Raise-Level-To-1}, one of the two free vertices, say $u$, is forcibly matched in a call to \textit{Random-Settle-Augmented}, and on return at least one of $u$ and $p$ is matched.  Hence the lemma.
\end{proof}
As a consequence of Lemma \ref{lem:noadj} we prove the following lemma to ensure that if we remove a 3 length augmenting path, then the newly matched vertices are not in a 3 length augmenting path with respect to the new matching.
\begin{lemma}
\label{lem:safez}
Let $u$ be a free vertex such that each neighbour is matched, let $v$ be a neighbour of $u$, and let $y = mate(v)$.  During an update if the macro \textit{Check-3-Aug-Path$(u, v)$ } finds a non-NULL $z$ which is a free neighbour of $y$, then each neighbour of $z$ is matched in the matching maintained during the update.  Therefore, after \textit{Fix-3-Aug-Path$(u,v,y,z)$} and \textit{Fix-3-Aug-Path-D$(u,v,y,z)$} there is no
3 length augmenting path involving the vertices $u$ and $z$.  
\end{lemma}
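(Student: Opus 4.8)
The plan is to reduce the second assertion to the first by a direct application of Observation~\ref{observation : aug-path}, and to prove the first assertion --- that $z$ has no free neighbour --- by contradiction, using Lemma~\ref{lem:noadj} together with the discipline (described in Section~\ref{algorithm}) that a procedure repairs a violated invariant before doing anything else. For the reduction: the only change to the matching in the common step of Fix-3-Aug-Path$(u,v,y,z)$ and of Fix-3-Aug-Path-D$(u,v,y,z)$ is to delete $(v,y)$ and insert $(u,v)$ and $(y,z)$. So just before these insertions $u$ is free with every neighbour matched (a hypothesis) and, granting the first assertion, $z$ is free with every neighbour matched. Applying Observation~\ref{observation : aug-path} once to the edge $(u,v)$ and once to the edge $(y,z)$ shows that neither becomes the matched edge of a $3$ length augmenting path; since both $u$ and $z$ are matched after the common step, any $3$ length augmenting path through $u$ or $z$ would have one of these two edges as its matched edge, so none exists.

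It remains to prove that $z$ has no free neighbour. First note $z \notin N(u)$: otherwise the free vertex $z$ would be a free neighbour of $u$, contradicting the hypothesis that every neighbour of $u$ is matched; hence $u \notin N(z)$, so any free neighbour $w$ of $z$ would satisfy $w \neq u$. Suppose, for contradiction, that such a $w$ exists. Then $(z,w)$ is an edge both of whose endpoints are free at the moment the macro Check-3-Aug-Path$(u,v)$ is executed. Since every vertex is clean at the start of the update --- the inductive hypothesis under which correctness is being proved --- Invariant~\ref{invariant_1b} forbids $(z,w)$ from having both endpoints free before the update began; therefore some statement of the update is the first after which $z$ and $w$ are simultaneously free, and by Lemma~\ref{lem:noadj} at least one of $z, w$ is matched within the next two statements of the procedure where this occurs.

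The crux is to conclude that this re-matching has already taken effect when Check-3-Aug-Path$(u,v)$ runs. I would argue this from the design: the creation of the free pair $(z,w)$ is an instance of a violated Invariant~\ref{invariant_1b}, which is repaired before the procedure executes any further statement (by at most two assignments, or by a nested call that returns with the invariant restored), and the macro Check-3-Aug-Path is only ever reached while a procedure is still resolving the lone free vertex passed as its first argument, every other vertex being clean at that point. Hence $z$ is clean when Check-3-Aug-Path$(u,v)$ is executed, so by Invariant~\ref{invariant_1b} every neighbour of $z$ is matched, contradicting the existence of $w$; this proves the first assertion and, with the reduction above, the lemma. The genuinely delicate step is to make ``Check-3-Aug-Path is never reached with an outstanding, unrepaired free pair'' precise across the mutually recursive procedure calls; I would do this by an induction on the call structure that runs in parallel with the termination argument, invoking Lemma~\ref{lem:noadj} at each point where an edge leaves the matching.
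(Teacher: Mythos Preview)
Your proof is correct and shares the paper's two main ingredients --- Lemma~\ref{lem:noadj} for the first assertion and Observation~\ref{observation : aug-path} for the reduction to the second --- but you make the first assertion harder than necessary. You locate the \emph{earliest} statement after which $z$ and $w$ are simultaneously free, invoke Lemma~\ref{lem:noadj} there, and are then obliged to argue that the promised repair has already taken effect by the time the macro runs; you rightly flag this last step as delicate and propose an auxiliary induction over the call structure to close it.

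The paper sidesteps this entirely by applying Lemma~\ref{lem:noadj} \emph{at the statement immediately preceding the macro} rather than tracing back. If $z$ and some neighbour $p$ are both free after that statement, Lemma~\ref{lem:noadj} forces one of them to be matched within the next two statements --- but those next statements are the (inlined) body of \textit{Check-3-Aug-Path}, none of which touches the matching. That is already the contradiction. So the step you propose to handle by a parallel induction simply evaporates once the point of application of Lemma~\ref{lem:noadj} is chosen forward rather than backward. Your explicit remark that $z\notin N(u)$ (hence any free neighbour $w$ of $z$ is distinct from $u$) is a useful sanity check that the paper leaves implicit.
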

\begin{proof}
Since $z$ is free after the execution of the macro \textit{Check-3-Aug-Path$(u, v)$}, $z$ would have been free prior to the execution of  \textit{Check-3-Aug-Path$(u, v)$}.   The reason for this is that \textit{Check-3-Aug-Path$(u,v)$} does not change the matching and only looks for a free vertex.  Therefore, prior to the execution of \textit{Check-3-Aug-Path$(u,v)$} $z$ is free.  We now prove that just  before the execution of \textit{Check-3-Aug-Path$(u, v)$}, $z$ does not have a free neighbour. On the other hand, let us assume that  $z$ is free and has a free neighbour $p$ just before execution of \textit{Check-3-Aug-Path$(u, v)$}.   From Lemma \ref{lem:noadj}, it follows that within the following two statements either $z$ or $p$ would have been matched.   However, we know that none of the statements in \textit{Check-3-Aug-Path$(u, v)$} changes the matching.  Therefore, our assumption that $z$ has a free neighbour just before the execution of \textit{Check-3-Aug-Path$(u, v)$} is wrong.  Consequently, just before the execution of \textit{Check-3-Aug-Path$(u, v)$} $z$ is free and all its neighbours would be matched vertices.  This proves the first statement.

Secondly, since $u$ and $z$ have no free neighbours, from Observation \ref{observation : aug-path} it follows that after 
\textit{Fix-3-Aug-Path$(u,v,y,z)$} or \textit{Fix-3-Aug-Path-D$(u,v,y,z)$} there is no
3 length augmenting path involving the vertices $u$ and $z$.  Hence the lemma.
\end{proof}
\begin{lemma}
\label{thm:number-of-calls-perupdate}
Each update terminates after making a constant number of procedure calls.  The constant is at most 30.
\end{lemma}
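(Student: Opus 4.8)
The plan is to argue by a case analysis over the possible update types and to track, for each entry point, the finite "call tree" of procedure invocations that it can trigger. The key structural fact I would rely on is the design principle emphasized in the paper: each procedure, on entry, immediately repairs the single invariant it was called to fix, and every subsequent action either fixes at most one newly-violated invariant via at most one further procedure call, or does purely local (macro) work that violates no invariant requiring a call. So the depth of the call tree is bounded, and the branching at each node is bounded, hence the total number of nodes is a constant; counting them carefully gives the bound 30.

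First I would set up the "who-calls-whom" digraph extracted from Table~\ref{table : Procedure_Calls} and the procedure descriptions: \textit{Naive-Settle-Augmented} may call a \textit{Raise} procedure or a \textit{Fix} procedure; \textit{Random-Settle-Augmented} may call \textit{Fix-3-Aug-Path-D}; the \textit{Raise} procedures may call \textit{Random-Settle-Augmented}, \textit{Handle-Delete-Level1}, or \textit{Naive-Settle-Augmented}; \textit{Fix-3-Aug-Path} may call \textit{Randomised-Raise-Level-To-1}, \textit{Handle-Delete-Level1}, or \textit{Naive-Settle-Augmented}; \textit{Fix-3-Aug-Path-D} makes no further procedure call (only macros); and \textit{Handle-Delete-Level1} may call \textit{Random-Settle-Augmented} or \textit{Naive-Settle-Augmented}. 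The crucial observation that prevents this digraph from supporting an unbounded walk is that the only genuine cycles go through \textit{Random-Settle-Augmented} (the lone place a randomly-chosen mate is committed), and by Observation~\ref{obs : flag-change-0-to-1}, once \textit{Random-Settle-Augmented} runs the \texttt{flag} is set to $1$; after that only the deterministic \textit{Raise}/\textit{Fix} variants are used, and \textit{Deterministic-Raise-Level-To-1} and \textit{Fix-3-Aug-Path-D} are sinks (they call no procedure). Hence at most one call to \textit{Random-Settle-Augmented} occurs per update, which caps the recursion.

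With that, I would bound each branch. In the "pre-flag" regime an update entry point invokes at most a constant chain before reaching the unique \textit{Random-Settle-Augmented} call (or terminating earlier); inside \textit{Random-Settle-Augmented} at most one \textit{Fix-3-Aug-Path-D} fires and that is a leaf; on return, each of the constantly many "dangling" free vertices created (the old mate $v$, the returned vertex $x$, and — in \textit{Randomised-Raise-Level-To-1} called from inside \textit{Fix-3-Aug-Path} — possibly one more) is handled by exactly one call to \textit{Naive-Settle-Augmented} or \textit{Handle-Delete-Level1}, and each of those, since \texttt{flag}$=1$, only reaches deterministic sinks within a bounded number of further hops. Lemma~\ref{lem:safez} is what guarantees that a \textit{Fix} call removing a 3-augmenting path does not spawn a fresh 3-augmenting path, so no \textit{Fix} procedure re-invokes another \textit{Fix} in a way that could iterate. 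Tallying the worst case — the insertion/deletion dispatcher contributes $O(1)$ calls, the one \textit{Raise}, the one \textit{Random-Settle-Augmented}, its one \textit{Fix-3-Aug-Path-D}, and the at most three deterministic clean-up subtrees each of bounded size — yields a total that one checks to be at most $30$.

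The main obstacle I anticipate is purely bookkeeping rather than conceptual: making the case enumeration genuinely exhaustive and verifying the "at most one new violated invariant, repaired by at most one call" claim for every case and sub-case of \textit{Naive-Settle-Augmented}, \textit{Fix-3-Aug-Path}, and \textit{Handle-Delete-Level1} — in particular ruling out the scenario where, after a call returns, \emph{two} distinct vertices are simultaneously dirty in a way that forces two independent recursive calls whose own subtrees then interact. Lemma~\ref{lem:noadj} (the "atomic re-matching" property) is the tool that controls exactly this: whenever an edge removal frees two adjacent vertices, at least one is re-matched within two statements, so the set of vertices left dirty after any single procedure step is small and structurally constrained. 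Once that is in hand, the constant-depth/constant-fanout argument closes and the explicit count of $30$ follows by summing the contributions along the worst-case path through the digraph.
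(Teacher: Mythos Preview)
Your overall strategy---build the call digraph, use the \texttt{flag} mechanism to break cycles, then count---is exactly the paper's approach, and the bottom-up tally the paper carries out (items 1--8 in its proof) is what you sketch at the end. However, your key structural claim is wrong: it is \emph{not} true that ``at most one call to \textit{Random-Settle-Augmented} occurs per update.'' Inspect \textit{Handle-Delete-Level1}$(u,\textit{flag})$: when $|O_u|\ge\sqrt{n}$ it calls \textit{Random-Settle-Augmented}$(u)$ \emph{regardless of the value of} \texttt{flag}. So, for instance, inside a single \textit{Randomised-Raise-Level-To-1}$(u)$ one first calls \textit{Random-Settle-Augmented}$(u)$, and if the returned $x$ is at level~1 one calls \textit{Handle-Delete-Level1}$(x,1)$, which may call \textit{Random-Settle-Augmented}$(x)$ again. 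Likewise a level-1 delete fires \textit{Handle-Delete-Level1} twice, each of which may invoke \textit{Random-Settle-Augmented}. Your argument that ``after \textit{Random-Settle-Augmented} runs, only deterministic sinks are used'' therefore fails: \textit{Random-Settle-Augmented} is itself not gated by \texttt{flag}.

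The correct cycle-breaking observation (implicit in the paper's case split between \texttt{flag}$=0$ and \texttt{flag}$=1$ versions of each procedure) is that the procedures with \emph{deep} subtrees---\textit{Randomised-Raise-Level-To-1} and \textit{Fix-3-Aug-Path}---are the ones that are \texttt{flag}-gated: they are only ever called when \texttt{flag}$=0$. Once \texttt{flag}$=1$, every subsequent call is to one of \textit{Deterministic-Raise-Level-To-1}, \textit{Fix-3-Aug-Path-D} (both leaves), \textit{Random-Settle-Augmented} (depth~1), \textit{Naive-Settle-Augmented}$(\cdot,1)$ (depth~1), or \textit{Handle-Delete-Level1}$(\cdot,1)$ (depth~$\le 4$), and these are what the paper counts bottom-up to get 8, 12, 13, 14 for the \texttt{flag}$=0$ entry points and hence at most 30 for a delete. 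Your invocation of Lemmas~\ref{lem:noadj} and~\ref{lem:safez} is also unnecessary here: termination follows purely from the call-graph structure and the \texttt{flag} discipline; those lemmas are used for correctness (Theorem~\ref{thm : clean_update}), not for the call-count bound.
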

\begin{proof}
Our proof approach is to show that  for each procedure call, say $P$, from a calling procedure $Q$, the control returns to $Q$ after making at most a constant number of procedure calls.  Assuming this is true, we show that the update functions {\em insert} and {\em delete} terminate by making at most a constant number of procedure calls.  This claim is immediately seen to be true, because during run-time the updates select exactly one case to execute.  Each case in an update is a constant length sequence of procedure calls.  Thus if each of the procedure call returns to the calling function after making at most a constant number of procedure calls, it follows that the updates terminate after making at most a constant number of procedure calls.  Further, a crucial observation is that the statements other than the procedure calls in each procedure all terminate as the loops are all of finite length, and consist only of assignment statements.  We now show that each procedure call returns after making at most a constant number of procedure calls.   Our proof is based on the description of the procedures, Table~\ref{table : Procedure_Description}, and the function call graph described in   Table~\ref{table : Procedure_Calls}.  It is immediately true that each case in each procedure has at most a constant number of function calls. However, the run-time analysis crucially depends on the value of $flag$.  We now do case-wise analysis of the number of procedure calls  made at run-time in each procedure.  We present the procedures in non-decreasing order of the number of procedure calls made.
\begin{enumerate}
\item Procedures \textit{Deterministic-Raise-Level-To-1} and \textit{Fix-3-Aug-Path-D} do not call any other procedure. Therefore, they return to the calling procedure on completion.  
\item The only procedure call made by \textit{Random-Settle-Augmented} is to procedure \textit{Fix-3-Aug-Path-D}. Therefore, a call to  \textit{Random-Settle-Augmented} returns to the calling procedure after making at most one more call.
\item Procedure \textit{Naive-Settle-Augmented} with $flag$ value $1$ either returns without making another procedure call
 or it makes a call to either \textit{Deterministic-Raise-Level-To-1} or \textit{Fix-3-Aug-Path-D}. Therefore, a call to \textit{Naive-Settle-Augmented} with $flag$ value $1$ returns to the calling procedure after making at most one more call.   
\item Procedure \textit{Handle-Delete-Level1} with $flag$ value $1$ makes at most one call to procedure \textit{Random-Settle-Augmented} and at most one call to \textit{Naive-Settle-Augmented} with $flag$ value $1$. Therefore, a call to \textit{Handle-Delete-Level1} with $flag$ value $1$ returns to the calling procedure after making at most four more calls.   
\item Procedure \textit{Random-Raise-Level-To-1} makes a call to \textit{Random-Settle-Augmented}. From item $2$ above, a call to \textit{Random-Settle-Augmented} returns to the calling procedure after making at most one more call. After this \textit{Random-Raise-Level-To-1} makes a call to either \textit{Handle-Delete-Level1} with $flag$ value $1$ or \textit{Naive-Settle-Augmented} with $flag$ value $1$. From item $3$ above, a call to \textit{Naive-Settle-Augmented} with $flag$ value $1$ terminates with at most one more call. From item $4$ above, a call to procedure \textit{Handle-Delete-Level1} with $flag$ value $1$ returns to the calling procedure after making at most four more calls. Finally, \textit{Random-Raise-Level-To-1} makes at most one more call to \textit{Naive-Settle-Augmented} with $flag$ value $1$. Therefore, a call to \textit{Random-Raise-Level-To-1} returns to the calling procedure after making at most eight more calls.
\item Procedure \textit{Fix-3-Aug-Path} either returns to the calling procedure without making any further calls or makes a call to procedure \textit{Random-Raise-Level-To-1}. From item $5$ above, a call to \textit{Random-Raise-Level-To-1} returns to the calling procedure after making at most eight more calls.  Following this it either makes  a call to either \textit{Handle-Delete-Level1} with $flag$ value $1$ or \textit{Naive-Settle-Augmented} with $flag$ value $1$. From items $4$ and $3$ above, these two calls return after making four more and one more procedure call, respectively.  Therefore, \textit{Fix-3-Aug-Path} returns to the calling procedure after atmost 12 procedure calls.
\item Procedure \textit{Naive-Settle-Augmented} with $flag$ value $0$ either returns to the calling procedure without making any further procedure calls or does one of the following :
\begin{enumerate}
\item Makes a call to procedure \textit{Random-Raise-Level-To-1}. From item $5$ above, a call to \textit{Random-Raise-Level-To-1} returns after making at most eight more calls. After this, \textit{Naive-Settle-Augmented} with $flag$ value $0$ makes at most one more call to \textit{Naive-Settle-Augmented} with $flag$ value $1$. From item $3$, a call to \textit{Naive-Settle-Augmented} with $flag$ value $1$ returns after making at most one more call.
\item Makes a call to procedure \textit{Fix-3-Aug-Path}. From item $6$ above, a call to procedure \textit{Fix-3-Aug-Path} returns after making at most twelve more calls.
\end{enumerate}
Therefore, a call to \textit{Naive-Settle-Augmented} with $flag$ value $0$ returns to the calling procedure after making at most thirteen more calls.    
\item Procedure \textit{Handle-Delete-Level1} with $flag$ value $0$ does one of the following :
\begin{enumerate}
\item Makes a call to \textit{Random-Settle-Augmented}. From item $2$ above, a call to \textit{Random-Settle-Augmented} returns to the after making at most one more call. After this \textit{Handle-Delete-Level1} with $flag$ value $0$ makes at most one more call to \textit{Naive-Settle-Augmented} with $flag$ value $1$. From item $3$ above, a call to \textit{Naive-Settle-Augmented} with $flag$ value $1$ which returns after making at most one more call. 
\item Makes a call to \textit{Naive-Settle-Augmented} with $flag$ value $0$. From item $7$, a call to \textit{Naive-Settle-Augmented} with $flag$ value $0$ returns after making at most thirteen more calls. 	
\end{enumerate}
Therefore, a call to \textit{Handle-Delete-Level1} with $flag$ value $0$ returns to the calling procedure after making at most fourteen more calls.
\end{enumerate} 
From the above analysis, procedure \textit{Handle-Delete-Level1} with flag value 0 makes the maximum number of procedure calls, which is 14 calls, before returning to the calling procedure.   The maximum number of procedure calls made by an insert is at most 13, and the maximum number of procedures called by a delete is at most 30.  Therefore, each update terminates by making at most 30 procedure calls.  Hence the lemma.
\end{proof} 
\begin{theorem}
\label{thm : clean_update}	
After the termination of each update all the vertices are clean.   Consequently, after each update a maximal matching without 3 length augmenting paths is maintained.
\end{theorem}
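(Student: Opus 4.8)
The plan is to prove this by structural induction on the tree of procedure calls triggered by a single update, a tree which by Lemma~\ref{thm:number-of-calls-perupdate} has at most $30$ nodes. The induction hypothesis I would attach to each procedure $P$ is a precise \emph{contract}: if $P$ is invoked in a state matching the row recorded for it in Table~\ref{table : Procedure_Calls}, then on return every vertex that $P$ (or any descendant call) touched is clean, with the sole exception of a bounded, explicitly named set of vertices whose repair $P$ has delegated to a child call whose stated precondition is met. The base case is the update function itself: starting from a fully clean configuration (the hypothesis of the theorem, proved inductively over the sequence of updates, with the empty graph as the initial clean state), inserting or deleting $(u,v)$ can violate only Invariants~\ref{invariant_1a}, \ref{invariant_1b}, \ref{invariant 2}, \ref{invariant 3}, or \ref{invariant 5}, and only at $u$, $v$, their former mates, or vertices on a newly exposed $3$-length augmenting path; each case in Sections~\ref{subsec : insertion} and \ref{subsec : deletion} names exactly the procedure called to repair each such vertex.

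For the inductive step I would walk through each procedure, case by case, following Table~\ref{table : Procedure_Description} and the prose of Section~\ref{subsec : Description_Procedures}, and check three things. First, that the \emph{initial} data-structure manipulation of $P$ restores the invariant $P$ was called to repair --- this is precisely column~4 of Table~\ref{table : Procedure_Calls}, and for the ownership invariants it is underwritten by Lemma~\ref{lem: ownership_edge} together with the macros \textit{Take-Ownership}, \textit{Transfer-Ownership-To}, and \textit{Transfer-Ownership-From}. Second, that the only vertices $P$ can newly dirty are of four controlled kinds --- a newly free level-$1$ vertex, a newly free level-$0$ vertex, a vertex whose degree or ownership list has just crossed $\sqrt{n}$, or a vertex on a freshly created $3$-length augmenting path --- and that in each case $P$ immediately invokes the child procedure whose contract repairs exactly that vertex, with the precondition demanded by the child matching the current state. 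Third, that \emph{no other} vertex is left dirty: the key points here are that every new mate assignment is accompanied by \textit{Delete-From-F-List} (so Invariant~\ref{invariant_1b} is not silently broken at neighbours), that \textit{Check-3-Aug-Path} is run whenever a vertex becomes newly matched (so Invariant~\ref{invariant 5} is re-checked), and that eliminating one $3$-length augmenting path creates no other, which is Lemma~\ref{lem:safez}, itself resting on Lemma~\ref{lem:noadj} and Observation~\ref{observation : aug-path}. The $flag$ mechanism is folded into the induction: by Observation~\ref{obs : flag-change-0-to-1}, once $flag=1$ only deterministic variants run, and these issue strictly fewer onward calls, so the induction does not loop.

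Propagating the contracts: the leaves of the call tree are \textit{Deterministic-Raise-Level-To-1} and \textit{Fix-3-Aug-Path-D}, which make no further calls, and a direct case check shows each leaves every vertex it touches clean. Working back up the tree, the top-level update returns with no dirty vertex, i.e. all of Invariants~\ref{invariant_1a}--\ref{invariant 5} hold. Invariants~\ref{invariant_1a} and \ref{invariant_1b} together say no edge joins two free vertices, so $M$ is maximal, and Invariant~\ref{invariant 5} says $G$ has no $3$-length augmenting path with respect to $M$; this gives the second sentence of the theorem.

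The step I expect to be the main obstacle is the third one --- tracking \emph{all} vertices a procedure can dirty as a side effect, especially the interaction between removing a matched edge (which exposes two free vertices at once) and the free-neighbour lists $F(\cdot)$, and ensuring that the cascade of \textit{Fix}/\textit{Settle}/\textit{Raise} calls terminates with Invariant~\ref{invariant 5} satisfied \emph{globally}, not merely at the four named vertices of the path just fixed. This is exactly where Lemmas~\ref{lem:noadj} and \ref{lem:safez} do the heavy lifting, and the argument must be threaded carefully through every sub-case of \textit{Fix-3-Aug-Path} (the level-$1$ sub-cases where $v$ or $y$ may be collaterally un-matched by the nested \textit{Randomised-Raise-Level-To-1} call) and through \textit{Randomised-Raise-Level-To-1} itself.
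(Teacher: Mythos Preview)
Your proposal is correct and follows essentially the same approach as the paper: induction over the update sequence with the empty graph as base case, relying on Lemma~\ref{thm:number-of-calls-perupdate} for termination and on Lemma~\ref{lem:safez} (via Lemma~\ref{lem:noadj} and Observation~\ref{observation : aug-path}) to guarantee that fixing one $3$-length augmenting path creates no new one. The paper's proof, however, stays at a much higher level than your plan: it does not carry out the structural induction over the call tree with explicit per-procedure contracts that you outline, but instead appeals globally to the design principle that ``each procedure immediately repairs any invariant it violates before continuing,'' points to Lemma~\ref{lem:safez} for Invariant~\ref{invariant 5}, and concludes. In effect, your proposal is a rigorous unpacking of what the paper treats as evident from the procedure descriptions in Section~\ref{subsec : Description_Procedures} and Table~\ref{table : Procedure_Calls}; the ``main obstacle'' you identify is precisely what the paper sweeps into that design principle.
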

\begin{proof}
We prove this by induction on the number of updates.  When the number of updates is 0, the graph is empty and all the vertices are clean. Let the claim be true after $i > 0$ updates. In other words words, after $i$ updates, all the vertices are clean.  We now prove that after the $i+1$-th update, whether it an Insert or delete, all the vertices are clean.  Our proof is by observing properties of the control flow in our procedures.  The violated invariants are propositions involving the size of the ownership list, vertex degree, matched or unmatched state of a vertex, level of a vertex, and the presence of a 3 length augmenting path.   Any violated invariant at a vertex is fixed by a combination of the operations involving the modification of its ownership list, changing its level,  finding a mate, and exchanging the matching and non-matching edges in a 3 length augmenting path. 
In each of these procedures in which these operations are performed, just after the operation, we check for the violation of an invariant  and invoke appropriate procedures to fix the violated invariants.   Further, since all vertices are clean at the beginning of the update and from Lemma \ref{lem:safez}, it follows that whenever a 3 length augmenting path is fixed, no new 3 length augmenting paths with respect to the new matching are created.   This is the reason why  in our procedures we do not check if fixing a 3 length augmenting path creates new 3 length augmenting paths.   Therefore, when the control exits from the update function, all vertices are clean.  From Lemma~\ref{thm:number-of-calls-perupdate} it follows that each update step terminates, and therefore at the end of $i+1$-th update all the vertices are clean.   
Since, all the vertices are clean, it follows that the neighbours of all free vertices are matched and there are no 3 length augmenting paths.  Therefore, the matching at the end of each update is a maximal matching without 3 length augmenting paths.  Hence the Theorem is proved.
\end{proof}

\section{Analysis of the Expected Amortized Update Time}
\label{epochanalysis}
In this section we present our upper bound on the expected value of the total update time of our algorithm on an update sequence.  
For the asymptotic analysis of the expected total update time, we consider an extended update sequence which has the additional property that at the end the graph is empty.   The extended update sequence we consider for the analysis is obtained from the given update sequence by performing a sequence the delete updates at the end till all the edges are deleted.   Note that if the original update sequence had $t$ updates, then the extended update sequence with the additional deletes has at most $t'=2t$ updates.   Further, note that the update sequence starts from the empty graph. This is crucial in the proof of Theorem \ref{thm : clean_update}	where before any update all the vertices are clean.
\begin{observation}
\label{extendupdate}
The expected total update time for the given update sequence is at most the expected total update time for the extended sequence.  
\end{observation}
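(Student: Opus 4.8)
The plan is to couple the run of the algorithm on the given sequence $\sigma = \langle \mathrm{op}_1,\dots,\mathrm{op}_t\rangle$ with its run on the extended sequence $\sigma' = \langle \mathrm{op}_1,\dots,\mathrm{op}_t,\mathrm{op}_{t+1},\dots,\mathrm{op}_{t'}\rangle$, where $\mathrm{op}_{t+1},\dots,\mathrm{op}_{t'}$ are the additional deletions that empty the graph and $t' \le 2t$. First I would observe that the algorithm is online: the procedure calls triggered by the $i$-th update, and hence the time spent on it, depend only on $\mathrm{op}_1,\dots,\mathrm{op}_i$ and on the random bits consumed so far (the random choices made in \textit{Random-Settle-Augmented} and its randomized callers). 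Since $\sigma$ and $\sigma'$ agree on their first $t$ operations, we may use the same source of random bits in both runs; then by induction on $i \le t$ the data-structure state after update $i$ is identical in the two runs, and consequently the (random) time $T_i$ spent on update $i$ is the same random variable in both.

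Next I would note that every update takes nonnegative time, so if $X$ denotes the total time on $\sigma$ and $X'$ the total time on $\sigma'$ under this coupling, then
\[
X' = \sum_{i=1}^{t'} T_i \;=\; \sum_{i=1}^{t} T_i + \sum_{i=t+1}^{t'} T_i \;\ge\; \sum_{i=1}^{t} T_i \;=\; X
\]
pointwise, i.e.\ for every realization of the random bits. Taking expectations over the random bits gives $\mathbb{E}[X'] \ge \mathbb{E}[X]$, which is exactly the claim.

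The only subtlety worth spelling out is the validity of the coupling, namely that the first $t$ updates behave identically in both runs; this is immediate from the description of the procedures in Section~\ref{algorithm}, since no procedure inspects any future update, and from Lemma~\ref{thm:number-of-calls-perupdate} which guarantees each of these updates terminates. I do not expect a genuine obstacle here; the statement is a monotonicity observation and the argument is essentially the one above.
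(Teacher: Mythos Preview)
Your argument is correct. The paper does not actually prove this statement: it is stated as an \emph{Observation} without any accompanying proof, since the authors regard it as immediate from the fact that $\sigma$ is a prefix of $\sigma'$ and the algorithm is online. What you have written is precisely the standard justification one would give if asked to spell this out: couple the two runs on the same random bits, use that the algorithm's behaviour on update $i$ depends only on the first $i$ operations and the randomness consumed so far, conclude that the first $t$ updates incur identical cost in both runs, and then use nonnegativity of the remaining costs. So your proposal is not a different route so much as an explicit proof of something the paper leaves implicit; there is no gap.
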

In our analysis we crucially use that at the end of the update sequence the graph is empty.  
Our analysis is by extending the concept of \textit{epochs} (Definition~\ref{def:epoch}) from \cite{DBLP:BGS}. Our  approach is summarized in the following sequence of analysis steps:
\begin{itemize}
\item We show that total update time is given by the sum total of the creation time and termination time of the epochs associated with each procedure call.  
\item In Section~\ref{subsec: Procedures and associated epochs} we associate the computation time of different procedure calls with creation time and termination time of different epochs. 
\item To bound the total update time, we classify the epochs into level $0$ and level $1$. Properties of these epochs are presented in Section~\ref{subsec: Procedures and associated epochs}
.  Similar to \cite{DBLP:BGS} we use the fact that each level 0 epoch has a worst-case time of $O(\sqrt{n})$ associated with it.
\item We  then classify the level $1$ epochs into two types which we introduce, based on how they are created: Random Level 1 Epochs and Deterministic Level 1 Epochs.  The Deterministic level 1 epochs are further classified into Type 1 and Type 2 {\em inexpensive } epochs based on their contribution to total time in Section~\ref{subsubsec: Random-Deterministic-Level1-epochs}.  Here,  we bound the contribution of Type 1  Inexpensive Deterministic epochs to the total update time.  This  has an amortized cost of $O(\sqrt{n})$.
\item Finally, we define \textit{epoch-sets} (Definition~\ref{def:epoch-set}). An  epoch-set consists of one random level 1 epoch and a constant number of Deterministic epochs. These Deterministic epochs are not type 1 inexpensive, and are either type 2 inexpensive epochs or those which are not inexpensive epochs.   We then upper bound the expected value of total update time using linearity of expectation over the epoch-sets in Section~\ref{subsec : expected-total-update-time}. We then analyse the worst case total update time of our algorithm with high probability in Section~\ref{subsec:worstcase-total-updatetime}. 
\end{itemize}
\begin{definition}
\label{def:epoch}	
\cite{DBLP:BGS}	
At a time instant $t$, let $(u,v)$ be an edge in the matching $M$. The epoch defined by edge $(u,v)$ at time $t$ is the maximal continuous time interval such that the interval contains $t$ and during the interval $(u,v)$ is  in $M$. An epoch is said to be a level $0$ epoch or level $1$ epoch depending upon the level of the matched edge that defines the epoch.	
We refer to the an epoch by the matching edge associated with the epoch. For example, when we refer to the epoch $(u,v)$ we mean an epoch associated with the matching edge $(u,v)$.  We also refer to an epoch at a vertex $u$. This refers to an epoch $(u,v)$.  
\end{definition}
{\bf Total Update Time via  time associated with the creation and termination of Epochs:}
If we fix an edge $(u,v)$ and consider the time period from the first insertion of edge $(u,v)$ in the graph till its final deletion from the graph by the extended update sequence, then this period consists of a sequence of epochs separated by the maximal continuous time periods during which $(u,v)$ is not in the matching.  Note that this sequence could even be empty, and this happens if $(u,v)$ is not in any matching maintained by the algorithm throughout all the updates.  From the description of insertion in section~\ref{subsec : insertion} and deletion in section~\ref{subsec : deletion}, it is clear that any update operation that does not change the matching is processed in $O(1)$ time.
Further, if an update changes the matching, then the change is done by a sequence of procedure calls.  Each such procedure call changes  the matching by adding or deleting edges from the matching.  Consequently, each procedure call is associated with the creation of some new epochs and the termination of some existing epochs.    We  associate the total computation performed for every update operation with the creation and termination of different epochs which takes place inside different procedure calls during processing of the update.  From Lemma \ref{thm:number-of-calls-perupdate} we know that each update terminates after a constant number of procedure calls, and thus each update creates and terminates at most a constant number of epochs.   We formally use this observation after the necessary set-up to bound the expected value of the total update time in Section ~\ref{subsec : expected-total-update-time}.
\subsection{Epochs associated with each procedure }
\label{subsec: Procedures and associated epochs}
In this section we associate a set of epochs with every procedure.  The time spent in the procedure is suitably distributed to the creation and termination of the epochs.  
This association is based on the description of the procedures in Section~\ref{subsec : Description_Procedures} and the description of \textit{Handle-Insert-Level0} in Section~\ref{subsec : insertion}. in Table \ref{table : Procedure_Calls_VS_Epocs} we present the epochs created and terminated by the procedures and \textit{Handle-Insert-Level0}.
The key property that we ensure is that the computation time of a  procedure is associated to the  creation and termination of suitably identified level 0 and level 1 epochs.
 In most cases, the addition or removal of a level 0 or level 1 matched edge from the matching in the body of a procedure (meaning, not in the procedure calls made inside it), corresponds to the epochs created or terminated by the procedure. 
 The time associated with the creation and termination of the epochs is the time spent, in the procedure or just before entry into the procedure, towards the addition or removal of matching edges.  
\begin{table}[h!]
	\centering
	\begin{tabular}{|m{3cm}|m{5cm}|m{5cm}|m{2cm}|}
		\hline
		\textbf{Procedure} & \textbf{Condition} & \centering \textbf{Associated Epoch} & \textbf{Computation Time}\\
		\hline
		\multirow{3}{3cm}{Naive-Settle-Augmented(u,flag)}
		& $F(u)$ is empty and there is no $3$ length augmenting path starting at $u$. A call to this has happened from Handle-Delete-Level1$(u,flag)$ & Termination of level $1$ epoch $(u,u^{\prime})$ where $u^{\prime}$ is previous mate of $u$ at level $1$  & $O(deg(u))$\\
		\cline{2-4}
		& $F(u)$ is empty and there is no $3$ length augmenting path starting at $u$ & Termination of level $0$ epoch $(u,u^{\prime})$ where $u^{\prime}$ is previous mate of $u$ at level $0$ & $O(\sqrt{n})$\\
		\cline{2-4}
		& $w \in F(u)$ and $deg(u) < \sqrt{n}$ and $deg(w) < \sqrt{n}$ & Creation of level $0$ epoch $(u,w)$ & $O(\sqrt{n})$ \\
		\hline
		Random-Settle-Augmented(u)
		& An edge is selected uniformly at random from $O_u$, say $(u,y)$. Edge $(u,y)$ is included in the matching and Level of $u$ and $y$ are $1$ & Creation of level $1$ epoch $(u,y)$ and if $y$ was matched then termination of epoch $(y,mate(y))$ & $O(deg(u) 
		+deg(y))$ \\
		\hline
		Deterministic-Raise-Level-To-1(u)
		& Levels of $u$ and mate of $u$, say $u^{\prime}$ are changed to $1$ & Creation of level $1$ epoch $(u,u^{\prime})$ where $u^{\prime}$ is mate of $u$ & $O(deg(u) 
		+deg(u^{\prime}))$
		\\
		\hline
		Randomised-Raise-Level-To-1(u)
		& Level of $u$ is $0$, $u$ is matched and $deg(u) \geq \sqrt{n}$. This subsequently calls Random-Settle-Augmented(u). &
		Termination of epoch $(u,mate(u))$ and creation of level 1 epoch $(u,y)$  & $O(deg(u))$ 
		\\
		\hline
		Fix-3-Aug-Path-D$(u,v,y,z)$ & $u$ is in $F(v)$, $v$ is matched and Level of $v$ is 1 or 0. mate of $v$ is $y$ and $z$ is in $F(y)$ & Termination of epoch $(v,y)$ and creation of level $1$ epochs $(u,v)$ and $(y,z)$ & $O(deg(u)+deg(z))$
		\\
		\hline
		\multirow{7}{3cm}{Fix-3-Aug-Path(u,v,y,z)}
		& \emph{common condition : mate of $v$ is $y$, $z$ is in $F(y)$} & &\\
		\cline{2-4}
		& Level of $v$ is $1$, 
		$deg(u) \geq \sqrt{n}$
		& Termination of level 1 epoch $(v,y)$ and creation of level $1$ epoch $(y,z)$
		& $O(deg(z))$ \\
		\cline{2-4}
		& Level of $v$ is $1$, 
		$deg(u) < \sqrt{n}$ and $deg(z) \geq \sqrt{n}$
		&
		Termination of level 1 epoch $(v,y)$ and
	    creation of level $1$ epoch $(u,v)$
		 & $O(deg(u))$ \\
		 \cline{2-4}
		& Level of $v$ is $1$, $deg(u) < \sqrt{n}$ and $deg(z) < \sqrt{n}$ & Termination of level 1 epoch $(v,y)$ and creation of level $1$ epochs $(u,v)$ and $(y,z)$ & $O(\sqrt{n})$
		\\
		\cline{2-4}
		& Level of $v$ is $0$, $deg(u) \geq \sqrt{n}$ and $deg(z) < \sqrt{n}$ & Termination of level 0 epoch $(v,y)$ and creation of level $0$ epochs $(y,z)$ & $O(\sqrt{n})$
		\\
		\cline{2-4}
		& Level of $v$ is $0$, $deg(u) < \sqrt{n}$ and $deg(z) \geq \sqrt{n}$ & Termination of level 0 epoch $(v,y)$ and creation of level $0$ epochs $(u,v)$ & $O(\sqrt{n})$
		\\
		\cline{2-4}
		& Level of $v$ is $0$, $deg(u) < \sqrt{n}$ and $deg(z) < \sqrt{n}$ & Termination of level 0 epoch $(v,y)$ and creation of level $0$ epochs $(u,v)$ and $(y,z)$ & $O(\sqrt{n})$
		\\
		\hline
		Handle-Delete-Level1(u,flag)
		& Level of $u$ changes to $0$ and $u$ remains as a free vertex & Termination of level $1$ epoch $(u,u^{\prime})$ where $u^{\prime}$ is previous mate of $u$ & $O(deg(u))$
		\\
		\hline
		\multirow{2}{3cm}{Handle-Insert-Level0(u,v)}
		& Levels of $u$ and $v$ are $0$, $u$ and $v$ are free, and $|O_u| < \sqrt{n}$,$|O_v| < \sqrt{n}$, $deg(u) < \sqrt{n}$ and $deg(v) < \sqrt{n}$  & Creation of level $0$ epoch $(u,v)$  & $O(\sqrt{n})$\\
		\hline
\end{tabular}
\caption{Association of  procedures in Section \ref{algorithm} and \textit{Handle-Insert-Level0} with the corresponding Epochs and their creation or termination time}
\label{table : Procedure_Calls_VS_Epocs}
\end{table}
In the following lemmas we present the computation time associated with the creation and termination of level $0$ and level $1$ epochs.  The main aim of the the lemmas is to present a detailed description of the fact that all the computation time during the updates is associated the creation and termination of different epochs. 
\begin{lemma}
\label{lem: Computation-Time-Level0-epochs}
The computation time associated with the creation of a level $0$ epoch is $O(\sqrt{n})$ and termination of a level $0$ epoch is $O(\sqrt{n})$.
\end{lemma}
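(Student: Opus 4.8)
The plan is to establish the bound by inspecting the rows of Table~\ref{table : Procedure_Calls_VS_Epocs} that involve level~$0$ epochs, and bounding the computation time listed for each such row using the invariants from Section~\ref{subsec : Invariatnt}. The key observation is that every creation or termination of a level~$0$ epoch $(u,v)$ happens in the body of one of the procedures \textit{Naive-Settle-Augmented}, \textit{Fix-3-Aug-Path}, or \textit{Handle-Insert-Level0} (see the table), and in each such case both endpoints are at level~$0$ at the moment the matching edge is added or removed.

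First I would handle \emph{creation} of a level~$0$ epoch $(u,v)$. From the table this occurs in Case~3 of \textit{Naive-Settle-Augmented} (with $deg(u)<\sqrt n$ and $deg(w)<\sqrt n$), in the level-$0$ sub-cases of \textit{Fix-3-Aug-Path}, and in \textit{Handle-Insert-Level0}. In all these cases the guard of the case explicitly asserts that the degrees of the newly matched level~$0$ vertices are below $\sqrt n$. The cost consists of the macros \textbf{Delete-From-F-List} (or \textbf{Insert-To-F-List}) and, where relevant, \textbf{Check-3-Aug-Path}; each \textbf{*-F-List} macro costs $O(deg(\cdot))$ and \textbf{Check-3-Aug-Path} costs $O(\sqrt n)$ by the data-structure description in Section~\ref{sec:ds}. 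Since both relevant vertices have degree $< \sqrt n$, the total is $O(\sqrt n)$.

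Next I would handle \emph{termination} of a level~$0$ epoch $(u,u')$. From the table this is the second row of \textit{Naive-Settle-Augmented} (the sub-case where $F(u)$ is empty and there is no $3$ length augmenting path at $u$, reached from a level-$0$ matched-edge deletion) and the level-$0$ sub-cases of \textit{Fix-3-Aug-Path}. The vertex $u$ (and $u'$) are at level~$0$ and were matched, so Invariant~\ref{invariant 3} guarantees $deg(u) < \sqrt n$ (and $deg(u') < \sqrt n$) \emph{immediately before} the edge was removed; the work done is a scan of $N(u)$ via \textbf{Check-3-Aug-Path} (cost $O(\sqrt n)$) together with \textbf{Insert-To-F-List}$(u)$ (cost $O(deg(u)) = O(\sqrt n)$), and similarly for $u'$. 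Hence the termination cost is $O(\sqrt n)$ as well.

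The main obstacle to be careful about is that at the instant an epoch is \emph{created} the degree bound of Invariant~\ref{invariant 3} does not yet apply (the invariant is only guaranteed between updates), so I cannot invoke it directly; instead I must read the bound off the explicit case guards in the procedure descriptions, which is exactly why the algorithm tests $deg(u) \ge \sqrt n$ before matching at level~$0$ and raises the level otherwise. Conversely, for \emph{termination} the relevant bound does follow from Invariant~\ref{invariant 3} applied just before the triggering deletion. Once these two sources of the degree bound are separated, every row of Table~\ref{table : Procedure_Calls_VS_Epocs} with a level~$0$ epoch has computation time $O(\sqrt n)$, and the lemma follows.
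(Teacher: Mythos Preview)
Your overall approach matches the paper's: an exhaustive case analysis over the procedures that create or terminate a level~$0$ epoch, using Invariant~\ref{invariant 3} (for termination) and the explicit degree guards in the procedure bodies (for creation) to bound every relevant degree by $\sqrt n$. Your handling of creation and your careful distinction between ``degree bound from the case guard'' versus ``degree bound from Invariant~\ref{invariant 3}'' are exactly right.

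However, your enumeration of the \emph{termination} cases is incomplete. You assert that every termination of a level~$0$ epoch happens in \textit{Naive-Settle-Augmented}, \textit{Fix-3-Aug-Path}, or \textit{Handle-Insert-Level0}, but this is false. A level~$0$ epoch is also terminated inside:
\begin{itemize}
\item \textit{Random-Settle-Augmented}$(u)$, when the randomly chosen mate $y$ was previously matched at level~$0$ to some $x$ (the row of Table~\ref{table : Procedure_Calls_VS_Epocs} says ``termination of epoch $(y,\mathit{mate}(y))$'' without specifying the level);
\item \textit{Randomised-Raise-Level-To-1}$(u)$, which removes the level~$0$ matched edge $(u,\mathit{mate}(u))$ before calling \textit{Random-Settle-Augmented};
\item \textit{Deterministic-Raise-Level-To-1}$(u)$, which ends the level~$0$ epoch $(u,\mathit{mate}(u))$ by promoting it to level~$1$;
\item \textit{Handle-Insert-Level0}$(u,v)$ in the sub-case $|O_u|=\sqrt n$, where the just-created level~$0$ edge $(u,v)$ is immediately removed.
\end{itemize}
The paper's proof lists all five termination cases explicitly. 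In each of the four you omit, the termination cost is only $O(1)$ (the paper charges the remaining work to the level~$1$ epoch that is simultaneously created), so your bound would survive; but as written your ``key observation'' is incorrect and the case analysis needs these additional entries to be complete.
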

\begin{proof}
We perform an exhaustive case analysis for the creation and termination of level $0$ epochs by identifying the specific operations that account for the time associated with the epochs.\\	
\textbf{Creation of Level $0$ epochs}:\\
An epoch in level $0$ is created in the body of the following procedure calls :
\begin{enumerate}
\item \textit{Naive-Settle-Augmented$(u,flag)$} : An  epoch is created by this procedure only if $u$ has a free neighbour $w$, and this is found and included in the matching in $O(\sqrt{n})$ time.  The only case where no other procedure call is made is when
$deg(u) < \sqrt{n}$ and $deg(w) < \sqrt{n}$, and in this case  $u$ and $w$ are removed from the free neighbour lists of all their neighbours. Total computation involved is $O(deg(u)+deg(w))$ = $O(\sqrt{n})$. This $O(\sqrt{n})$ computation time is associated with  creation of the level $0$ epoch $(u,w)$.
\item \textit{Fix-3-Aug-Path$(u,v,y,z)$}: At most two epochs are created  by this procedure where $u$ is a free vertex at level $0$, $v$ is a matched vertex at level $0$, $y$ is mate of $v$ and $z$($z \neq u$) is a free neighbour of $y$. Edge $(v,y)$ is removed from the matching and edges $(u,v)$ and $(y,z)$ are added to the matching in $O(1)$ time. If $deg(u) < \sqrt{n}$ then vertex $u$ is removed from free neighbour list of all its neighbours in $O(deg(u))$ = $O(\sqrt{n})$ time. This $O(\sqrt{n})$ computation is associated with creation of level $0$ epoch $(u,v)$. If $deg(z) < \sqrt{n}$ then vertex $z$ is removed from free neighbour list of all its neighbours in $O(deg(z))$ = $O(\sqrt{n})$ time. This $O(\sqrt{n})$ computation time is associated with  creation of the level $0$ epoch $(y,z)$.
\item \textit{Handle-Insert-Level0$(u,v)$} : Both $u$ and $v$ are free at level $0$. After the insertion of edge $(u,v)$, procedure will include $(u,v)$ in the matching in $O(1)$ time. If $|O_u| < \sqrt{n}$, $|O_v| < \sqrt{n}$, $deg(u) < \sqrt{n}$  and $deg(v) < \sqrt{n}$ then $u$ and $v$ are removed from the free neighbour lists of their neighbours in $O(deg(u)+deg(v))$ = $O(\sqrt{n})$ time. This $O(\sqrt{n})$ computation time is associated with the creation of level $0$ epoch $(u,v)$.
\end{enumerate} 
Therefore, the computation time associated with the creation of a level $0$ epoch is $O(\sqrt{n})$ in the worst case.\\
\textbf{Termination of Level 0 epochs}:\\
An epoch in level $0$ is terminated in the body of the following procedure calls :
\begin{enumerate}
\item \textit{Naive-Settle-Augmented} : Edge $(u,v)$ is a matched edge at level $0$. Therefore, $deg(u) < \sqrt{n}$ and $deg(v) < \sqrt{n}$. Edge $(u,v)$ is removed from the matching in $O(1)$ time. Then calls are made to procedure \textit{Naive-Settle-Augmented}$(u,0)$ and \textit{Naive-Settle-Augmented}$(v,0)$. If $u$ does not have a free neighbour and $u$ is not part of any $3$ length augmenting path, then $u$ is inserted to the free neighbour list of all its neighbours. Similarly, if $v$ does not have a free neighbour and $v$ is not part of any $3$ length augmenting path, then $v$ is inserted to the free neighbour list of all its neighbours. Total computation time is $O(deg(u)+ deg(v)) = O(\sqrt{n})$. This $O(\sqrt{n})$ computation time is associated with the termination of the level $0$ epoch $(u,v)$.
\item \textit{Handle-Insert-Level0(u,v)} : Both $u$ and $v$ are at level $0$ and free. After insertion of edge $(u,v)$, procedure will include $(u,v)$ in the matching in $O(1)$ time. 
If $|O_u|$ is equal to $\sqrt{n}$ then the edge $(u,v)$ is removed from the matching in $O(1)$ time. So this terminates the level $0$ epoch $(u,v)$. This $O(1)$ computation time is associated with termination of level $0$ epoch $(u,v)$.
\item \textit{Random-Settle-Augmented(u)} : This procedure selects an edge uniformly at random from $O_u$. Let $(u,y)$ be the edge. Suppose level of $y$ is $0$, $y$ is matched and $x$ = $mate(y)$. We know that $deg(x) < \sqrt{n}$. Then the edge $(x,y)$ is removed from the matching in $O(1)$ time. 
This $O(1)$ computation time is associated with termination of level $0$ epoch $(x,y)$.
\item \textit{Random-Raise-Level-To-1(u)} : This procedure
removes the edge $(u,mate(u))$ from the matching in $O(1)$ time. This $O(1)$ computation time is associated with termination of level $0$ epoch $(u,mate(u))$. 
\item \textit{Deterministic-Raise-Level-To-1(u)} : This procedure terminates the level $0$ epoch $(u,mate(u))$ and creates the level $1$ epoch $(u,mate(u))$. However, we associate the entire computation within this procedure with the creation of level $1$ epoch $(u,mate(u))$. Therefore, computation time associated with termination the level $0$ epoch $(u,mate(u))$ is $O(1)$.  
\end{enumerate}
Therefore, computation time associated  with the termination of a level $0$ epoch is $O(\sqrt{n})$ in the worst case.  
\end{proof}
\begin{lemma}
\label{lem: Computation-Time-Level1-epochs}	
The computation time associated with the creation of a level 1 epoch is $O(n)$ and termination of a level 1 epoch is $O(n)$.
\end{lemma}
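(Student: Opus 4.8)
The plan is to mirror the proof of Lemma~\ref{lem: Computation-Time-Level0-epochs}: I would carry out an exhaustive case analysis, using Table~\ref{table : Procedure_Calls_VS_Epocs} and the procedure descriptions in Section~\ref{subsec : Description_Procedures}, over every point at which a level~$1$ epoch is created or terminated \emph{in the body} of a procedure (as opposed to inside a nested procedure call), and argue that the computation associated with each such creation or termination is $O(n)$. First I would enumerate the relevant procedures: a level~$1$ epoch is created in the body of \textit{Random-Settle-Augmented}, \textit{Deterministic-Raise-Level-To-1}, \textit{Fix-3-Aug-Path-D}, and the sub-cases of \textit{Fix-3-Aug-Path} in which $v$ is at level~$1$; and the $O(deg(u))$ cost of the \textit{Take-Ownership}$(u)$ executed inside \textit{Randomised-Raise-Level-To-1}$(u)$ is charged to the level~$1$ epoch later created in the \textit{Random-Settle-Augmented}$(u)$ it calls. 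A level~$1$ epoch is terminated in the body of \textit{Handle-Delete-Level1}, of \textit{Naive-Settle-Augmented} (the case in which $F(u)$ is empty and $u$'s previous mate was at level~$1$), of \textit{Random-Settle-Augmented} (when the randomly chosen $y$ was already matched), of \textit{Randomised-Raise-Level-To-1}, of \textit{Fix-3-Aug-Path-D}, and of the level~$1$ sub-cases of \textit{Fix-3-Aug-Path}. As in the previous lemma, \textit{Deterministic-Raise-Level-To-1}$(u)$ also terminates the level~$0$ epoch $(u,mate(u))$, but by convention its entire computation is charged to the level~$1$ epoch $(u,mate(u))$ it creates.

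The key step is the uniform bound on the body computation. In each of the cases above, the work done in the procedure body consists only of: $O(1)$ updates to the matrix for $M$ and the array $mate$; a constant number of invocations of the inline macros \textit{Take-Ownership}, \textit{Transfer-Ownership-To}, \textit{Transfer-Ownership-From}, \textit{Insert-To-F-List}, \textit{Delete-From-F-List}, each running in time $O(deg(\cdot))$ on its argument vertex; at most one uniformly random selection of an edge from an ownership list $O_u$, costing $O(|O_u|)$; and at most one scan of $N(u)$ with the macro \textit{Check-3-Aug-Path}. Since the vertex set is fixed with $|V|=n$, every vertex $v$ has $deg(v)\le n-1$ and hence $|O_v|\le deg(v)\le n-1$, so every macro invocation and every random selection costs $O(n)$.

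The one subtlety I would need to handle carefully is the cost of the \textit{Check-3-Aug-Path} scan performed in \textit{Naive-Settle-Augmented} when $F(u)$ is empty, since a naive reading would multiply $deg(u)$ by the $O(\sqrt{n})$ cost of \textbf{get-free}. I would argue instead that, for each $x\in N(u)$, deciding whether $y=mate(x)$ has a free neighbour other than $u$ takes only $O(1)$ time using \textbf{has-free}$(y)$, the stored count of free neighbours of $y$, and an $O(1)$ membership test for $u$ in the boolean array of $F(y)$; the $O(\sqrt{n})$-time call \textbf{get-free}$(y)$ is needed only when such a $z\neq u$ actually exists, at which point the scan halts. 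Hence the scan, together with inserting $u$ and (where applicable) its new mate into or removing them from the free neighbour lists of all neighbours, costs $O(deg(u)+\sqrt{n})=O(n)$.

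Combining the cases, every creation and every termination of a level~$1$ epoch carries $O(n)$ associated computation, which proves the lemma. The main obstacle I anticipate is purely organisational: making the case split over the procedures of Section~\ref{subsec : Description_Procedures} and the rows of Table~\ref{table : Procedure_Calls_VS_Epocs} genuinely exhaustive, and being scrupulous that the computation charged to a level~$1$ epoch here is only the work performed locally in a procedure body, so that it is not also charged against an epoch created or terminated in a nested call — otherwise the per-epoch bound would be double-counted when it is summed over all epochs in Section~\ref{subsec : expected-total-update-time}.
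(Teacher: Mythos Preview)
Your proposal is correct and follows essentially the same approach as the paper: an exhaustive case analysis over the procedures that create or terminate level~$1$ epochs in their bodies, bounding each case by $O(deg(\cdot))=O(n)$ via the macros and the trivial degree bound. Your explicit treatment of the \textit{Check-3-Aug-Path} scan (using \textbf{has-free} in $O(1)$ per iteration and invoking \textbf{get-free} at most once) is in fact more careful than the paper, which simply records $O(deg(u))$ in Table~\ref{table : Procedure_Calls_VS_Epocs} without spelling this out; one small slip is that \textit{Randomised-Raise-Level-To-1}$(u)$ terminates a level~$0$ epoch (since $u$ is at level~$0$ when called), not a level~$1$ epoch, but this is harmless for the $O(n)$ bound.
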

\begin{proof}
We perform an exhaustive case analysis for the creation and termination of level 1 epochs.\\
\textbf{Creation of level $1$ epochs}:\\
An epoch in level 1 is created during  the following procedure calls :
\begin{enumerate}
\item \textit{Random-Settle-Augmented}$(u)$: This procedure selects an edge $(u,y)$ uniformly at random from $O_u$. The total computation done within the procedure takes $O(deg(u)+deg(y))$. We associate this $O(deg(u)+deg(y))$ computation time with the creation of level $1$ epoch $(u,y)$. 
\item \textit{Randomised-Raise-Level-To-1$(u)$}: 
Total computation done within this procedure takes $O(deg(u))$ time. We associate this $O(deg(u))$ computation with the epoch created by the subsequent call to procedure \textit{Random-Settle-Augmented}$(u)$.  
\item \textit{Deterministic-Raise-Level-To-1$(u)$}: 
Let $u^{\prime}$ be the mate of $u$. Total computation done within this procedure takes $O(deg(u)+deg(u^{\prime}))$. We associate this $O(deg(u)+deg(u^{\prime}))$ computation time with the creation of level $1$ epoch $(u,u^{\prime})$. 
\item \textit{Fix-3-Aug-Path$(u,v,y,z)$}:
$u$ is a free vertex at level $0$, $v$ is a matched vertex at level $1$, $y$ is mate of $v$ and $z$($z \neq u$) is a free neighbour of $y$. Edge $(v,y)$ is removed from the matching and edges $(u,v)$ and $(y,z)$ are included in the matching in $O(1)$ time.
\begin{enumerate}
\item If $deg(u) \geq \sqrt{n}$ then total computation done within \textit{Fix-3-Aug-Path} takes $O(1)$ + $O(deg(z))$ = $O(deg(z))$ time.  We associate this $O(deg(z))$ computation time with the creation of level $1$ epoch $(y,z)$. 
\item If $deg(u) < \sqrt{n}$ and $deg(z) \geq \sqrt{n}$ then total computation done within \textit{Fix-3-Aug-Path} takes $O(1)$ + $O(deg(u))$ = $O(deg(u))$ time. We associate this $O(deg(u))$ computation time with the creation of level $1$ epoch $(u,v)$.
\item If $deg(u) < \sqrt{n}$ and $deg(z) < \sqrt{n}$ then total computation done within \textit{Fix-3-Aug-Path} takes $O(1)$ + $O(deg(u))$ + $O(deg(z))$ = $O(\sqrt{n})$ time. We associate this $O(\sqrt{n})$ computation time with the creation of level $1$ epoch $(u,v)$ and $(y,z)$.
\end{enumerate}
\item \textit{Fix-3-Aug-Path-D$(u,v,y,z)$}: $u$ is a free vertex at level $0$, $v$ is matched, $y$ is mate of $v$ and $z$ ($z \neq u$) is a free neighbour of $y$. Edge $(v,y)$ is removed from the matching and edges $(u,v)$ and $(y,z)$ are included in the matching in $O(1)$ time. 
\begin{enumerate}
\item Level of $v$ is $1$. Then total computation done within \textit{Fix-3-Aug-Path-D} takes $O(1)$ + $O(deg(u)+deg(z))$ = $O(deg(u)+deg(z))$ time. We associate this $O(deg(u)+deg(z))$ computation time with the creation of level $1$ epochs $(u,v)$ and $(y,z)$.
\item Level of $v$ is $0$. Then total computation done within \textit{Fix-3-Aug-Path-D} takes $O(1)$ + $O(deg(v)+ deg(y)+deg(u)+deg(z))$ time. Since $deg(v) < \sqrt{n}$ and $deg(y) < \sqrt{n}$, total computation time is $O(deg(u)+deg(z))$. We associate this $O(deg(u)+deg(z))$ computation time with the creation of level $1$ epochs $(u,v)$ and $(y,z)$.
\end{enumerate}
\end{enumerate}
Therefore, computation associated with the creation of a level $1$ epoch is $O(n)$ in the worst case.\\
\textbf{Termination of level 1 epochs}:\\
An epoch in level 1 gets terminated during the following procedure calls:
\begin{enumerate}
\item 
\label{level1-epoch-deletion-case1}
\textit{Handle-Delete-Level1}: Edge $(u,v)$ is a matched edge at level $1$.  Edge $(u,v)$ is removed from the matching in $O(1)$ time. Then calls are made to procedure \textit{Handle-Delete-Level1}$(u,0)$ and \textit{Handle-Delete-Level1}$(v,0)$. If $u$ does not get matched again then $u$ is inserted to the free neighbour list of all its neighbours. Similarly, if $v$ does not get matched again then $v$ is inserted to the free neighbour list of all its neighbours. Total computation time is $O(deg(u)+deg(v))$. This $O(deg(u)+deg(v))$ computation time is associated with the termination of level $1$ epoch $(u,v)$.
\item \textit{Random-Settle-Augmented}$(u)$: 
This procedure picks a random mate for $u$ from $O_u$. Let $y$ be the mate selected for $u$. If level of $y$ is $1$ then $y$ is matched. Let $x$ be the mate of $y$. Procedure removes $(x,y)$ from matching in $O(1)$ time. This $O(1)$ computation time is associated with termination of level $1$ epoch $(x,y)$.

\item \textit{Fix-3-Aug-Path$(u,v,y,z)$}: 
$u$ is a free vertex at level $0$, $v$ is a matched vertex at level $1$, $y$ is mate of $v$ and $z$ ($z \neq u$) is a free neighbour of $y$. Edge $(v,y)$ is removed from the matching and edges $(u,v)$ and $(y,z)$ are added to the matching in $O(1)$ time. This $O(1)$ computation time is associated with termination of level $1$ epoch $(v,y)$.
\item \textit{Fix-3-Aug-Path-D$(u,v,y,z)$}: 
$u$ is a free vertex at level $0$, $v$ is a matched vertex at level $1$, $y$ is mate of $v$ and $z$ ($z \neq u$) is a free neighbour of $y$. Edge $(v,y)$ is removed from the matching and edges $(u,v)$ and $(y,z)$ are added to the matching in $O(1)$ time. This $O(1)$ computation time is associated with termination of level $1$ epoch $(v,y)$. 
\end{enumerate}
Therefore, computation time associated with the termination of a level $1$ epoch is $O(n)$ in the worst case.
\end{proof}
\subsection{Crucial Classification of Level 1 epochs}
\label{subsubsec: Random-Deterministic-Level1-epochs}
We classify  level 1 epochs into two categories : random level 1 epochs and deterministic level 1 epochs.  We consider this as a novel step in the extension of the analysis technique of \cite{DBLP:BGS}.
\begin{definition}
\label{def:random-level1-epoch}
Random Level 1 Epoch :
Let us consider a level 1 epoch $(u,v)$. Without loss of generality, suppose this was created due to vertex $u$. At the time of creation of the epoch, if $(u,v)$ was selected uniformly at random from $O_u$ then the epoch $(u,v)$ is called a random level 1 epoch.  
\end{definition}
\begin{definition}
\label{def:deterministic-level1-epoch}
Deterministic Level 1 Epoch : Let us consider a level 1 epoch $(x,y)$. Without loss of generality, suppose this was created due to $x$. At the time of creation, if the mate of $x$ is deterministically chosen to be $y$ then epoch $(x,y)$ is called deterministic level 1 epoch.
\end{definition}
Therefore, over any sequence of updates,  each level 1 epoch is either a random level 1 epoch or a deterministic level 1 epoch. We now further refine the classification of deterministic level 1 epochs based on the time to taken for their creation.  In particular, we consider deterministic level 1 epochs based which are created in $O(\sqrt{n})$ time.  \\
{\bf Inexpensive Deterministic Level 1 Epochs:} From Table \ref{table : Procedure_Calls_VS_Epocs}, the two procedures that could create deterministic level 1 epochs in $O(\sqrt{n})$ time are \textit{Fix-3-Aug-Path} and \textit{Fix-3-Aug-Path-D}.
Consider the call to procedure \textit{Fix-3-Aug-Path}$(u,v,y,z)$ where $u$ is a free vertex at level $0$ and $v$ is a matched vertex at level 1. Let $y$ be mate of $v$ and $z$ be a free neighbour of $y$. If $deg(u) < \sqrt{n}$ and $deg(z) < \sqrt{n}$ then procedure will terminate epoch $(v,y)$ and create deterministic level 1 epochs $(u,v)$ and $(z,y)$. Further, since $deg(u) < \sqrt{n}$ and $deg(z) < \sqrt{n}$, from Lemma~\ref{lem: Computation-Time-Level1-epochs}, the computation involved in the creation of these epochs takes time $O(\sqrt{n})$.    We refer to the epochs created by \textit{Fix-3-Aug-Path}$(u,v,y,z)$ as \textit{inexpensive deterministic level 1 epochs}.

Secondly,  a call to procedure \textit{Fix-3-Aug-Path-D} may also create a deterministic level $1$ epoch in time $O(\sqrt{n})$. However, from Observation~\ref{obs : flag-change-0-to-1}, a call to procedure \textit{Fix-3-Aug-Path-D} is always preceded by a call to procedure \textit{Random-Settle-Augmented}. Therefore, the time associated with epochs created by procedure \textit{Fix-3-Aug-Path-D} are analyzed in an epoch-set(Definition~\ref{def:epoch-set}) whose representative is the epoch created by the preceding \textit{Random-Settle-Augmented}.
 Therefore, for the rest of the discussion we will consider the epochs created by  \textit{Fix-3-Aug-Path} as the only inexpensive deterministic epochs.

If the termination of an \textit{inexpensive deterministic level 1 epoch} also takes $O(\sqrt{n})$ time then total computation time of the epoch is $O(\sqrt{n})$. However, if termination of an \textit{inexpensive deterministic level 1 epoch} involves the computation time due to $\Omega(\sqrt{n})$ edges,  then we need a careful accounting for the total computation time.   For this we classify 
inexpensive deterministic level 1 epochs into two types- type 1 and type 2.  \\
{\bf Type 1 inexpensive deterministic level 1 epoch:} Let $(u,v)$ be an inexpensive deterministic level 1 epoch which was created by $u$, where $deg(u) <  \sqrt{n}$ at the time of creation.  The epoch is defined to be of type 1 if it satisfies one of the following conditions:
\begin{enumerate}
\item On termination $deg(u)$ is $\Omega(\sqrt{n})$.
\item On termination $deg(u) < \sqrt{n}$ and $deg(v)$ is $\Omega(\sqrt{n})$, and on creation  $deg(v) < \sqrt{n}$.
\item On termination $deg(u) < \sqrt{n}$ and $deg(v) < \sqrt{n}$, and on creation $deg(v)$ is $\Omega(\sqrt{n})$ . 
\item On termination $deg(u) < \sqrt{n}$ and $deg(v)$ is $\Omega(\sqrt{n})$, and on creation $deg(v)$ is $\Omega(\sqrt{n})$, and when $v$ is matched for the first time after termination of the epoch $(u,v)$ by the algorithm, $deg(v) < \sqrt{n}$.
\item On termination $deg(u) < \sqrt{n}$ and $deg(v)$ is $\Omega(\sqrt{n})$, and on creation $deg(v)$ is $\Omega(\sqrt{n})$, and after this $v$ remains unmatched for some number of updates after which for the first time $deg(v) < \sqrt{n}$.  Note that in this case, we crucially use the fact that the extended update sequence all the edges are eventually deleted.  Therefore, for each vertex at some point in the update sequence, the degree will be less than $\sqrt{n}$.
\end{enumerate}	
{\bf Type 2 inexpensive deterministic level 1 epoch:} Let $(u,v)$ be an inexpensive deterministic level 1 epoch which was created by $u$, where $deg(u) < \sqrt{n}$ at the time of creation.  The epoch is defined to be of type 2 if it satisfies the following conditions
\begin{itemize}
\item On creation $deg(v)$ is $\Omega(\sqrt{n})$, on termination $deg(u)$ is less than $\sqrt{n}$ and $deg(v)$ is $\Omega(\sqrt{n})$, and when $v$ is matched first time after the termination of epoch $(u,v)$, $deg(v)$ is $\Omega(\sqrt{n})$.
\end{itemize}	
\begin{lemma}
\label{lem:inexpensive-level-1-epoch-termination}	
For each type 1 inexpensive deterministic level 1 epoch $(u,v)$, there exists a set of  $\Omega(\sqrt{n})$ many $O(1)$ time updates which involve either $u$ or $v$.  Further,tThese $\Omega(\sqrt{n})$ many $O(1)$ time updates are those that occur after the creation of the epoch $(u,v)$, are associated with the epoch $(u,v)$ only, and satisfy one of the following conditions:
\begin{itemize}
\item They occur before the termination of epoch $(u,v)$ when degree of $u$ or $v$ is $\Omega(\sqrt{n})$.
\item They occur before the termination of the epoch $(u,v)$ when the degree of both $u$ and $v$ is $O(\sqrt{n})$, but on creation of epoch $(u,v)$, $deg(v)$ is $\Omega(\sqrt{n})$.
\item They occur before the first time $v$ is matched after the termination of the epoch $(u,v)$.
\item They occur before the first time $deg(v)$ becomes less than $\sqrt{n}$ after the termination of epoch $(u,v)$.
\end{itemize}
 Consequently, type 1 inexpensive deterministic level 1 epochs  contribute  $O(\sqrt{n})$ to the amortized update time. 
\end{lemma}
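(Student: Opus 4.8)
The plan is to establish, for each type-1 inexpensive deterministic level-1 epoch $(u,v)$ (created by $u$, with $\deg(u)<\sqrt n$ at the instant it is created inside a call to \textit{Fix-3-Aug-Path}), the existence of $\Omega(\sqrt n)$ distinct $O(1)$-time updates incident to $u$ or $v$, uniquely chargeable to this epoch, and then to amortise. I would first record the two facts that drive everything: by Invariant~\ref{invariant 4} both $u$ and $v$ remain at level $1$ for the whole epoch, so during the epoch the only matched edge incident to $u$ (resp.\ $v$) is $(u,v)$; and by Lemma~\ref{lem: Computation-Time-Level1-epochs} the creation of $(u,v)$ costs $O(\sqrt n)$ while its termination costs $O(\deg(u)+\deg(v))=O(n)$, the degrees here being measured at termination.

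Next I would prove the structural claim that, while the epoch is alive, every edge-update incident to $u$ or $v$ is one of: (a) an $O(1)$-time update; (b) an update that creates a fresh level-$1$ epoch at a neighbour of $u$ or $v$ (hence whose cost is charged to that new epoch, not to $(u,v)$); or (c) the single update that terminates $(u,v)$. This is a short case analysis: a deletion incident to $u$ or $v$ removes an unmatched edge and costs $O(1)$ unless it removes $(u,v)$ itself; an insertion $(u,w)$ with $w$ at level $1$ is $O(1)$ (Insert case~\ref{insertion : case 1}); an insertion $(u,w)$ with $w$ at level $0$, matched, and $\deg(w)<\sqrt n$ is $O(1)$ (Insert case~\ref{insertion : case 2b}); an insertion $(u,w)$ with $w$ at level $0$, matched, and $\deg(w)=\sqrt n$ triggers \textit{Random-Raise-Level-To-1}$(w)$ and is of type~(b) (or, if the random mate of $w$ happens to be $u$, of type~(c)); and an insertion $(u,w)$ with $w$ at level $0$ free runs \textit{Check-3-Aug-Path}$(w,u)$, whose output — since the epoch is still alive, $\mathrm{mate}(u)=v$ cannot have a free neighbour — is NULL, so the update is again $O(1)$. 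The roles of $u$ and $v$ may be exchanged, and symmetric reasoning applies to a post-termination window during which the relevant vertex is free.

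I would then go through the five defining conditions of type~1 and, in each, point to the phase in which a degree of $u$ or $v$ provably changes by $\Omega(\sqrt n)$, hence forcing $\Omega(\sqrt n)$ edge-updates incident to that vertex. Conditions~2 and~3 concern $\deg(v)$ changing (up or down) during the epoch; condition~1 concerns $\deg(u)$ growing during the epoch; conditions~4 and~5 concern $\deg(v)$ dropping back below $\sqrt n$ after termination while $v$ is unmatched, which must eventually happen since the extended update sequence ends with the empty graph. When the change is a net decrease the corresponding updates are edge deletions at a vertex whose only matched edge is $(u,v)$ (during the epoch) or at a free vertex (after termination), and are therefore all $O(1)$; when it is a net increase, the updates are insertions and I use the structural claim to discard the type~(b)/(c) ones and keep the $\Omega(\sqrt n)$ many of type~(a). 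Finally I would argue that the chosen sets are pairwise disjoint across epochs: an $O(1)$-update incident to a vertex $w$ is assigned to the epoch that ``owns'' $w$ at that instant — the level-$1$ epoch currently containing $w$ if $w$ is then matched at level $1$, and otherwise the most recently terminated type-1 epoch at $w$ whose conditions~4/5 window is still open — and those conditions, which require $w$ to stay unmatched (resp.\ $\deg(w)$ not yet to have dropped below $\sqrt n$), prevent two such claims from being open simultaneously. With disjoint private sets, each such epoch spreads its $O(\sqrt n)+O(n)=O(n)$ total cost over its $\Omega(\sqrt n)$ private $O(1)$-updates, adding $O(\sqrt n)$ to each; since each update is charged by at most one epoch, the amortised contribution is $O(\sqrt n)$ per update.

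The main obstacle is the net-increase situation in conditions~1 and~2: ruling out the pathological scenario in which almost every one of the $\Omega(\sqrt n)$ insertions that raise $\deg(u)$ (or $\deg(v)$) is of type~(b), each raising a distinct neighbour to level~$1$, which would leave too few $O(1)$-time insertions to charge against. I expect this to be resolved either by observing that every such type~(b) insertion itself creates a level-$1$ epoch carrying its own $\Omega(\sqrt n)$ private $O(1)$-updates (so the global amortisation still closes, with the cost merely re-attributed), or by the randomisation inside \textit{Random-Raise-Level-To-1}, which gives each such insertion probability at least $1/|O_w|\ge 1/\sqrt n$ of terminating $(u,v)$ outright, so that many of them cannot coexist in one epoch except with small probability; deciding which of these gives the clean argument, and making the disjointness bookkeeping across overlapping post-termination windows fully rigorous, is where the real work lies.
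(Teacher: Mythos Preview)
Your five-case analysis is exactly the paper's proof: it too walks through the five defining conditions and, in each, exhibits $\Omega(\sqrt n)$ edge-updates at $u$ or $v$ during the relevant window. Where you diverge is in rigor. The paper never introduces your (a)/(b)/(c) trichotomy; for cases~1 and~2 it simply asserts that ``all these insertions are processed in $O(1)$ time because $(u,v)$ is already in the matching'' --- precisely the claim your main obstacle challenges. In other words, the paper does not resolve the obstacle you raise; it sidesteps it by assertion.

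The clean resolution is your first proposed fix, and it is what the paper's global argument in Section~\ref{subsec : expected-total-update-time} actually relies on. A type-(b) insertion at $u$ spawns a random level-1 epoch at the neighbour $w$, and \emph{all} of its cost is already attributed to the epoch-set represented by that random epoch. What the amortization there needs from this lemma is only that each type-1 epoch owns $\Omega(\sqrt n)$ \emph{distinct} updates not claimed by another type-1 epoch --- not that those updates are intrinsically $O(1)$. With that reading, the $O(n)$ termination cost of a type-1 epoch is spread at $O(\sqrt n)$ apiece over its $\Omega(\sqrt n)$ private updates; since every update has two endpoints, it is claimed by at most two live level-1 epochs (and the post-termination windows of conditions~4/5 are disjoint at each vertex by construction), so the per-update overcharge stays $O(\sqrt n)$. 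Your probabilistic alternative is unnecessary and would not deliver the deterministic disjointness the lemma promises.

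One minor correction to your structural claim: for an insertion $(u,w)$ with $w$ free at level~0, it is not that $v=\mathrm{mate}(u)$ \emph{cannot} have a free neighbour while the epoch is alive. Invariant~5 only forbids $u$ and $v$ from \emph{both} having one. If $v$ does have a free neighbour, \textit{Check-3-Aug-Path}$(w,u)$ finds it and \textit{Fix-3-Aug-Path} terminates $(u,v)$, so this insertion is of type~(c), not type~(a). Your conclusion --- that such an insertion is either $O(1)$ or epoch-ending --- is still correct.
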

\begin{proof}
To prove the claim, we  analyse the five different cases in the definition of  type 1 inexpensive deterministic epochs as follows, and at the end of each case the common proposition is that the $\Omega(\sqrt{n})$ many $O(1)$ time updates are all associated with only the epoch $(u,v)$ and not with any other epoch :
\begin{enumerate}
\item On termination of epoch $(u,v)$, $deg(u)$ is $\Omega(\sqrt{n})$: At the time of creation of epoch $(u,v)$, $deg(u)$ was strictly less than $\sqrt{n}$. Therefore, from the time of creation of epoch $(u,v)$ till its termination there is at least $\sqrt{n}$  many insertions have taken place at vertex $u$ during the epoch $(u,v)$. All these insertions are processed in $O(1)$ time because $(u,v)$ is already in the matching.
\item On termination $deg(u)$ is less than $\sqrt{n}$, on creation  $deg(v) < \sqrt{n}$ and on termination $deg(v)$ is $\Omega(\sqrt{n})$: In this case  $\Omega(\sqrt{n})$ many insertions would have taken place at $v$ during the epoch $(u,v)$, and these would have been processed in $O(1)$ time each as $(u,v)$ is already in the matching.  
\item On termination $deg(u)$ is less than $\sqrt{n}$, on creation $deg(v)$ is $\Omega(\sqrt{n})$ and on termination $deg(v) < \sqrt{n}$: In this case there would have been  $\Omega(\sqrt{n})$ edges deleted at $v$ during the epoch $(u,v)$ and each of them would have been processed in $O(1)$ time since $(u,v)$ is already in the matching.  
\item On termination $deg(u)$ is less than $\sqrt{n}$, on creation $deg(v)$ is $\Omega(\sqrt{n})$ and on termination $deg(v)$ is $\Omega(\sqrt{n})$ and when $v$ is matched first time again by the algorithm $deg(v)$ is less than $\sqrt{n}$: In this case, after termination of epoch $(u,v)$ till $v$ gets matched again for the first time, it would have become a level 0 vertex. Since $deg(v)$ at this time is less than $\sqrt{n}$, there would have been  $\Omega(\sqrt{n})$ deletes of edges incident on the free vertex $v$.  
\item On termination $deg(u)$ is less than $\sqrt{n}$ and $deg(v)$ is $\Omega(\sqrt{n})$, and on creation $deg(v)$ is $\Omega(\sqrt{n})$, and $v$ remains unmatched for some number of updates after which $deg(v) < \sqrt{n}$: As in the preceding case,
after termination of epoch $(u,v)$ since $v$ remains unmatched till the first time when $deg(v) < \sqrt{n}$. During these updates it would have become a level 0 vertex. Since $deg(v)$ at this time is less than $\sqrt{n}$, there would have been $\Omega(\sqrt{n})$ deletes of edges incident on the free vertex $v$.  
\end{enumerate}
Therefore, we have proved that for a type 1 inexpensive deterministic level 1 epoch $(u,v)$ for which termination takes $\Omega(\sqrt{n})$ time, there exist a set of $\Omega(\sqrt{n})$ many updates of edges incident or $u$ or $v$ which is processed in $O(1)$ time.    Further, in each of the cases considered we have also proved the second statement of the lemma.
Finally,  The total time associated with the creation and termination of the epoch $(u,v)$ is $O(n)$.  Since  the epoch $(u,v)$ is associated with  a set of $\Omega(\sqrt{n})$ many constant time updates of edges incident or $u$ or $v$, the contribution to the amortized cost by the epoch  $(u,v)$  $O(\sqrt{n})$. Hence the lemma is proved.
\end{proof}
 Lemma~\ref{lem:inexpensive-level-1-epoch-termination} shows that we can distribute the $O(n)$ cost of the termination of the inexpensive deterministic level 1 epoch to $\Omega(\sqrt{n})$ many  $O(1)$ time updates which happen during the epoch $(u,v)$.
For the rest of the analysis we will not consider the type 1 deterministic level 1 epochs which requires time $O(\sqrt{n})$ for creation. 
Further, by definition,  the creation time associated with type 2 epochs is inexpensive, but termination and subsequent rematching is expensive.
We will be interested only  type 2 inexpensive deterministic level 1 epochs and those deterministic level 1 epochs whose creation takes $\Omega(\sqrt{n})$ time. Therefore, quite naturally, we refer to these as {\em expensive deterministic level 1 epochs}.
 \subsection{Epoch-Sets:Grouping expensive deterministic level 1 epochs with Random level 1 epochs}
We first show that each type 2 inexpensive deterministic level 1 epoch $(u,v)$ can be associated with the creation of random level 1 epoch involving $v$ when it gets matched again for the first time after the termination of the epoch $(u,v)$. Intuitively, this
time taken to match $v$ again for the first time after the deletion of $(u,v)$ from the matching is used to account for the time taken to delete $(u,v)$ from the matching, which in this case is expensive at the time of termination.  Note, that the deletion of $(u,v)$ from the matching and the subsequent re-matching of $v$ could be during different updates.  
\begin{lemma}
\label{lem:type2epochs}
Let $(u,v)$ be a type 2 inexpensive deterministic level 1 epoch  created by $u$.  When $v$ is matched again for the first time after the termination of the epoch $(u,v)$ it creates a random level 1 epoch.  Further, for each random level 1 epoch at $v$, there is at most one type 2 inexpensive deterministic level 1 epoch involving a matching edge containing $v$.  
 \end{lemma}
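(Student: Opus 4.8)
The plan is to prove the two assertions in turn. For the first assertion, I would fix the moment just before $v$ is matched again for the first time after the epoch $(u,v)$ terminates. At that moment $v$ is free, so by Invariant~\ref{invariant_1b} it is a level $0$ vertex, and by the defining property of a type 2 epoch we have $deg(v)\ge\sqrt n$ at that moment. Since by Invariant~\ref{invariant 3} a matched vertex of degree at least $\sqrt n$ cannot be at level $0$, the update that re-matches $v$ must change the level of $v$ from $0$ to $1$ as part of matching it. So the task reduces to showing that the procedure effecting this level change chooses the new mate of $v$ uniformly at random from $O_v$; by Definition~\ref{def:random-level1-epoch} the epoch it creates is then a random level 1 epoch at $v$.

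I would establish this by a case analysis over the procedures in which a vertex's level changes from $0$ to $1$ — by the proof of Lemma~\ref{lem: ownership_edge} these are exactly \textit{Deterministic-Raise-Level-To-1}, \textit{Randomised-Raise-Level-To-1}, \textit{Random-Settle-Augmented}, \textit{Fix-3-Aug-Path} and \textit{Fix-3-Aug-Path-D}. In \textit{Random-Settle-Augmented}$(v)$ and \textit{Randomised-Raise-Level-To-1}$(v)$ the new mate of $v$ is by construction picked uniformly at random from $O_v$, which is exactly the desired conclusion. For every other (deterministic) route I would argue it cannot apply to our $v$: \textit{Deterministic-Raise-Level-To-1}$(v)$ and the deterministic level-raising branches of \textit{Fix-3-Aug-Path} and \textit{Fix-3-Aug-Path-D} are reached only when $v$ is already matched at level $0$, or $v$ is already at level $1$, or $flag=1$, and $flag=1$ forces (Observation~\ref{obs : flag-change-0-to-1}) that a call to \textit{Random-Settle-Augmented} precedes this point in the update. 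Using that $v$ is continuously free from the termination of $(u,v)$ up to this re-matching, together with Invariants~\ref{invariant 3} and~\ref{invariant 4} (the ``displaced'' vertex handed to \textit{Naive-Settle-Augmented}$(\cdot,1)$ in the tail of a \textit{Raise} procedure comes from a level-$0$ matched edge, hence has degree $<\sqrt n$ and will not trigger a deterministic raise), I would rule out each of these deterministic routes, leaving \textit{Random-Settle-Augmented}$(v)$ (possibly called from \textit{Randomised-Raise-Level-To-1}$(v)$) as the only possibility. Hence $v$ creates a random level 1 epoch.

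For the second assertion I would define the map sending a type 2 inexpensive deterministic level 1 epoch whose matching edge contains $v$ to the random level 1 epoch at $v$ created when $v$ is re-matched for the first time after that epoch terminates, and show it is injective. The epochs with a matching edge at $v$ lie on a single time line, alternating with the intervals during which $v$ is unmatched, and every level 1 epoch is either random or deterministic but not both. If $(u_1,v)$ and $(u_2,v)$ were both mapped to the same random level 1 epoch, with $(u_1,v)$ terminating first, then the first re-matching of $v$ after $(u_1,v)$ already yields a random level 1 epoch $E_1$; since $(u_2,v)$ is a deterministic epoch we have $E_1\ne (u_2,v)$, so on the time line $E_1$ precedes $(u_2,v)$, which in turn precedes the first re-matching of $v$ after $(u_2,v)$ terminates. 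Hence the two images are distinct epochs, a contradiction; equivalently, each random level 1 epoch at $v$ is the image of at most one type 2 epoch.

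The main obstacle is the case analysis in the first assertion: auditing every control path by which a free, level-$0$, degree-$\ge\sqrt n$ vertex can reach level $1$ and showing uniformly — via edge ownership (Lemma~\ref{lem: ownership_edge}), Invariants~\ref{invariant 3} and~\ref{invariant 4}, and Observation~\ref{obs : flag-change-0-to-1} — that every such path that would assign $v$ a deterministically chosen mate is unreachable for a continuously-free $v$. The \textit{Naive-Settle-Augmented}$(v,1)$ and \textit{Fix-3-Aug-Path-D} entry points are the delicate ones and deserve the most care.
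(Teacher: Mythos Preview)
Your proposal is correct but organized differently from the paper. The paper splits the first assertion into just two cases according to \emph{when} $v$ is re-matched, rather than \emph{how}. If $v$ is re-matched in the same update in which the epoch $(u,v)$ terminates, then since $(u,v)$ was a level~$1$ epoch the paper argues that \textit{Handle-Delete-Level1}$(v,flag)$ is called on the now-free $v$, and the large degree of $v$ forces the branch that calls \textit{Random-Settle-Augmented}$(v)$. If instead $v$ is not re-matched until a later update, then $v$ has dropped to level~$0$ and is free at the start of that update; when it becomes matched with $deg(v)\ge\sqrt n$, the update (with $flag$ still $0$) calls \textit{Randomised-Raise-Level-To-1}$(v)$, again creating a random level~$1$ epoch. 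This ``same update / later update'' dichotomy sidesteps the exhaustive procedure-level enumeration you set up: in the later-update case the invariants genuinely hold at entry and $flag=0$, so the deterministic branches you worry about (Naive-Settle-Augmented$(v,1)$, Fix-3-Aug-Path-D) simply do not arise; in the same-update case the paper traces a single call path rather than ruling out all alternatives. Your bottom-up enumeration is sound and more cautious, but the paper's top-down trace is shorter and avoids the delicate eliminations you flag as the main obstacle. For the second assertion your injectivity argument via the time line of epochs at $v$ is exactly the paper's reasoning, which it compresses to one sentence: each type~2 epoch has a unique ``first re-matching of $v$'' following it.
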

\begin{proof}
Let $(u,v)$ be a type 2 inexpensive deterministic level 1 epoch  created by $u$, on creation $deg(u) < \sqrt{n}$ and $deg(v)$ is $\Omega(\sqrt{n})$,  and on termination of the epoch $(u,v)$, $deg(u)$ is less than $\sqrt{n}$ and $deg(v)$ is $\Omega(\sqrt{n})$, and when $v$ is matched first time again by the algorithm, $deg(v)$ is $\Omega(\sqrt{n})$. 
We prove the claim by considering the following two cases.   We first consider the case in which $v$ is matched again in the same update step in which $(u,v)$ is terminated. In this case, \textit{Handle-Delete-Level1}$(v,flag)$ would have been called, and since $deg(v)$ is $\Omega(\sqrt{n})$, it would have called \textit{Random-Settle-Augmented}$(v)$ and subsequently due to the fact that $v$ gets matched again, and it has a degree of  $\Omega(\sqrt{n})$, it creates a random level 1 epoch at $v$.  In the second case, if $v$ does not get matched in the same update step in which the epoch $(u,v)$ is terminated, then it would have become a free vertex at level 0.  When it becomes matched at level 0 for the first time after the termination of the epoch $(u,v)$, due to its degree being $\Omega(\sqrt{n})$ at the time of it getting matched, \textit{Randomised-Raise-Level-to-1}$(v)$ is called, and this creates a random epoch at level 1.  

The second statement is true due to the following reason: For each type 2 inexpensive deterministic level 1 epoch $(u,v)$ created by $u$, there is at most one {\em first} time instant at which $v$ is matched again.  Hence the lemma is proved.
\end{proof}
As in Lemma \ref{lem:type2epochs} we now show that the remaining expensive deterministic level 1 epochs, that is those that take  $\Omega(\sqrt{n})$ time for creation, are preceded by the creation of a random level 1 epoch. Again, as in the previous lemma this paves the way for accounting  the contribution of the expensive deterministic level 1 epoch to the total update time.  The crucial difference is that the expensive deterministic epoch and the preceding random level 1 epoch are created during the same update step.
\begin{lemma}
\label{lem:mapping-deterministic-to-random-epochs}	
Creation of a deterministic level 1 epoch during an update which takes $\Omega(\sqrt{n})$ time is preceded by the creation of a random level 1 epoch during the same update. 
\end{lemma}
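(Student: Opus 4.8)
The plan is a finite case analysis over the places where a deterministic level $1$ epoch can be created. First I would invoke Table~\ref{table : Procedure_Calls_VS_Epocs}: level $1$ epochs are created only inside \textit{Random-Settle-Augmented}, \textit{Randomised-Raise-Level-To-1}, \textit{Deterministic-Raise-Level-To-1}, \textit{Fix-3-Aug-Path}, and \textit{Fix-3-Aug-Path-D}; and the first two always produce a \emph{random} level $1$ epoch, since the matched edge is drawn uniformly from an ownership list inside \textit{Random-Settle-Augmented}, which \textit{Randomised-Raise-Level-To-1} calls. So it suffices to treat the remaining three. I would also record the two facts I lean on: every execution of \textit{Random-Settle-Augmented} creates a random level $1$ epoch (immediate from its description, since it matches its argument to a uniformly chosen owned neighbour and raises both endpoints to level $1$), and, by Observation~\ref{obs : flag-change-0-to-1}, whenever $flag = 1$ at some point of an update, \textit{Random-Settle-Augmented} has already run during that update.

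For epochs created by \textit{Deterministic-Raise-Level-To-1} and \textit{Fix-3-Aug-Path-D}: these are the deterministic variants, executed only when $flag = 1$ (equivalently, \textit{Fix-3-Aug-Path-D} is reached only from \textit{Random-Settle-Augmented} or from a $flag = 1$ call of \textit{Naive-Settle-Augmented}). By the two recorded facts, \textit{Random-Settle-Augmented}, and hence the creation of a random level $1$ epoch, has already occurred in the same update before the deterministic epoch is created; so for these the claim is immediate, not even needing the $\Omega(\sqrt{n})$ hypothesis.

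The remaining, substantive case is \textit{Fix-3-Aug-Path}$(u,v,y,z)$; here I would walk through its sub-cases as in Section~\ref{subsubsec : Fix-3-Aug-Path}. When $v$ is at level $0$ the epochs created are level $0$ epochs, hence irrelevant. When $v$ is at level $1$, the common step places $(u,v)$ and $(y,z)$ in the matching in $O(1)$ time; the only sub-case in which a surviving deterministic level $1$ epoch has creation cost $\Omega(\sqrt{n})$ is $deg(u) \geq \sqrt{n}$ (with $deg(z) \geq \sqrt{n}$), where that epoch is $(y,z)$: here \textit{Fix-3-Aug-Path} first calls \textit{Randomised-Raise-Level-To-1}$(u)$, which creates a random level $1$ epoch, and the $\Omega(\sqrt{n})$ work charged (per Table~\ref{table : Procedure_Calls_VS_Epocs}) to the creation of $(y,z)$, namely \textit{Transfer-Ownership-To}$(z)$, \textit{Delete-From-F-List}$(z)$, and $Level(z) \gets 1$, is carried out only afterwards. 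In every other sub-case with $v$ at level $1$ the surviving deterministic level $1$ epoch is established in time $O(\sqrt{n})$ (these are the inexpensive deterministic epochs treated separately). This completes the case analysis and proves the lemma.

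The step I expect to be the main obstacle is precisely this last case: \textit{Fix-3-Aug-Path} is the ``randomized'' Fix routine yet it still manufactures deterministic epochs via the augmenting-path exchange, so the argument must pin down which single sub-case yields a deterministic level $1$ epoch with $\Omega(\sqrt{n})$ creation cost, and verify that the expensive portion of the creation follows the call to \textit{Randomised-Raise-Level-To-1}. Once that is checked, the other two cases reduce cleanly to Observation~\ref{obs : flag-change-0-to-1}.
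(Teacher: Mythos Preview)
Your proof follows the same case-analysis strategy as the paper's, using Observation~\ref{obs : flag-change-0-to-1} to dispatch \textit{Deterministic-Raise-Level-To-1} and \textit{Fix-3-Aug-Path-D}. The paper's own proof stops at exactly those two procedures: it asserts, via Table~\ref{table : Procedure_Calls_VS_Epocs} and Lemma~\ref{lem: Computation-Time-Level1-epochs}, that they are the \emph{only} calls creating a deterministic level~1 epoch with $\Omega(\sqrt{n})$ creation time, and does not examine \textit{Fix-3-Aug-Path} at all---the epochs produced there are implicitly folded into the ``inexpensive'' class handled by Lemmas~\ref{lem:inexpensive-level-1-epoch-termination} and~\ref{lem:type2epochs}.

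Your additional treatment of \textit{Fix-3-Aug-Path} is therefore more thorough than the paper's, and the sub-case you isolate ($Level(v)=1$, $deg(u)\ge\sqrt{n}$, $deg(z)\ge\sqrt{n}$) is indeed the one place where the surviving deterministic level~1 epoch $(y,z)$ carries $\Omega(\sqrt{n})$ creation work; you correctly observe that \textit{Randomised-Raise-Level-To-1}$(u)$ executes before that work. One subtlety worth flagging: the edge $(y,z)$ actually enters the matching at the common step (line~2 of Algorithm~\ref{alg:Fix-3-Aug-Path}), \emph{before} \textit{Randomised-Raise-Level-To-1}$(u)$ is called; what follows the random epoch is only the $O(deg(z))$ bookkeeping charged to the creation of $(y,z)$. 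Your phrasing already makes this distinction, and it is the right reading for the accounting the lemma feeds into (Definition~\ref{def:epoch-set}); but under a strict ``epoch-start'' ordering the precedence would fail, so this interpretive point is what your argument for the \textit{Fix-3-Aug-Path} case ultimately rests on.
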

\begin{proof}
From Table~\ref{table : Procedure_Calls_VS_Epocs} and Lemma~\ref{lem: Computation-Time-Level1-epochs}, the following procedure calls are the only calls which create a deterministic level 1 epoch where the creation time is $\Omega(\sqrt{n})$ time:
\begin{enumerate}
\item \textit{Deterministic-Raise-Level-To-1}: A call to procedure \textit{Deterministic-Raise-Level-To-1} creates a deterministic level 1 epoch. 
From Table~\ref{table : Procedure_Calls} this call is made in \textit{Naive-Settle- Augmented} with $flag$ value 1 during the same update.  Since $flag$ has value 1,  by applying observation~\ref{obs : flag-change-0-to-1} we conclude that there must have been a preceding call to \textit{Random-Settle-Augmented} which from Lemma~\ref{lem: Computation-Time-Level1-epochs} and Table~\ref{table : Procedure_Calls_VS_Epocs} creates a random level 1 epoch. 
\item Fix-3-Aug-Path-D: A call to procedure \textit{Fix-3-Aug-Path-D} creates two deterministic level 1 epochs.  Again, from Table~\ref{table : Procedure_Calls} this call is either made in \textit{Random-Settle-Augmented} or in \textit{Naive-Settle- Augmented} with $flag$ value 1 during the same update.  If the calling procedure was procedure \textit{Random-Settle-Augmented}, then it creates a random level 1 epoch.  In the case when the calling procedure is  \textit{Naive-Settle- Augmented} with $flag$ value 1,
by applying observation~\ref{obs : flag-change-0-to-1} we conclude that there must have been a preceding call to \textit{Random-Settle-Augmented}.  In either case, there is a preceding call to \textit{Random-Settle-Augmented} during the same update as the \textit{Fix-3-Aug-Path-D}, and from Lemma~\ref{lem: Computation-Time-Level1-epochs} and Table~\ref{table : Procedure_Calls_VS_Epocs} it creates a random level 1 epoch. 
\end{enumerate}
Hence the Lemma.
\end{proof}
From Lemma \ref{lem:type2epochs}, it is clear that every random level 1 epoch is associated with at most one preceding type 2 inexpensive deterministic level 1 epoch. Further, 
from Lemma~\ref{lem:mapping-deterministic-to-random-epochs}, it is clear that every deterministic level 1 epoch whose creation takes $\Omega(\sqrt{n})$ time is associated with the creation of a  random level 1 epoch preceding it in the same update step. In the following, we group all the such deterministic epochs which are associated with the same random level 1 epoch into one set which we refer to as the \textbf{\textit{epoch-set}}.
\begin{definition}
\label{def:epoch-set}
An epoch-set  is defined for each random level 1 epoch $(u,v)$ created at $u$ during an update by a procedure, say $P$, and is denoted by $\xi_{u}$.  The epoch-set for the random level 1 epoch  $(u,v)$ consists of the subsequent deterministic level 1 epochs, if any, created in procedures called from $P$, during the same update, and before the creation of the next random level 1 epoch during the update step. Further, if there is a type 2 inexpensive deterministic level 2 epoch $(x,u)$ after which $u$ was free till this update in which epoch $(u,v)$ is created, then $\xi_{u}$ contains the epoch $(x,u)$.  The random level 1 epoch in an epoch-set is referred to as its representative.  In this case, $(u,v)$ is the representative of $\xi_{u}$. 
\end{definition}
\begin{lemma}
\label{lem:cardinality of epoch-set}	
The number of elements in an epoch-set is at most 63.
\end{lemma}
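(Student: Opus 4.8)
The plan is to split an epoch-set $\xi_u$ into three parts and bound the size of each. Recall from Definition~\ref{def:epoch-set} that $\xi_u$ is built around a random level 1 epoch $(u,v)$ created at $u$ by a procedure $P$ during some update, and consists of: (i) the representative $(u,v)$ itself; (ii) every deterministic level 1 epoch created inside procedures called (transitively) from $P$ during that same update, up to the moment the next random level 1 epoch of the update is created; and (iii) at most one type 2 inexpensive deterministic level 1 epoch $(x,u)$ from an earlier update, present only if $u$ stayed free from the termination of $(x,u)$ until the current update. Part (i) contributes exactly $1$, and for part (iii) I would invoke Lemma~\ref{lem:type2epochs}: for each random level 1 epoch at $u$ there is at most one type 2 inexpensive deterministic level 1 epoch whose matching edge contains $u$ and after which $u$ is re-matched for the first time precisely by $(u,v)$; hence part (iii) contributes at most $1$.

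The bulk of the argument is part (ii). First I would invoke Lemma~\ref{thm:number-of-calls-perupdate}: the entire update makes at most $30$ procedure calls, so in particular the subtree of calls rooted at $P$ makes at most $30$ calls. Next, scanning Table~\ref{table : Procedure_Calls_VS_Epocs} procedure by procedure, I would record that a single procedure call creates at most two level 1 epochs: only \textit{Fix-3-Aug-Path} and \textit{Fix-3-Aug-Path-D} create two; \textit{Random-Settle-Augmented}, \textit{Deterministic-Raise-Level-To-1} and \textit{Randomised-Raise-Level-To-1} create one; and \textit{Naive-Settle-Augmented}, \textit{Handle-Delete-Level1} and \textit{Handle-Insert-Level0} create none at level $1$. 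Combining, at most $2\cdot 30 = 60$ level 1 epochs are created during the whole update, so part (ii) already contains at most a constant number of deterministic level 1 epochs, and together with parts (i) and (iii) this gives an explicit constant. To pin it down to $63$, I would walk the call graph of Table~\ref{table : Procedure_Calls} starting from $P$ — which, by Definition~\ref{def:random-level1-epoch} and Table~\ref{table : Procedure_Calls_VS_Epocs}, is \textit{Random-Settle-Augmented}, possibly reached through \textit{Randomised-Raise-Level-To-1} — and count, level by level of the call tree, how many further \textit{Fix-3-Aug-Path}, \textit{Fix-3-Aug-Path-D} and \textit{Deterministic-Raise-Level-To-1} calls can still occur before a fresh \textit{Random-Settle-Augmented} call (which would start a new epoch-set and stop the collection), weighting each call by the number of deterministic level 1 epochs it produces; adding the representative and the possible previous-update type 2 epoch yields the stated bound.

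The step I expect to be the main obstacle is this last, purely combinatorial, bookkeeping: one has to check (a) that no deterministic level 1 epoch is attributed to two different epoch-sets, which holds because collection for $\xi_u$ stops exactly when the next random level 1 epoch appears; (b) that the previous-update type 2 epoch and the representative are each counted once, via Lemma~\ref{lem:type2epochs}; and (c) that the call-tree traversal really does exhaust all ways a deterministic level 1 epoch can be created after $(u,v)$, which relies on Observation~\ref{obs : flag-change-0-to-1}, since after the first \textit{Random-Settle-Augmented} the flag is $1$ and only the deterministic variants that produce the epochs of part (ii) are subsequently invoked. Everything apart from this counting follows directly from Lemmas~\ref{thm:number-of-calls-perupdate} and~\ref{lem:type2epochs} together with Tables~\ref{table : Procedure_Calls} and~\ref{table : Procedure_Calls_VS_Epocs}, so once the traversal is carried out carefully the constant $63$ falls out.
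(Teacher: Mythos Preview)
Your approach is essentially the paper's: bound the number of procedure calls in a single update via Lemma~\ref{thm:number-of-calls-perupdate}, observe that each procedure inserts at most two edges into the matching (hence creates at most two epochs), and add one for the preceding type~2 inexpensive deterministic epoch. The paper simply takes $31$ procedure calls (apparently counting the top-level update call in addition to the $30$ of Lemma~\ref{thm:number-of-calls-perupdate}), multiplies by $2$, and adds the single preceding epoch to reach $63$; it does not carry out the detailed call-tree traversal you propose at the end, and indeed that traversal is unnecessary, since your own coarse count already gives an explicit constant no larger than $63$ (you have $60$ from part~(ii), which already includes the representative, plus at most one from part~(iii)).
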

\begin{proof}
Let $(u,v)$ be the representative of an epoch-set $\xi_{u}$. By definition, $\xi_{u}$ consists of some expensive deterministic level 1 epochs that follow the creation of $(u,v)$  during the same update, and at most one another preceding epoch that involves $u$.  Further, by Lemma~\ref{lem: Computation-Time-Level0-epochs}, Lemma~\ref{lem: Computation-Time-Level1-epochs} and Table~\ref{table : Procedure_Calls_VS_Epocs} each epoch is created by a procedure call during the update.  From the description of the procedures in Section~\ref{subsec : Description_Procedures},
we know that each procedure introduces at most two edges into the matching, and thus creates at most two epochs.  
From Lemma~\ref{thm:number-of-calls-perupdate} we know that the number of procedure calls made on an insert or delete is at most 31.  Therefore, the number of epochs in an epoch-set is at most 63.   
\end{proof}
\begin{lemma}
\label{lem:total-computation-time-epoch-set}	
Total computation time associated with an epoch-set i.e. time required for creation and termination of all the epochs in the epoch-set is $O(n)$.
\end{lemma}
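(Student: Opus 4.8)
The plan is to reduce the claim to two facts already established: the worst-case cost of a single level 1 epoch and the constant bound on the size of an epoch-set. First I would observe that every epoch contained in an epoch-set $\xi_{u}$ is a level 1 epoch. Indeed, the representative $(u,v)$ is by Definition~\ref{def:epoch-set} a random level 1 epoch; the epochs added to $\xi_{u}$ during the same update step are, by construction, deterministic level 1 epochs; and the optional preceding epoch $(x,u)$ is a type 2 inexpensive deterministic level 1 epoch, hence also a level 1 epoch.

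Next I would invoke Lemma~\ref{lem: Computation-Time-Level1-epochs}: the computation time associated with the creation of a level 1 epoch is $O(n)$ and the computation time associated with its termination is $O(n)$. Consequently, the total computation time associated with any single epoch in $\xi_{u}$, i.e. the sum of its creation and termination contributions, is $O(n)$.

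Finally, by Lemma~\ref{lem:cardinality of epoch-set} an epoch-set contains at most $63$ epochs. Summing the $O(n)$ per-epoch bound over this constant number of epochs gives a total of $O(63 \cdot n) = O(n)$, which is exactly the statement to be proved.

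The one point that I expect to require care is verifying that this sum genuinely captures \emph{all} the computation that the later analysis needs to attribute to $\xi_{u}$, i.e. that no cost escapes the bookkeeping. This rests on the way each procedure's running time is distributed onto the creation and termination of the epochs it touches (Section~\ref{subsec: Procedures and associated epochs} and Table~\ref{table : Procedure_Calls_VS_Epocs}), together with Lemma~\ref{lem:mapping-deterministic-to-random-epochs} and Lemma~\ref{lem:type2epochs}, which guarantee that every expensive deterministic level 1 epoch, as well as every type 2 inexpensive deterministic level 1 epoch, is placed into the epoch-set of the random level 1 epoch with which it is associated. Once this is confirmed, the three-step summation above closes the argument.
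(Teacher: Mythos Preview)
Your proposal is correct and follows essentially the same approach as the paper: invoke Lemma~\ref{lem: Computation-Time-Level1-epochs} for the $O(n)$ per-epoch bound, invoke Lemma~\ref{lem:cardinality of epoch-set} for the constant size of an epoch-set, and multiply. The paper's own proof is actually terser than yours; your explicit check that every member of $\xi_{u}$ is a level 1 epoch and your closing paragraph about the bookkeeping are reasonable elaborations but not required for the argument.
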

\begin{proof}
From Lemma~\ref{lem: Computation-Time-Level1-epochs}, creation and termination time for level 1 epochs is $O(n)$.
From Lemma \ref{lem:cardinality of epoch-set},  it is clear that number of level 1 epochs associated with an epoch-set is at most 63. Therefore, total computation associated with an epoch-set which is the time for creation and termination of all the epochs in the epoch-set is $O(n)$.
\end{proof}
\subsection{Expected value of total update time}
\label{subsec : expected-total-update-time}
We complete our analysis here by placing an upper bound on the expectation of $T$, the total time taken by the algorithm to service a sequence of updates.  The analysis is presented in the following paragraphs and it leads to the proof of Theorem \ref{thm : expected-total-update-time}.  At the beginning of  Section \ref{epochanalysis}, we saw that $T$ is written as the sum of time taken to create and terminate the different epochs associated with each of the updates.  We now bound the contribution of the level 0 epochs, the type 1 inexpensive deterministic level 1 epochs, and
finally the contribution of the epoch-sets to the expected value of $T$.  

\noindent
From Lemma~\ref{lem: Computation-Time-Level0-epochs}, total computation associated with each of the level $0$ epochs is $O(\sqrt{n})$. Therefore, during any sequence of updates, if there are $t_1$ level $0$ epochs then computation associated with all the epochs is deterministically bounded by $O(t_1 \sqrt{n})$.  Next, we consider inexpensive deterministic level 1 epochs, that is those whose creation take $O(\sqrt{n})$ time.  Among these epochs, if there are $t_2$ of them which also are terminated in $O(\sqrt{n})$ time, then their contribution to $T$ is bounded by $O(t_2 \sqrt{n})$.  The amortized cost per operation in either of these two cases is $O(\sqrt{n})$.  
Following this,  Lemma~\ref{lem:inexpensive-level-1-epoch-termination} shows that for a type 1 inexpensive level 1 deterministic epoch $(u,v)$
there exists a set of $\Omega(\sqrt{n})$ many $O(1)$ time updates involving $u$ or $v$ which are associated with the epoch $(u,v)$ and with no other epoch.   Therefore, during a sequence of updates, if there are $t_3$ type 1 inexpensive deterministic level 1 epochs whose termination results in $O(n)$ time computation then there exist at least $t_3 \Omega(\sqrt{n})$ many updates each of which takes deterministic $O(1)$ time.  Let $t'_{3}$ the total number of such updates.  Therefore, the contribution to $T$ from these updates is $t_3 n + t'_3$.  Therefore, the amortized cost per operation is at most $\frac{t_3 n + t'_3}{t'_3}$. Since $t'_3 \geq t_3 \sqrt{n}$, it follows that the amortized cost of these operations is $O(\sqrt{n})$. 

The only remaining epochs to be accounted for are the expensive deterministic level 1 epochs and random level 1 epochs.  
From  Definition \ref{def:epoch-set} and Lemma \ref{lem:cardinality of epoch-set}, it follows that we have a partition of these remaining  epochs into epoch-sets of constant size.  From Lemma \ref{lem:total-computation-time-epoch-set}	we know that each epoch-set contributes an $O(n)$ term to the total update time.
  Every such epoch-set has a representative element which is a random level 1 epoch. We now bound the contribution by the expected  running time of these epoch-sets to  the expected value of $T$.  We do this by setting up a random variable similar to the one in the analysis by Baswana et al. \cite{DBLP:BGS}.   
  \subsubsection*{Expected contribution of the epoch-sets to the expected total update time}
Let $X_{v,k}$ be a random variable which is 1 if $v$ creates a random level 1 epoch at update step $k$, otherwise $X_{v,k}$ is set to $0$. 
Let $O_v^{init}$ denote the set of edges owned by $v$ at the time of creation of the random level 1 epoch at $v$.   Let $Z_{v,k}$ denote the number of edges deleted from $O_v^{init}$ before the deletion of the edge corresponding to the random level 1 epoch
from the graph (recall, that deletions from the graph happen only by updates).   Further, we define that if $X_{v,k}$ = $0$, then $Z_{v,k}=0$.
From the description of the algorithm in Section \ref{algorithm}, a random level 1 epoch is created only by \textit{Random-Settle-Augmented}.  The condition at the time of the creation of the epoch is  $|O_v^{init}| \geq \sqrt{n}$, and this is used crucially in the Lemma~\ref{lem : Expected_Deletion}.  Consequently, for any sequence of $t_4$ updates, $\displaystyle \sum_{v,k}^{}Z_{v,k} \leq t_4$. Hence  $\displaystyle \sum_{v,k}^{} \mathbb{E}(Z_{v,k}) \leq t_4$.
\begin{lemma}
\label{lem : Expected_Deletion}	
For each vertex $v$ and integer $k \geq 0$, 
	$\mathbb{E}(Z_{v,k}) \geq \frac{\sqrt{n}}{2}\cdot \mathbb{P}r(X_{v,k} = 1)$.
\end{lemma}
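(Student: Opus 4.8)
The plan is to condition on the event $X_{v,k}=1$ and show that, given this event, the expected number of edges of $O_v^{init}$ that are deleted from the graph before the edge $e^\star$ defining the random level $1$ epoch is deleted is at least $\sqrt{n}/2$; the stated inequality then follows by the definition of conditional expectation since $Z_{v,k}=0$ on the complementary event. So first I would fix a vertex $v$ and an update step $k$, assume $X_{v,k}=1$, and let $O_v^{init}=\{e_1,\dots,e_r\}$ with $r=|O_v^{init}|\ge\sqrt{n}$; by Definition~\ref{def:random-level1-epoch} and the description of \textit{Random-Settle-Augmented}, the edge $e^\star$ defining the epoch is chosen uniformly at random from $O_v^{init}$, so $e^\star=e_j$ with probability $1/r$ for each $j$.

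The crucial point is that the adversary is oblivious: the update sequence — in particular, the order in which the edges $e_1,\dots,e_r$ are eventually deleted from the graph — is fixed independently of the algorithm's internal coin tosses, and in particular independently of the uniform choice of $e^\star$ from $O_v^{init}$. Hence, conditioned on $X_{v,k}=1$, I can think of the adversary as fixing a permutation that determines the relative order in which $e_1,\dots,e_r$ are deleted, and then the algorithm picks a uniformly random one of them to be $e^\star$. If $e^\star$ is the $i$-th edge among $e_1,\dots,e_r$ to be deleted from the graph, then exactly $i-1$ edges of $O_v^{init}$ are deleted before $e^\star$, so $Z_{v,k}=i-1$ on this event. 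Since $e^\star$ is uniform over the $r$ positions, $\mathbb{E}(Z_{v,k}\mid X_{v,k}=1)=\frac{1}{r}\sum_{i=1}^{r}(i-1)=\frac{r-1}{2}\ge\frac{\sqrt{n}-1}{2}$. Being a little more careful — using $r\ge\sqrt{n}$ and absorbing the $-1$ — one gets $\mathbb{E}(Z_{v,k}\mid X_{v,k}=1)\ge\frac{\sqrt{n}}{2}$ for $n$ large enough (or one states the bound as $\frac{\sqrt n -1}{2}$ and notes it is $\Omega(\sqrt n)$, which is all that is used downstream); here I rely on the fact, emphasized just before the lemma statement, that the extended update sequence eventually deletes every edge, so each $e_j\in O_v^{init}$ is indeed deleted and the position $i$ of $e^\star$ is well defined.

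Finally I would assemble the pieces: $\mathbb{E}(Z_{v,k})=\mathbb{E}(Z_{v,k}\mid X_{v,k}=1)\cdot\mathbb{P}r(X_{v,k}=1)+\mathbb{E}(Z_{v,k}\mid X_{v,k}=0)\cdot\mathbb{P}r(X_{v,k}=0)$, and the second term vanishes by the convention $Z_{v,k}=0$ when $X_{v,k}=0$, giving $\mathbb{E}(Z_{v,k})\ge\frac{\sqrt{n}}{2}\,\mathbb{P}r(X_{v,k}=1)$. The main obstacle — and the step that needs the most care — is making the independence argument airtight: I must argue that the identity of the uniformly chosen edge $e^\star$ is independent of the (adversary-determined) deletion order of the edges currently in $O_v^{init}$, which in turn rests on the obliviousness of the adversary and on the fact that $O_v^{init}$ is determined by the algorithm's state just before the random choice, not by the choice itself. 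One subtlety to address is that an edge of $O_v^{init}$ could leave $O_v^{init}$ by an ownership transfer rather than by a graph deletion; but $Z_{v,k}$ counts deletions of edges of $O_v^{init}$ \emph{from the graph}, and each such edge, wherever its ownership resides, is eventually deleted from the graph exactly once, so the counting of positions $1,\dots,r$ still goes through. I expect the rest of the calculation to be entirely routine.
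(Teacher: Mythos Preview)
Your proposal is correct and follows essentially the same approach as the paper: condition on $X_{v,k}=1$, use the uniform choice of the random mate from $O_v^{init}$ together with the oblivious adversary and the extended update sequence to compute the conditional expectation, and then uncondition. You are in fact more careful than the paper on two points: you make the independence between the random choice and the adversary's deletion order explicit, and you compute $\mathbb{E}(Z_{v,k}\mid X_{v,k}=1)=(r-1)/2$ rather than the paper's $r/2$ (the paper is slightly loose here, but as you note only the $\Omega(\sqrt{n})$ order is used downstream).
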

\begin{proof}
Consider the expected value $\mathbb{E}(Z_{v,k} | X_{v,k} = 1)$ which is the expected value of $Z_{v,k}$ conditioned on the event that $v$ creates random level 1 epoch at update  $k$. Let $O_v^{init}$ be the number of edges owned by $v$ at the time of creation of the random level 1 epoch at by $v$. The value of $Z_{v,k}$ depends on the deletion sequence and the random choice of mate of vertex $v$.  We know that the mate of $v$ is distributed uniformly over $O_v^{init}$ and  that all the edges are deleted eventually in our extended update sequence.
Therefore, $\mathbb{E}(Z_{v,k} | X_{v,k} = 1) = \frac{ |O_v^{init}|}{2} \geq \frac{\sqrt{n}}{2}$.  Consequently, 
\begin{equation}
\label{eq : 1}
\mathbb{E}(Z_{v,k}) = \mathbb{E}(Z_{v,k} | X_{v,k} = 1)\cdot \mathbb{P}r(X_{v,k} = 1) \geq \frac{\sqrt{n}}{2} \cdot \mathbb{P}r(X_{v,k} = 1)
\end{equation}
\end{proof}
{\bf Bounding the Expected Cost:}
From Lemma~\ref{lem:total-computation-time-epoch-set}, the total computation associated with an epoch-set is $O(n)$.   From Lemma \ref{thm:number-of-calls-perupdate}, the number of procedure calls per update is at most a constant.   Therefore, the number of epoch-sets created per update is a constant. Consequently, the computation cost associated with each update  is $C \cdot n$ for some constant $C$. Hence expected value of the total update time  a sequence of $t_4$ updates which create random level 1 epochs is :
\begin{equation}
\label{eq:expected-update}
\begin{split}
\sum_{v,k}^{} C \cdot n \cdot \mathbb{P}r(X_{v,k} = 1) & = 2\cdot C \cdot \sqrt{n} \sum_{v,k}^{} \frac{\sqrt{n}}{2} \cdot \mathbb{P}r(X_{v,k} = 1) \\
& \leq 2 \cdot C \cdot\sqrt{n} \sum_{v,k}^{}  \mathbb{E}(Z_{v,k}) \;\; (\text{using Equation } \; \ref{eq : 1})\\
& \leq 2\cdot C\cdot\sqrt{n}\cdot t_4  \nonumber
\end{split}
\end{equation}
Consequently, for any sequence of $t_4$ updates which create a random level 1 epoch, the contribution to the expected total update time is $O(t_4 \sqrt{n})$. This gives us the following theorem which is the first main result of this paper.
\begin{theorem}
\label{thm : expected-total-update-time}	
Starting with a graph on $n$ vertices and no edges for any sequence of $t$ fully dynamic update operations, our data structure maintains a maximal matching after removing all augmenting paths of length at most 3 (consequently, a $3/2$ approximate maximum cardinality matching) at the end of each update in expected total update time $O(t\sqrt{n})$.	
\end{theorem}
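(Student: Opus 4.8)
The plan is to decompose the total update time $T$ into the time charged to the creation and termination of epochs, following the accounting set up earlier in Section~\ref{epochanalysis}. First I would recall that by Lemma~\ref{thm:number-of-calls-perupdate} each update makes at most a constant number of procedure calls, so the $O(1)$-time portion of every update (the trivial insert/delete cases and the constant per-call overhead) contributes $O(t)$ in total, which is absorbed into $O(t\sqrt{n})$. The remaining, genuinely expensive, work is attributed to epochs via Table~\ref{table : Procedure_Calls_VS_Epocs}. I would then partition the epochs into four groups and bound each group's contribution separately: (i) level~0 epochs; (ii) inexpensive deterministic level~1 epochs that are also terminated cheaply; (iii) type~1 inexpensive deterministic level~1 epochs; and (iv) everything else, namely the expensive deterministic level~1 epochs together with the random level~1 epochs.

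For group (i), Lemma~\ref{lem: Computation-Time-Level0-epochs} gives $O(\sqrt{n})$ worst-case cost for the creation and termination of each level~0 epoch, and since each update creates/terminates at most a constant number of them, the total is $O(t\sqrt{n})$ deterministically. For group (ii), each such epoch costs $O(\sqrt{n})$ for both creation and termination by construction, so again their total contribution is $O(t\sqrt{n})$. For group (iii), I would invoke Lemma~\ref{lem:inexpensive-level-1-epoch-termination}: each type~1 inexpensive deterministic level~1 epoch $(u,v)$ has an $O(n)$ cost but is associated with $\Omega(\sqrt{n})$ distinct $O(1)$-time updates that touch $u$ or $v$ and are charged to no other epoch; amortizing the $O(n)$ over these $\Omega(\sqrt{n})$ updates yields $O(\sqrt{n})$ per update, hence $O(t\sqrt{n})$ overall. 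The key point here is that these $\Omega(\sqrt{n})$ updates are disjoint across distinct type~1 epochs, which the lemma already guarantees.

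The heart of the argument — and the step I expect to be the main obstacle — is group (iv). By Definition~\ref{def:epoch-set} together with Lemmas~\ref{lem:type2epochs} and~\ref{lem:mapping-deterministic-to-random-epochs}, every expensive deterministic level~1 epoch lies in a unique epoch-set whose representative is a random level~1 epoch, and by Lemma~\ref{lem:cardinality of epoch-set} each epoch-set has at most $63$ epochs, so by Lemma~\ref{lem:total-computation-time-epoch-set} each epoch-set costs $O(n)$. Since at most a constant number of epoch-sets are created per update, the cost attributable to update step $k$ through the epoch-set mechanism is at most $C\cdot n$ for some constant $C$. Now I would introduce the indicator $X_{v,k}$ (that $v$ creates a random level~1 epoch at step $k$) and the counting variable $Z_{v,k}$ (number of edges of $O_v^{init}$ deleted from the graph before the epoch's edge is), and use Lemma~\ref{lem : Expected_Deletion}, which gives $\mathbb{E}(Z_{v,k}) \ge \tfrac{\sqrt{n}}{2}\,\mathbb{P}r(X_{v,k}=1)$ — this is where obliviousness of the adversary and the invariant $|O_v^{init}|\ge\sqrt{n}$ are essential. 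The expected contribution of the epoch-sets is then
\begin{equation}
\label{eq:final-bound}
\sum_{v,k} C\cdot n\cdot\mathbb{P}r(X_{v,k}=1) \;=\; 2C\sqrt{n}\sum_{v,k}\frac{\sqrt{n}}{2}\mathbb{P}r(X_{v,k}=1) \;\le\; 2C\sqrt{n}\sum_{v,k}\mathbb{E}(Z_{v,k}) \;\le\; 2C\sqrt{n}\cdot t,
\end{equation}
where the last inequality uses $\sum_{v,k}\mathbb{E}(Z_{v,k})\le t$, since each edge deletion from the graph increments at most one $Z_{v,k}$. Summing the contributions of groups (i)–(iv) gives expected total update time $O(t\sqrt{n})$. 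The delicate part of writing this up carefully is making sure the charging is a genuine partition — that no procedure-call cost is double-counted between an epoch-set and a type~1 epoch or a level~0 epoch — which is exactly what the careful bookkeeping in Section~\ref{subsubsec: Random-Deterministic-Level1-epochs} and Definition~\ref{def:epoch-set} is designed to ensure, and I would cite those explicitly rather than re-deriving them.
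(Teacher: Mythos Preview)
Your proposal is correct and follows essentially the same decomposition and charging scheme as the paper: level~0 epochs via Lemma~\ref{lem: Computation-Time-Level0-epochs}, cheap and type~1 inexpensive deterministic level~1 epochs via Lemma~\ref{lem:inexpensive-level-1-epoch-termination}, and epoch-sets via Lemmas~\ref{lem:type2epochs}--\ref{lem:total-computation-time-epoch-set} together with the $X_{v,k},Z_{v,k}$ computation and Lemma~\ref{lem : Expected_Deletion}. The one omission is that the paper runs the whole argument over the \emph{extended} update sequence of length $t'\le 2t$ (this is what makes Lemma~\ref{lem : Expected_Deletion} and case~5 of the type~1 definition go through), obtains $O(t'\sqrt{n})$, and then applies Observation~\ref{extendupdate} to get $O(t\sqrt{n})$; you plug $t$ directly into the last inequality of your display, so that final step should be made explicit.
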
 
\begin{proof}
Let $t'$ be the length of the extended update sequence which has been considered in the analysis preceding this theorem.  We know that $t' \leq 2t$.  In the discussion preceding the statement of the theorem, we have  proved that the expected total update time on the extended sequence is $O(t' \cdot \sqrt{n})$.  Therefore, from Observation \ref{extendupdate} we know that the expected total update time on the update sequence of $t$ updates is $O(t' \cdot \sqrt{n})$, which in turn is
$O(t \cdot \sqrt{n})$ since $t' \leq 2t$.
\end{proof}
\subsection{Worst case total update time with high probability}
Here too we analyze the worst case total update time for the extended update sequence of length $t' \leq 2t$ which guarantees that at the end all the edges are deleted.  We  classify the epoch-sets into two categories: good epoch-sets and  bad epoch-sets.
\label{subsec:worstcase-total-updatetime}
\begin{definition}
	\label{def: bad-epoch-set}	
	Bad epoch-set : An epoch-set $\xi_u$ is said to be bad if the representative epoch $(u,v)$ is terminated  before the deletion from the graph of the first $\frac{1}{3}$ edges that $u$ owned at the time of creation of the epoch $(u,v)$.
	An epoch-set is good if it is not bad.  Intuitively, if $\xi_u$ is a good epoch-set then the ownership list of $u$ undergoes many (at least $\frac{\sqrt{n}}{3}$) deletions before deletion of $(u,v)$.
\end{definition} 
\begin{lemma}
	\label{lem:epoch-set is bad}	
	Suppose vertex $v$ is the vertex that created the epoch corresponding to the  representative of the epoch-set $\xi_{v}$  during the $k^{th}$ update for $k \leq t'$. Then the probability that the epoch-set is a bad epoch-set is at most  $\frac{1}{3}$.  
\end{lemma}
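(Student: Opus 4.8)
The plan is to reduce the claim to a short counting argument that exploits the obliviousness of the adversary. By Definition~\ref{def:random-level1-epoch} and the description of \textit{Random-Settle-Augmented} in Section~\ref{subsubsec : Random-Settle-Augmented}, the representative epoch $(v,v')$ of $\xi_v$ was created by $v$ picking its mate $v'$ \emph{uniformly at random} from $O_v^{init}$, and the precondition of that procedure guarantees $d := |O_v^{init}| \ge \sqrt{n}$. Since the adversary is oblivious, the entire update sequence is fixed in advance, independently of the coin tosses of the algorithm; in particular, for each of the $d$ edges in $O_v^{init}$ the update at which it is deleted from the graph is fixed and independent of the choice of $v'$. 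I would therefore condition on the whole update sequence and on all of the algorithm's other random bits: conditioned on this, $v'$ is still uniform over $O_v^{init}$, and because every edge is eventually deleted in the extended update sequence there is a well-defined order $e_{(1)}, \dots, e_{(d)}$ in which the edges of $O_v^{init}$ leave the graph. Writing $R$ for the rank of $(v,v')$ in this order, $R$ is uniformly distributed over $\{1,\dots,d\}$.

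Next I would establish the structural fact that $\xi_v$ can be bad only if $R \le \lfloor d/3 \rfloor$. Throughout the lifetime of the epoch $(v,v')$ this edge is in the matching, hence it is still present in the graph; therefore no update can remove $(v,v')$ from the graph before the epoch terminates, and if the epoch terminates because $(v,v')$ is deleted then exactly $R-1$ of the other edges of $O_v^{init}$ have been removed from the graph up to that instant. Consequently, if $(v,v')$ is terminated before the first $\lfloor d/3\rfloor$ edges of $O_v^{init}$ have been deleted from the graph (the condition of Definition~\ref{def: bad-epoch-set}), then $R \le \lfloor d/3\rfloor$. Combining this with the previous paragraph,
\[
\mathbb{P}r[\xi_v \text{ is bad}] \;\le\; \mathbb{P}r\!\left[R \le \lfloor d/3\rfloor\right] \;=\; \frac{\lfloor d/3\rfloor}{d} \;\le\; \frac13 ,
\]
which is the assertion of the lemma.

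The delicate point — the one I expect to require the most care — is justifying, in the second paragraph, the implicit identification of the termination of a random level $1$ epoch with the deletion of its matched edge from the graph. By Lemma~\ref{lem: Computation-Time-Level1-epochs} and Table~\ref{table : Procedure_Calls_VS_Epocs}, the matched edge of a level $1$ epoch can also leave the matching without being deleted from the graph, namely when a \textit{Fix}-procedure re-routes a transiently formed $3$ length augmenting path through it. I would handle this case by showing that such a \textit{Fix} re-matches $v$ (possibly after a short cascade of the constant length guaranteed by Lemma~\ref{thm:number-of-calls-perupdate}) and creates a successor epoch on $v$, so that the count of deletions from $O_v^{init}$ that the accounting may charge against $\xi_v$ is still governed by $R$; and, in the residual situations where this does not preserve the charge, by folding the $O(n)$ cost of the prematurely terminated epoch-set into the constant-size epoch-set of the random level $1$ epoch that, by Lemma~\ref{lem:mapping-deterministic-to-random-epochs} and Definition~\ref{def:epoch-set}, is created in the same update as that \textit{Fix}. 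Verifying that this bookkeeping is consistent — that no epoch-set is charged twice and that every bad event is still witnessed by a uniformly distributed rank $R$ — is the part of the argument that needs to be carried out carefully; everything else is the routine uniform-rank computation displayed above.
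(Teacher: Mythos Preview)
Your first two paragraphs are exactly the paper's argument: fix the oblivious deletion order on $O_v^{init}$, observe the mate is uniform, and bound the probability that it lands among the first third. The paper states this in three sentences and does not elaborate further.

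The difference is your third paragraph. The paper does \emph{not} engage with the issue you raise. It simply asserts that ``whether the epoch-set $\xi_v$ is bad or good is fully determined by the mate of $v$ \ldots\ and the sequence $D$'' and that ``the epoch-set is bad if the mate of $v$ is among the endpoints of the first $\frac{|O_v^{init}|}{3}$ edges in $D$,'' then concludes. There is no discussion of termination via \textit{Fix} procedures or via the mate being stolen by another \textit{Random-Settle-Augmented}; the paper treats the representative epoch as if it survives until its edge is deleted from the graph, in the spirit of the original BGS analysis. So the ``delicate point'' you flag is real, but you will not find a resolution of it in the paper --- your proposal is already more careful than the published proof on this front. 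If you want to match the paper, you can drop the third paragraph entirely; if you want a rigorous proof, the bookkeeping you sketch (charging premature terminations to the random level~1 epoch created in the same update, via Lemma~\ref{lem:mapping-deterministic-to-random-epochs} and the constant bound of Lemma~\ref{thm:number-of-calls-perupdate}) is the right direction but would need to be carried out in full.
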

\begin{proof}
	Let $O_v^{init}$ denote the ownership list of $v$ when $v$ created the representative epoch of $\xi_v$.   Let us consider the sequence $D$ of edge deletion updates to the graph which delete the edges in  $O_v^{init}$.  Whether the epoch-set $\xi_{v}$ is bad or good is fully determined by the mate of $v$ picked at the time of creation of the representative epoch of the epoch-set $\xi_{v}$ and the sequence $D$. The epoch-set is bad if the mate of $v$ is among the endpoints of the first $\frac{O_v^{init}}{3}$ edges in $D$. We know that the mate of $v$ is equally likely to be any edge from the set $O_v^{init}$, since the representative epoch is selected at random from $O_v^{init}$. Therefore, the probability that the epoch-set $\xi_{v}$ is bad is at most $\frac{1}{3}$.
\end{proof}
\begin{lemma}
\label{lem: bad-vs-good-epoch-set}	
At the end of execution of our algorithm for a given sequence of updates, the probability that the number of bad epoch-sets exceeds the number of good epoch-sets by $i$ is at most $\frac{1}{2^i}$.	 Therefore, at the end of executionthe number of bad epoch-sets exceeds the number of good epoch-sets by $2 \log_2 n$ with probability at most $\frac{1}{n^2}$.
\end{lemma}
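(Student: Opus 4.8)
The plan is to read this as a Chernoff-style tail bound driven entirely by the per-epoch-set estimate of Lemma~\ref{lem:epoch-set is bad}, and to derive it from an exponential supermartingale whose base is exactly $2$. The base $2$ is not arbitrary: the constant $\frac13$ in the definition of a bad epoch-set is tuned precisely so that $2\cdot\frac13+\frac12\cdot\bigl(1-\frac13\bigr)=1$, which is what makes the exponential process a supermartingale and hence lets Markov's inequality return the stated $2^{-i}$. So the whole lemma reduces to establishing the right one-step conditional inequality and then invoking Markov.

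First I would list the epoch-sets in the order in which their representatives (the random level $1$ epochs) are created along the extended update sequence, say $\xi_1,\xi_2,\dots,\xi_N$; by Lemma~\ref{thm:number-of-calls-perupdate} each update triggers only a constant number of procedure calls, hence only a constant number of calls to \textit{Random-Settle-Augmented}, so $N$ is at most a deterministic multiple of $t'$. To index $j$ I attach $T_j:=+1$ if $\xi_j$ is bad, $T_j:=-1$ if $\xi_j$ is good, and $T_j:=0$ if fewer than $j$ epoch-sets are ever created; then $S:=\sum_{j\ge1}T_j$ is exactly (number of bad epoch-sets) minus (number of good epoch-sets), and the lemma asks for $\mathbb{P}r(S\ge i)\le 2^{-i}$. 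Let $\mathcal{F}_j$ be the information revealed by the first $j$ random mate-choices of \textit{Random-Settle-Augmented} over the whole run. Using the characterization in the proof of Lemma~\ref{lem:epoch-set is bad} --- namely that $\xi_j$ is bad iff the representative edge lands among the first third of the deletions of $O_v^{init}$ --- the value $T_j$ is a function of $\mathcal{F}_{j-1}$ (which fixes $O_v^{init}$ and, since the adversary is oblivious, the deletion order) together with the $j$-th uniform choice, so $T_j$ is $\mathcal{F}_j$-measurable. The heart of the argument is then the one-step inequality $\mathbb{E}\bigl(2^{T_j}\mid\mathcal{F}_{j-1}\bigr)\le 1$: on the event that fewer than $j$ epoch-sets exist it is immediate since $T_j=0$; otherwise the conditional form of Lemma~\ref{lem:epoch-set is bad} gives $\mathbb{P}r(T_j=1\mid\mathcal{F}_{j-1})\le\frac13$, whence $\mathbb{E}\bigl(2^{T_j}\mid\mathcal{F}_{j-1}\bigr)=2\,\mathbb{P}r(T_j=1\mid\mathcal{F}_{j-1})+\frac12\,\mathbb{P}r(T_j=-1\mid\mathcal{F}_{j-1})\le\frac12+\frac32\cdot\frac13=1$. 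Consequently $W_m:=\prod_{j=1}^{m}2^{T_j}$ is a nonnegative supermartingale with $\mathbb{E}(W_0)=1$, so $\mathbb{E}(W_m)\le 1$ for every $m$, in particular for $m$ equal to the deterministic upper bound on $N$. Markov's inequality then yields $\mathbb{P}r(S\ge i)=\mathbb{P}r(W_m\ge 2^{i})\le\mathbb{E}(W_m)/2^{i}\le 2^{-i}$, which is the first claim; specialising to $i=2\log_2 n$ gives $\mathbb{P}r(S\ge 2\log_2 n)\le 2^{-2\log_2 n}=n^{-2}$, the second claim.

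\textbf{Main obstacle.} The delicate step is the conditional version of Lemma~\ref{lem:epoch-set is bad}: one must pin down exactly which randomness lives in $\mathcal{F}_{j-1}$ and check that, conditioned on it, ``$\xi_j$ is bad'' still has probability at most $\frac13$. The subtlety is that in principle the termination of the representative epoch of $\xi_j$ can depend on random choices made after $\xi_j$ is created, so one has to be careful that $T_j$ genuinely factors through $\mathcal{F}_{j-1}$ plus the single fresh uniform mate-choice; this is precisely the content that the proof of Lemma~\ref{lem:epoch-set is bad} already supplies, since there the only randomness it exploits is the uniformly random position of the representative edge among the (oblivious, hence fixed) deletions of $O_v^{init}$, which remains uniform on $\{1,\dots,|O_v^{init}|\}$ conditionally on $\mathcal{F}_{j-1}$. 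A second, purely bookkeeping point is that $N$ itself is a random variable; this is handled by the padding $T_j=0$ for $j>N$ together with the deterministic bound on $N$ coming from Lemma~\ref{thm:number-of-calls-perupdate}, so that the supermartingale may be stopped at a fixed deterministic time.
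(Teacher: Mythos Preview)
Your argument is correct, and it takes a genuinely different route from the paper's own proof of this lemma.

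The paper does not set up an exponential supermartingale. Instead it argues directly via a ballot-type binomial estimate: it lets $\mathbb{A}$ be the event that at termination the bad epoch-sets exceed the good ones by $i$, observes that then at some step there are exactly $k$ good and $k+i$ bad epoch-sets (event $\mathbb{B}$), and bounds
\[
\mathbb{P}r(\mathbb{B}) \;=\; \binom{2k+i}{k}\,p^{k}q^{k+i}
\;=\; \binom{2k+i}{k}\,p^{k+i}q^{k}\cdot\Bigl(\tfrac{q}{p}\Bigr)^{i}
\;\le\; (p+q)^{2k+i}\cdot\Bigl(\tfrac{q}{p}\Bigr)^{i}
\;=\;\Bigl(\tfrac{q}{p}\Bigr)^{i}\;\le\;\tfrac{1}{2^{i}},
\]
using $q\le\tfrac13$, $p\ge\tfrac23$. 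So both proofs exploit the same numerical coincidence $q/p\le\tfrac12$ (equivalently your identity $2\cdot\tfrac13+\tfrac12\cdot\tfrac23=1$), but they package it differently: the paper treats the epoch-sets as Bernoulli trials and bounds a single binomial term, while you run a Chernoff-style supermartingale $W_m=2^{S_m}$ and apply Markov's inequality.

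What your approach buys is a cleaner handling of two points the paper leaves implicit: (i) the epoch-sets are not literally i.i.d.\ (the outcome for $\xi_j$ may depend on earlier random choices), which you address through the filtration $(\mathcal{F}_j)$ and the conditional form of Lemma~\ref{lem:epoch-set is bad}; and (ii) the total number $N$ of epoch-sets is itself random, which you dispose of by padding with $T_j=0$ and stopping at the deterministic bound on $N$ coming from Lemma~\ref{thm:number-of-calls-perupdate}. The paper's combinatorial computation, on the other hand, is shorter once one accepts that the per-epoch-set bound of Lemma~\ref{lem:epoch-set is bad} can be applied as if the trials were independent. Your ``main obstacle'' paragraph correctly isolates the one place where real care is needed --- that $T_j$ is determined by $\mathcal{F}_{j-1}$ together with the single fresh uniform choice --- and correctly points back to the proof of Lemma~\ref{lem:epoch-set is bad} for this.
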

\begin{proof}
Suppose at the end of execution of our algorithm, the number of bad epoch-sets is at least $r+i$ and the number of good epoch-sets is at most $r$ and let us call it as event $\mathbb{A}$. Let us consider the event when the number of bad epoch-sets is equal to $k+i$ and the number of good epoch-sets is $k$ for some positive integer $k$ at some step during the run of the algorithm.   Let $\mathbb{B}$ denote this event.  If event $\mathbb{A}$ happens, since initially there are zero  good epoch-sets and zero bad epoch-sets, and 
 each edge, say $(u,v)$, is owned by exactly one of its end points, exactly one of $u$ and $v$ can randomly select $(u,v)$ as the representative for the epoch-set created. Therefore, 
at the end of the update sequence each epoch-set which is created is either a good epoch-set or a bad epoch-set. Therefore,  it follows that there must be an update during which the
number of good epoch-sets is equal to $k$ and the number of bad epoch-sets is equal to $k+i$, for some $0 \leq k \leq r$.  Therefore,  $\mathbb{P}r(\mathbb{A}) \leq \mathbb{P}r(\mathbb{B})$.  
Let $p$ denote the probability that an epoch-set is good and $q$ denote the probability that an epoch-set is bad. Therefore, $q = 1 - p$. From Lemma~\ref{lem:epoch-set is bad}, $q \leq \frac{1}{3}$ and $p \geq \frac{2}{3}$. Therefore, probability of occurrence of event $\mathbb{A}$ is obtained by finding an upper bound on the probability of event $\mathbb{B}$ as follows :
\begin{equation}
\begin{split}
\mathbb{P}r(\mathbb{A})  \leq \mathbb{P}r(\mathbb{B}) 
& = \binom{2k+i}{k} p^k \cdot q^{k+i} = \binom{2k+i}{k}p^{k+i}\cdot q^{k}\cdot{\frac{q^i}{p^i}} \\
& \leq (p+q)^{2k+i}\cdot{\frac{q^i}{p^i}} = {\frac{q^i}{p^i}} \leq {\frac{1}{2^i}}
\end{split}
\end{equation}
Therefore, for $i = 2 \log_2 n$, $\mathbb{P}r(\mathbb{A}) \leq \frac{1}{n^2}$. Therefore, at the end of execution of our algorithm for any given sequence of updates, the number of bad epoch-sets exceeds the number of good epoch-sets by $2 \log_2 n$ with probability at most $\frac{1}{n^2}$.\\  
\end{proof}

\begin{lemma}
\label{lem: Number-of-good-epochsets}
Over any sequence of $t'$ updates number of good epoch-sets created is at most $\frac{3t'}{\sqrt{n}}$.
\end{lemma}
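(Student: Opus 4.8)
The plan is to charge to each good epoch-set a block of at least $\tfrac{\sqrt n}{3}$ distinct edge-deletion updates, arranged so that no update is charged by two good epoch-sets; since the extended sequence has only $t'$ updates, the number $g$ of good epoch-sets then satisfies $g\cdot\tfrac{\sqrt n}{3}\le t'$, i.e. $g\le\tfrac{3t'}{\sqrt n}$. First I would recall that a random level $1$ epoch is created only inside \textit{Random-Settle-Augmented}, where the representative edge $(v,w)$ is drawn uniformly from $O_v$ at an instant when $v$ is free, at level $0$, and $|O_v^{init}|\ge\sqrt n$ (this precondition also holds when \textit{Random-Settle-Augmented} is entered from \textit{Randomised-Raise-Level-To-1}).

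The quickest route then proceeds as follows. By the reformulation of Definition~\ref{def: bad-epoch-set} used in the proof of Lemma~\ref{lem:epoch-set is bad}, the epoch-set $\xi_v$ is good exactly when its representative mate is \emph{not} among the endpoints of the first $\tfrac13|O_v^{init}|$ edges of $O_v^{init}$ to leave the graph; hence at least $\tfrac13|O_v^{init}|\ge\tfrac{\sqrt n}{3}$ edges of $O_v^{init}$ are deleted from the graph before the representative edge is, so in the notation preceding Lemma~\ref{lem : Expected_Deletion} we have $Z_{v,k}\ge\tfrac{\sqrt n}{3}$ for every pair $(v,k)$ that produces a good epoch-set. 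Summing over only the good pairs and using the deterministic bound $\sum_{v,k}Z_{v,k}\le t'$ established just before Lemma~\ref{lem : Expected_Deletion} gives $g\cdot\tfrac{\sqrt n}{3}\le\sum_{v,k}Z_{v,k}\le t'$, which is the claim.

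If one wants the disjoint charging to be explicit rather than importing $\sum_{v,k}Z_{v,k}\le t'$, I would attach to each good $\xi_v$ the set $D_v$ of graph-deletion updates that remove an edge of $O_v^{init}$ before the representative epoch $(v,w)$ terminates; then $|D_v|\ge\tfrac13|O_v^{init}|\ge\tfrac{\sqrt n}{3}$, and every update of $D_v$ lies inside the lifetime interval $I_v$ of $(v,w)$, throughout which $v$ is at level $1$ (being continuously matched). Pairwise disjointness of the $D_v$ is the heart of the matter. If two good epoch-sets share a representative vertex, their representatives are matching edges at that vertex and hence have disjoint lifetimes (a shared update would be one both terminating one epoch and creating the other, but by then the deleted edge has already left the ownership lists and so cannot belong to the later $O_v^{init}$). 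If the representatives $v\ne v'$ differ, an update in $D_v\cap D_{v'}$ deletes a single edge lying in both $O_v^{init}$ and $O_{v'}^{init}$, so that edge must be $(v,v')$; taking the creation updates $k<k'$ of the two representatives without loss of generality, the shared update lies in $D_v\subseteq I_v$ and at or after $k'$, so $k'$ itself lies in $I_v$ and $v$ is at level $1$ at update $k'$, whereas at update $k'$ the vertex $v'$ is at level $0$ and owns $(v,v')$ since $(v,v')\in O_{v'}^{init}$ — contradicting the exclusivity of ownership in Lemma~\ref{lem: ownership_edge}. I expect this last case ($v\ne v'$) to be the main obstacle: it needs the levels of $v$ and $v'$ tracked precisely at the two creation instants and combined with Lemma~\ref{lem: ownership_edge} and the fact that a vertex lies in at most one matching edge at a time; the rounding hidden in ``the first $\tfrac13|O_v^{init}|$ edges'' and the exact reading of ``terminated before'' in Definition~\ref{def: bad-epoch-set} also need care, but they only affect constants.
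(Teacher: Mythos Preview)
Your proposal is correct and matches the paper's approach: both charge each good epoch-set with at least $\frac{\sqrt n}{3}$ graph deletions (from $|O_v^{init}|\ge\sqrt n$ and the definition of ``good'') and then invoke unique ownership (Lemma~\ref{lem: ownership_edge}) to ensure no deletion is double-counted. Your first route, which imports the already-established inequality $\sum_{v,k}Z_{v,k}\le t'$, is a clean shortcut to exactly the same conclusion; your second route simply unpacks that inequality, and the paper's own proof sits between the two in level of detail.
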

\begin{proof}
Let $\xi_v$ be a good epoch-set and let $(u,v)$ be its representative.  Let $O_v^{init}$ be the set of edges owned by $v$ when the epoch $(u,v)$ is created.  
By the definition of a good epoch-set, it follows that the epoch $(u,v)$ does not get terminated before at least $\frac{O_v^{init}}{3}$ of the edges in $O_v^{init}$ are deleted from the graph.  We know that $O_v^{init}$ has size at least $\sqrt{n}$.  Further, from Lemma~\ref{lem: ownership_edge} each edge is owned by exactly one of its end-points.  Therefore,  it follows that the number of updates is at least the number of good epoch-sets times $\frac{\sqrt{n}}{3}$.  Therefore, the number of good-epoch sets is at most $\frac{3t'}{\sqrt{n}}$.  Hence the lemma. 
%
%
\end{proof}
From Lemma~\ref{lem: bad-vs-good-epoch-set} and Lemma~\ref{lem: Number-of-good-epochsets}, over any sequence of $t'$ updates the number of bad epoch-sets created is at most $\frac{3t'}{\sqrt{n}} + 2\log_2 n$ with probability $ \geq 1 - \frac{1}{n^2}$. From Lemma~\ref{lem:total-computation-time-epoch-set}, total computation associated with an epoch-set is $C\cdot n$ for some constant $C$. Therefore, over any sequence of $t'$ updates, total computation time taken by our algorithm in the worst case with probability at least  $1 - \frac{1}{n^2}$ is :\\
$\frac{3t'}{\sqrt{n}} \cdot C\cdot n$ + $(\frac{3t'}{\sqrt{n}}+ 2 \log_2 n) \cdot C\cdot n$ = $2 \cdot C \cdot(3t' \cdot \sqrt{n} + n \log_2 n)$ \\\\
Therefore, for any sequence of $t' \geq {\sqrt{n}}\log_2 n$ updates, the amortized update time of our algorithm is $O(\sqrt{n})$ with probability at least $1 - \frac{1}{n^2}$.  We conclude with the following theorem  which is the second main result in this paper.  The proof follows on the same lines as the the proof of Theorem \ref{thm : expected-total-update-time} using Observation \ref{extendupdate}.
\begin{theorem}
\label{thm : worstcase-total-update-time}	
Starting with a graph on $n$ vertices and no edges and ending with a graph with no edges, our data structure maintains a maximal matching after removing all augmenting paths of length at most 3 (consequently, a $3/2$ approximate maximum cardinality matching) at the end of each update for any sequence of $t$ update operations in $O(t\sqrt{n} + n\log n)$ time with high probability.	
\end{theorem}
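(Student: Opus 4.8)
The plan is to follow the proof of Theorem~\ref{thm : expected-total-update-time} almost verbatim, replacing the expectation bound on the cost of the epoch-sets by the good/bad counting argument of Section~\ref{subsec:worstcase-total-updatetime}; the correctness half of the statement (that the maintained matching is maximal and free of $3$-length augmenting paths, hence a $3/2$-approximate MCM) is already Theorem~\ref{thm : clean_update}, so only the running-time bound remains. First I would pass to the extended update sequence of length $t' \le 2t$ that ends with the empty graph, and recall from the discussion opening Section~\ref{epochanalysis} that the total update time $T$ is the sum, over all updates, of the creation and termination times of the epochs created and terminated during that update — legitimate because by Lemma~\ref{thm:number-of-calls-perupdate} each update triggers at most a constant number of procedure calls and hence creates and terminates at most a constant number of epochs, while any update that does not touch the matching costs $O(1)$. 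I would then partition all epochs of the extended run into three classes: (i) level $0$ epochs; (ii) inexpensive deterministic level $1$ epochs whose termination also takes $O(\sqrt{n})$ time, together with type $1$ inexpensive deterministic level $1$ epochs; and (iii) the remaining level $1$ epochs, namely the \emph{expensive} deterministic level $1$ epochs (type $2$ ones and those whose creation takes $\Omega(\sqrt{n})$ time) and the random level $1$ epochs, which by Definition~\ref{def:epoch-set} together with Lemmas~\ref{lem:type2epochs} and~\ref{lem:mapping-deterministic-to-random-epochs} are grouped into epoch-sets.

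For class (i), Lemma~\ref{lem: Computation-Time-Level0-epochs} gives a deterministic $O(\sqrt{n})$ cost per level $0$ epoch and there are at most $O(t')$ of them, contributing $O(t'\sqrt{n})$. For class (ii), an inexpensive deterministic level $1$ epoch with $O(\sqrt{n})$ termination costs $O(\sqrt{n})$ and there are $O(t')$ of them; a type $1$ epoch costs $O(n)$ by Lemma~\ref{lem: Computation-Time-Level1-epochs}, but Lemma~\ref{lem:inexpensive-level-1-epoch-termination} exhibits $\Omega(\sqrt{n})$ distinct $O(1)$-time updates associated with that epoch and with no other epoch, so there are at most $3t'/\sqrt{n}$ such epochs, contributing $O(t'\sqrt{n})$ overall, plus the $O(t')$ spent on the cheap updates themselves. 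Thus classes (i) and (ii) together cost $O(t'\sqrt{n})$ \emph{deterministically}.

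The randomized part is class (iii). By Lemma~\ref{lem:total-computation-time-epoch-set} each epoch-set costs $C\cdot n$ for a fixed constant $C$. By Lemma~\ref{lem: Number-of-good-epochsets} at most $3t'/\sqrt{n}$ good epoch-sets are created, and by Lemma~\ref{lem: bad-vs-good-epoch-set}, with probability at least $1-1/n^2$ the number of bad epoch-sets exceeds the number of good ones by at most $2\log_2 n$; so with probability at least $1-1/n^2$ the total number of epoch-sets is at most $6t'/\sqrt{n}+2\log_2 n$, whence the cost of class (iii) is at most $(6t'/\sqrt{n}+2\log_2 n)\cdot C\cdot n = O(t'\sqrt{n}+n\log n)$. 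Adding the three classes, with probability at least $1-1/n^2$ the total update time on the extended sequence is $O(t'\sqrt{n}+n\log n)$; since the given sequence is a prefix of the extended one and the running time only grows as more updates are processed (this is precisely the monotonicity underlying Observation~\ref{extendupdate}), the same bound holds for the original $t$ updates, and $t'\le 2t$ turns it into $O(t\sqrt{n}+n\log n)$.

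The main obstacle is not any single estimate but making the partition airtight. One must check that every level $1$ epoch falls into exactly one of the classes above, with no epoch slipping through or being charged twice; that every expensive deterministic level $1$ epoch is captured by exactly one epoch-set — via the ``preceding random epoch in the same update step'' of Lemma~\ref{lem:mapping-deterministic-to-random-epochs} (which uses Observation~\ref{obs : flag-change-0-to-1}) and the ``first re-matching of $v$'' of Lemma~\ref{lem:type2epochs} (which uses that in the extended sequence every vertex eventually has degree below $\sqrt{n}$); and that the $\Omega(\sqrt{n})$ cheap updates charged to each type $1$ epoch and the $\Omega(\sqrt{n})$ graph deletions charged to each good epoch-set in Lemma~\ref{lem: Number-of-good-epochsets} are genuinely disjoint across epochs, so that the amortization and the union bound do not over-count — here Lemma~\ref{lem: ownership_edge}, that every edge is owned by exactly one endpoint, is what prevents a single deletion from being counted against two different epoch-sets.
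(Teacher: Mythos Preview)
Your proposal is correct and follows essentially the same route as the paper: the paper's own proof of Theorem~\ref{thm : worstcase-total-update-time} simply says it proceeds ``on the same lines as the proof of Theorem~\ref{thm : expected-total-update-time} using Observation~\ref{extendupdate}'', with the surrounding text in Section~\ref{subsec:worstcase-total-updatetime} supplying exactly the good/bad epoch-set count (Lemmas~\ref{lem: bad-vs-good-epoch-set} and~\ref{lem: Number-of-good-epochsets}) and the $C\cdot n$ per epoch-set (Lemma~\ref{lem:total-computation-time-epoch-set}) that you invoke. Your write-up is in fact more explicit than the paper's about the deterministic $O(t'\sqrt{n})$ contribution of the level~$0$ and type~$1$ epochs and about why the charging arguments do not double-count, but the structure and the lemmas used are identical.
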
 

{\bf Acknowledgement:} We thank Sumedh Tirodkar for pointing out an error in our earlier draft.
\bibliography{references.bib}
\section{Appendix}
\label{sec : Appendix}
\begin{algorithm}[h]
	\textit{Removed-flag}$\leftarrow 0$;\\
	$y \leftarrow$ mate$(v)$;\\
	\uIf{$u \in F(y)$}{
		Delete $u$ from $F(y)$;\\
		\textit{Removed-flag}$\leftarrow 1$;\\
	}
	\uIf{\textbf{has-free}(y)}{
		$z \leftarrow$ \textbf{get-free}$(y)$;\\
		\uIf{\textit{Removed-flag} == $1$}{
			Add $u$ to $F(y)$;\\
		}
		return $z$;
	}\Else{
	\uIf{\textit{Removed-flag} == $1$}{
		Add $u$ to $F(y)$;\\
	}
	return NULL;
}			 
\caption{Check-3-Aug-Path($u,v$)}
\label{alg:Check-3-Aug-Path}
\end{algorithm}
\begin{algorithm}[h]
	\For{every $(u,w) \in O_u$ and Level$(w)$ == 1}{
		Move $(u,w)$ from $O_{u}$ to $O_w$;
	} 
	\caption{Transfer-Ownership-From($u$)}
	\label{alg:Transfer-Ownership-From}
\end{algorithm}
\begin{algorithm}[h]
	\ForEach{$w \in N(u)$}{
		\uIf{Level$(w)$ == 0}{
			\uIf{$(v,w) \in O_w$}{
				Remove $(u,w)$ from $O_w$;\\
				Add $(u,w)$ to $O_{u}$;\\
			} 
		}	 	
	} 
	\caption{Transfer-Ownership-To($u$)}
	\label{alg:Transfer-Ownership-To}
\end{algorithm}
\begin{algorithm}[h]
	\ForEach{$w \in N(u)$}{
		\uIf{$(u,w) \in O_w$}{
			Remove $(u,w)$ from $O_w$;\\
			Add $(u,w)$ to $O_u$;\\
		}	 	
	} 
	\caption{Take-Ownership($u$)}
	\label{alg:Take-Ownership}
\end{algorithm}
\begin{algorithm}[h]
	\ForEach{$w \in N(u)$}{
		Insert $u$ to $F(w)$;	 	
	} 
	\caption{Insert-To-F-List($u$)}
	\label{alg:Insert-To-F-List}
\end{algorithm}
\begin{algorithm}[h]
	\ForEach{$w \in N(u)$}{
		Delete $u$ from $F(w)$;	 	
	} 
	\caption{Delete-From-F-List($u$)}
	\label{alg:Delete-From-F-List}
\end{algorithm}

\begin{algorithm}[h]
\uIf{\textbf{has-free}(u)}{
		$w \leftarrow$ \textbf{get-free}(u);\\
		$M \leftarrow M \cup \{(u,w)\}$;\\
		\uIf{$deg(u) \geq \sqrt{n}$}{
			\uIf{flag = 1}{
				\textbf{Deterministic-Raise-Level-To-1}$(u)$;\\
				\textbf{Delete-From-F-List}$(u)$; \textbf{Delete-From-F-List}$(w)$; \\
			}\Else{
			\textbf{Randomised-Raise-Level-To-1}$(u)$;\\
			
		}	
	}\Else{
	\uIf{$deg(w) \geq \sqrt{n}$}{
		\uIf{flag = 1}{
			\textbf{Deterministic-Raise-Level-To-1}$(w)$;\\
			\textbf{Delete-From-F-List}$(u)$; \textbf{Delete-From-F-List}$(w)$; \\
		}\Else{
		\textbf{Randomised-Raise-Level-To-1}$(w)$;\\
		\uIf{$u$ is free}{
			\textbf{Naive-Settle-Augmented}$(u,1)$;\\
		}
		
	}
}\Else{
		\textbf{Delete-From-F-List}$(u)$; \textbf{Delete-From-F-List}$(w)$; \\
		
}

}

}\Else{
\For{every $x \in $ N(u)}{
	$z \leftarrow$ \textbf{Check-3-Aug-Path}$(u,x)$;\\
	\uIf{$z \neq NULL$}{
		\uIf{flag == 0}{
			\textbf{Fix-3-Aug-Path}$(u,x,mate(x),z)$;\\
			\textbf{break};
		}\Else{
			\textbf{Fix-3-Aug-Path-D}$(u,x,mate(x),z)$;\\
			\textbf{break};
		}
    }	
}
\uIf{$u$ is free}{
	\textbf{Insert-To-F-List}$(u)$;
}
}
\caption{Naive-Settle-Augmented($u$,flag) : $u$ is free, level of $u$ is $0$ and $flag$ is $0$ or $1$. Line 3 fixes Invariant~\ref{invariant_1b} for $u$.}
\label{alg:Naive-Settle-Augmented}
\end{algorithm}
\begin{algorithm}[h]
	Select an edge say $(u,y)$ uniformly at random from $O_u$;\\
	\textbf{Transfer-Ownership-To}$(y)$;\\ 
	\uIf{$y$ is matched}{
		$x \leftarrow$ mate(y);\\
		$M \leftarrow M \setminus \{(x,y)\}$; 
	}\Else{
	$x \leftarrow$ NULL;
}
$M \leftarrow M \cup \{(u,y)\}$;\\
Level$(u) \leftarrow 1$;\\
Level$(y) \leftarrow 1$;\\
\textbf{Delete-From-F-List}$(u)$; \textbf{Delete-From-F-List}$(y)$; \\

\uIf{\textbf{has-free}(u)}{
	$w \leftarrow$ \textbf{get-free}$(u)$;\\
	$z \leftarrow$\textbf{Check-3-Aug-Path}$(w,u)$;\\
	\uIf{$z \neq NULL$}{
		\textbf{Fix-3-Aug-Path-D}$(w,u,y,z)$;\\
	}	
}
return $x$;
\caption{Random-Settle-Augmented($u$) : $u$ is free, level of $u$ is $0$ and $|O_u| \geq \sqrt{n}$. Line 10 fixes Invariant~\ref{invariant 2} for $u$.}

\label{alg:Random-Settle-Augmented}
\end{algorithm}
\begin{algorithm}[h]
	$v \leftarrow$ mate$(u)$;\\
	\textbf{Take-Ownership}$(u)$;\\	
	\textbf{Transfer-Ownership-To}$(v)$;\\
	Level$(u) \leftarrow 1$;\\
	Level$(v) \leftarrow 1$;\\ 
	\caption{Deterministic-Raise-Level-To-1($u$) : $u$ is matched, level of $u$ is $0$ and $deg(u)\geq \sqrt{n}$. Line 4 fixes Invariant~\ref{invariant 3} for $u$.}
	\label{alg:Deterministic-Raise-Level-To-1}
\end{algorithm}
\begin{algorithm}[h]
	
	$v \leftarrow$ mate$(u)$;\\
	$M \leftarrow M \setminus \{u,v\}$;\\
	
	\textbf{Take-Ownership}$(u)$;\\
	$x \leftarrow $ \textbf{Random-Settle-Augmented}$(u)$;\\
	\uIf{$x \neq$ NULL}{
		\uIf{Level$(x)$ = 1}{
			\textbf{Handle-Delete-Level1}$(x,1)$;
		}\Else{
		\textbf{Naive-Settle-Augmented}$(x,1)$; 
		
	}
	
}
\uIf{$v$ is free}{
	\textbf{Naive-Settle-Augmented}$(v,1)$;
}

\caption{Randomised-Raise-Level-To-1($u$) : $u$ is matched, level of $u$ is $0$ and $deg(u)\geq \sqrt{n}$. Line 2 fixes Invariant~\ref{invariant 3} for $u$.}
\label{alg:Randomised-Raise-Level-To-1}
\end{algorithm}
\begin{algorithm}[]
\For{$p \in \{u,z\}$}{
		\textbf{Transfer-Ownership-To}$(p)$;\\
		\textbf{Delete-From-F-List}$(p)$;
}
\uIf{Level$(v)$ == 0}{
	\For{$p \in \{v,y\}$}{
		\For{$w \in N(p)$}{
					\textbf{Transfer-Ownership-To}$(p)$;\\
		}	
	}
	$Level(v) \leftarrow 1$;\\
	$Level(y) \leftarrow 1$;\\
			
}
	$M \leftarrow M \setminus \{(v,y)\}$;
	$M \leftarrow M \cup \{(u,v)\}$;
	$M \leftarrow M \cup \{(y,z)\}$;\\	
	$Level(u) \leftarrow 1$;\\
	$Level(z) \leftarrow 1$;\\
\caption{Fix-3-Aug-Path-D$(u,v,y,z)$ : $u$ is free, level of $u$ is $0$, $v$ is matched, $y$ is mate of $v$ and $z \in F(y)$ ($z \neq u$). Since path $u-v-y-z$ exists, Invariant~\ref{invariant 5} is violated. Line 13 fixes Invariant~\ref{invariant 5}. }
\label{alg:Fix-3-Aug-Path-D}	
\end{algorithm}
\begin{algorithm}[!h]
\uIf{Level$(v)$ == 1}{
			$M \leftarrow M \setminus \{(v,y)\}$;
			$M \leftarrow M \cup \{(u,v)\}$;
			$M \leftarrow M \cup \{(y,z)\}$;\\
			
			\uIf{$deg(u) \geq \sqrt{n}$}{
				\textbf{Randomised-Raise-Level-To-1}$(u)$;\\		
				\uIf{$(u,v)$ is not in matching}{
					\textbf{Handle-Delete-Level1}$(v,1)$;\\
				}
				\textbf{Transfer-Ownership-To}$(z)$;\\
				\textbf{Delete-From-F-List}$(z)$;\\
				Level$(z) \leftarrow 1$;\\
			}
			\Else{
				\uIf{$deg(z) \geq \sqrt{n}$}{
				  	\textbf{Randomised-Raise-Level-To-1}$(z)$;\\
				  	\uIf{$(y,z)$ is not in matching}{
					  	\textbf{Handle-Delete-Level1}$(y,1)$;\\		
				  	}
				  	\textbf{Transfer-Ownership-To}$(u)$;\\
				  	\textbf{Delete-From-F-List}$(u)$;\\
				  	Level$(u) \leftarrow 1$;\\
				}\Else{
					\For{$p \in \{u,z\}$}{
						
						\textbf{Transfer-Ownership-To}$(p)$;\\
						\textbf{Delete-From-F-List}$(p)$;\\
					}					
				    Level$(u) \leftarrow 1$;
					Level$(z) \leftarrow 1$;\\	
			}
		 }					
		}\Else{
		$M \leftarrow M \setminus \{(v,y)\}$;
		$M \leftarrow M \cup \{(u,v)\}$;
		$M \leftarrow M \cup \{(y,z)\}$;\\
		\uIf{$deg(u) \geq \sqrt{n}$}{
			\textbf{Randomised-Raise-Level-To-1}$(u)$;\\
			\uIf{v is free}{
				\textbf{Naive-Settle-Augmented}$(v,1)$;
			}
		}
		\uIf{${deg(z) \geq \sqrt{n}}$}{
			\textbf{Randomised-Raise-Level-To-1}$(z)$;\\
			\uIf{y is free}{
				\textbf{Naive-Settle-Augmented}$(y,1)$;
				}
		}
		\uIf{${deg(u) < \sqrt{n}}$}{
			\textbf{Delete-From-F-List}$(u)$;\\			
		}
		\uIf{${deg(z) < \sqrt{n}}$}{
			\textbf{Delete-From-F-List}$(z)$;\\			
		}
}		
	
\caption{Fix-3-Aug-Path($u,v,y,z$) : $u$ is free, level of $u$ is $0$, $v$ is matched, $y$ is mate of $v$ and $z \in F(y)$ ($z \neq u$). Since path $u-v-y-z$ exists, Invariant~\ref{invariant 5} is violated. Line 2 and Line 27  fixes Invariant~\ref{invariant 5}. }
\label{alg:Fix-3-Aug-Path}

\end{algorithm}
\begin{algorithm}[]
	\textbf{Transfer-Ownership-From}$(u)$\\
	Level$(u) \leftarrow 0$;\\	
	\uIf{$|O_u| \geq \sqrt{n}$}{
		$x \leftarrow$ \textbf{Random-Settle-Augmented$(u)$};\\
		\uIf{$x \neq$ NULL}{
			\textbf{Naive-Settle-Augmented}$(x,1)$;
		}
	}\Else{
	\textbf{Naive-Settle-Augmented}$(u,flag)$;\\		
}	
\caption{Handle-Delete-Level1($u$,$flag$) : $u$ is free, level of $u$ is $1$ and $flag$ is $0$ or $1$. Line 2 fixes Invariant~\ref{invariant_1a}a for $u$.}
\label{alg:Handle-Delete-Level1}
\end{algorithm}
\begin{algorithm}[]
	\uIf{$|O_u| \geq |O_v|$}{	
		Add $(u,v)$ to $O_u$;\\
	}\Else{	
		Add $(u,v)$ to $O_v$;\\
	}	
	\uIf{both $u$ and $v$ are free}{
		$M \leftarrow M \cup \{(u,v)\}$;
		Flag-uv-matched $\leftarrow 1$;\\
	}
	\uIf{$|O_v| > |O_u|$}{
		Swap $u$ and $v$ for remaining processing;
	}
	\uIf{$|O_u|$ == $\sqrt{n}$}{
		\textbf{Transfer-Ownership-To}$(u)$;\\
		\uIf{$u$ is matched}{
			$M \leftarrow M \setminus \{(u,mate(u))\}$;
		}
		$x \leftarrow$ \textbf{Random-Settle-Augmented}($u$);\\
		\uIf{$x \neq$ NULL}{
			\textbf{Naive-Settle-Augmented}($x,1$);\\
		}
		\uIf{Flag-uv-matched == 1}{
			\uIf{mate$(u) \neq v$ AND $v$ is free}{
				\textbf{Naive-Settle-Augmented}($v,1$);\\
			}
		}\Else{
			\uIf{$v$ is matched}{
				\uIf{$deg(v) \geq \sqrt{n}$ AND Level$(v)$ == $0$}{
					\textbf{Deterministic-Raise-Level-To-1}$(v)$;	
				}
			}	
		}
	}\Else{
	\uIf{$v$ is matched}{
		\uIf{$deg(v) \geq \sqrt{n}$}{
			\textbf{Randomised-Raise-Level-To-1}$(v)$;\\
			\uIf{$u$ is matched AND $deg(u) \geq \sqrt{n}$ AND Level$(u)$ == $0$}{
					\textbf{Deterministic-Raise-Level-To-1}$(u)$;	
				
			}		
			
		}\Else{
		\uIf{$u$ is free}{
			$z \leftarrow$ \textbf{Check-3-Aug-Path}$(u,v)$;\\
			\uIf{$z \neq NULL$}{
				\textbf{Fix-3-Aug-Path}($u,v,mate(v),z$);
			}
		}\Else{
		\uIf{$deg(u) \geq \sqrt{n}$}{
			\textbf{Randomised-Raise-Level-To-1}$(u)$;	
		}
		
	}
	
}	
}\Else{
\uIf{$u$ is matched}{
	\uIf{$deg(u) \geq \sqrt{n}$}{
		\textbf{Randomised-Raise-Level-To-1}$(u)$;	
		
	}\Else{
	\uIf{$v$ is free}{
		$z \leftarrow$ \textbf{Check-3-Aug-Path}$(v,u)$;\\
		\uIf{$z \neq NULL$}{
			\textbf{Fix-3-Aug-Path}($v,u,mate(u),z$);
		}
	}
}

}
}
\uIf{Flag-uv-matched == 1 AND $deg(u) < \sqrt{n}$ AND $deg(v) < \sqrt{n}$ }{
		\textbf{Delete-From-F-List}$(u)$; \textbf{Delete-From-F-List}$(v)$;\\	
		
}

}
\caption{Handle-Insert-Level0($u,v$)}
\label{alg:Insertion-Level0}
\end{algorithm}

\begin{algorithm}[h]
	Insert $v$ to $N(u)$ and $u$ to $N(v)$;\\
	\uIf{Level$(u)$==1 and Level$(v)$==1}
	{	
		Include $(u,v)$ suitably into $O_u$ or $O_v$;\\  
	}\ElseIf{Level$(u)$==1 and Level$(v)$==0}
	{
		Add $(u,v)$ to $O_u$ ;\\
		\uIf{$v$ is free}{
			$z \leftarrow$ \textbf{Check-3-Aug-Path}$(v,u)$;\\
			\uIf{$z \neq NULL$}{
				\textbf{Fix-3-Aug-Path}($v,u,mate(u),z$);
			}	
		}\Else{
		\uIf{$deg(v) \geq \sqrt{n}$}{
			\textbf{Randomised-Raise-Level-To-1}$(v)$
		}
	}
	
}\ElseIf{Level(u)==0 and Level(v)==1}
{
	Add $(u,v)$ to $O_v$ ;\\
	\uIf{$u$ is free}{
		$z \leftarrow$ \textbf{Check-3-Aug-Path}$(u,v)$;\\
		\uIf{$z \neq NULL$}{
			\textbf{Fix-3-Aug-Path}($u,v,mate(v),z$);
		}		
	}\Else{
	\uIf{$deg(u) \geq \sqrt{n}$}{
		\textbf{Randomised-Raise-Level-To-1}$(u)$
	}
}

}\Else{
Handle-Insertion-Level0($u,v$);
}
\caption{Insert($u,v$)}
\label{alg:Insert}	
\end{algorithm}
\begin{algorithm}[h]
	Adjust $O_u$ and $O_v$ suitably;\\	
	Delete $v$ from $N(u)$ and $u$ from $N(v)$;\\ 
	\uIf{$(u,v)$ is unmatched}{
		return;
	}\Else{
	$M \leftarrow M \setminus \{(u,v)\}$;\\
	\uIf{Level$(u,v)$ == 0}{
		\textbf{Naive-Settle-Augmented}$(u,0)$;\\
		\textbf{Naive-Settle-Augmented}$(v,0)$;\\
	}\Else{
	\textbf{Handle-Delete-Level1}$(u,0)$;\\
	\textbf{Handle-Delete-Level1}$(v,0)$;
}
}
\caption{Delete($u,v$)}
\label{alg:Deletion}
\end{algorithm}



\vspace{.5cm} 
\end{document}